\declaretheorem[name=Theorem]{thm}
\declaretheorem[name=Proposition]{prop}
\declaretheorem[name=Lemma]{lemma}
\DeclareMathOperator*{\inprob}{\stackrel{P_0}{\longrightarrow}}
\DeclareMathOperator*{\indist}{\stackrel{d}{\longrightarrow}}
\newcommand{\bounded}{O_{P_0}}
\newcommand{\fasterthan}{o_{P_0}}
\newcommand{\fasterthandet}{o}
\newcommand{\boundeddet}{O}
\newcommand\independent{\protect\mathpalette{\protect\independenT}{\perp}}
\def\independenT#1#2{\mathrel{\rlap{$#1#2$}\mkern2mu{#1#2}}}
\newcommand{\s}[1]{\mathscr{#1}}
\renewcommand{\d}[1]{\mathbb{#1}}
\newcommand{\n}[1]{\mathrm{#1}}
\title{\vspace{-2em}Nonparametric tests of the causal null \\with non-discrete exposures}
\author{Ted Westling\footnote{We are grateful for helpful feedback on this project from Dylan Small, Marco Carone, the University of Pennsylvania Causal Inference Working Group, and three anonymous referees.}\\
Department of Mathematics and Statistics\\
University of Massachusetts Amherst\\
twestling@math.umass.edu}
\date{}
\begin{document}

\maketitle

\begin{abstract}
In many scientific studies, it is of interest to determine whether an exposure has a causal effect on an outcome. In observational studies, this is a challenging task due to the presence of confounding variables that affect both the exposure and the outcome. Many methods have been developed to test for the presence of a causal effect when all such confounding variables are observed and when the exposure of interest is discrete. In this article, we propose a class of nonparametric tests of the null hypothesis that there is no average causal effect of an arbitrary univariate exposure on an outcome in the presence of observed confounding. Our tests apply to discrete, continuous, and mixed discrete-continuous exposures. We demonstrate that our proposed tests are doubly-robust consistent, that they have correct asymptotic type I error if both nuisance parameters involved in the problem are estimated at fast enough rates, and that they have power to detect local alternatives approaching the null at the rate $n^{-1/2}$. We study the performance of our tests in numerical studies, and use them to test for the presence of a causal effect of BMI on immune response in early-phase vaccine trials.
\end{abstract}

\doublespacing

\section{Introduction}

\subsection{Motivation and literature review}

One of the central goals of many scientific studies is to determine whether an exposure of interest has a causal effect on an outcome. In some cases, researchers are able to randomly assign units to exposure values. Classical statistical methods for assessing the association between two random variables can then be used to determine whether there is a causal effect because randomization ensures that there are no common causes of the exposure and the outcome. However, random assignment of units to exposures is not feasible in some settings, and even when it is feasible, preliminary evidence is often needed to justify a randomized study. In either case, it is often of interest to use data from an observational study, in which the exposure is not assigned by the researcher but instead varies according to some unknown mechanism, to assess the evidence of a causal effect. This is a more difficult task due to potential confounding of the exposure-outcome relationship.

Here, we are interested in assessing whether a non-randomized exposure that occurs at a single time point has a causal effect on a subsequent outcome when it can be assumed that there are no unobserved confounders. Many methods have been proposed to test the null hypothesis that there is no causal effect in this setting when the exposure is discrete. For instance, matching estimators \citep{rubin1973matching}, inverse probability weighted (IPW) estimators \citep{horvitz1952sampling}, and doubly-robust estimators including augmented IPW \citep{scharfstein1999adjusting, bang2005doubly} and targeted minimum loss-based estimators (TMLE) \citep{vanderlaan2011tmle} can all be used for this purpose. 

Much less work exists in the context of non-discrete exposures---that is, exposures that may take an uncountably infinite number of values. In practice, researchers often discretize such an exposure in order to return to the discrete exposure setting. This simple approach has several drawbacks. First, since the results often vary with the choice of discretization, it may be tempting for researchers to choose a discretization based on the results. However, this can inflate type I error rate. Second, tests based on a discretized exposure typically have less power than tests based on the original exposure because discretizing discards possibly relevant information (see, e.g.\ \citealp{cox1957grouping, cohen1983cost, fedorov2009consequence}). Finally, causal estimates based on a discretized exposure have a more complicated interpretation than those based on the original exposure \citep{young2019representative}.

In the context of causal inference with a continuous exposure, one common estimand is the \emph{causal dose-response curve}, which is defined for each value of the exposure as the average outcome were all units assigned to that exposure value. We say there is no average causal effect if the dose-response curve is flat; i.e.\ the average outcome does not depend on the assigned value of the exposure. One approach to estimating the dose-response curve is to assume the regression of the outcome on the exposure and confounders follows a linear model. If the model is correctly specified, the coefficient corresponding to the exposure is the slope of the dose-response curve, and the null hypothesis that the dose-response curve is flat can be assessed by testing whether this coefficient is zero. This approach can be generalized to other regression models, which can be marginalized using the G-formula to obtain the dose-response function \citep{robins1986,robins2000msm, zhang2016quantitative}. However, if the regression model is mis-specified, then the resulting estimator of the causal dose-response curve is typically inconsistent, and the resulting test will not necessarily be calibrated or consistent. Inverse probability weighting may also be used to estimate the dose-response curve \citep{imai2004gps, hirano2005gps}, but mis-specification of the propensity score model may again lead to unreliable inference.

Nonparametric methods make fewer assumptions about the data-generating mechanism than methods based on parametric models, and are therefore often more robust. In the context of nonparametric estimation of a causal dose-response curve, \cite{neugebauer2007jspi} considered  inference for the projection onto a parametric working model; \cite{rubin2006msm} and \cite{diaz2011super} discussed the use of data-adaptive algorithms; \cite{kennedy2016continuous} and \cite{van2018non} proposed estimators based on kernel smoothing; and \cite{westling2020isotonic} proposed an isotonic estimator. However, none of these works considered tests of the null hypothesis that the dose-response curve is flat. We also note that tests of Granger causality \citep{granger1980testing} have been developed in the context of time series (e.g.\ \citealp{granger1995causality, boudjellaba1992arma,nishiyama2011testing}). This is distinct from our goal, which is to test whether an exposure at a single time point has a causal effect on the average of a subsequent outcome.

\subsection{Contribution and organization of the article}

In this article, we focus on the problem of testing the null hypothesis that there is no average causal effect of a possibly non-discrete exposure on an outcome against the complementary alternative. To the best of our knowledge, no nonparametric test has yet been developed for this purpose. Specifically, we (1) propose a test based on a cross-fitted nonparametric estimator of an integral of the causal dose-response curve; (2) provide conditions under which our test has desirable large-sample properties, including (i) consistency under any alternative as long as either of two nuisance functions involved in the problem is estimated consistently (known as \emph{doubly-robust} consistency), (ii) asymptotically correct type I error rate, and (iii) non-zero power under local alternatives approaching the null at the rate $n^{-1/2}$; and (3) illustrate the practical performance of the tests through numerical studies and an analysis of the causal effect of BMI on immune response in early-phase vaccine trials.

Notably, the conditions we establish for consistency and validity of our test do not restrict the form of the marginal distribution of the exposure. Therefore, our test applies equally to discrete, continuous, and mixed discrete-continuous exposures. In the second set of numerical studies, we demonstrate that even in the context of discrete exposures, existing tests do not control type I error when the number of discrete components is moderate or large relative to sample size, while the tests proposed here are valid in all such circumstances.

The remainder of the article is organized as follows. In Section~\ref{sec:methods}, we define our proposed procedure. In Section~\ref{sec:asymptotic}, we discuss the large-sample properties of our procedure. In Section~\ref{sec:simulation}, we illustrate the behavior of our method using numerical studies. In Section~\ref{sec:bmi}, we use our procedure to analyze the causal effect of BMI on immune response. Section~\ref{sec:discussion} presents a brief discussion. Proofs of all theorems are provided in Supplementary Material. An \texttt{R} \citep{Rlang} package implementing all the methods developed in this paper is available at \texttt{https://github.com/tedwestling/ctsCausal}.

\section{Proposed methodology}\label{sec:methods}

\subsection{Notation and null hypothesis of interest}

We denote by $A \in \s{A}$ the real-valued exposure of interest, whose support we denote by $\s{A}_0 \subseteq \d{R}$. Adopting the Neyman-Rubin potential outcomes framework, for each $a \in \s{A}_0$, we denote by $Y(a) \in \s{Y} \subseteq\d{R}$ a unit's potential outcome under an intervention setting exposure to $A =a$. The causal parameter $m(a) := E\left[Y(a)\right]$ represents the average outcome under assignment of the entire population to exposure level $A=a$. The resulting curve $m:\s{A}\rightarrow \d{R}$ is known as the \emph{causal dose-response curve}. 

We are interested in testing the null hypothesis that $m(a) = \gamma_0$ for all $a \in \s{A}_0$ and some $\gamma_0 \in \d{R}$; that is, the dose-response curve is flat. This null hypothesis holds if and only if the average value of the potential outcome is unaffected by the value of the exposure to which units are assigned. However, $Y(a)$ is not typically observed for all units in the population, but instead the outcome $Y := Y(A)$ corresponding to the exposure value actually received is observed. Thus, $m$ is not a mapping of the joint distribution of the pair $(A, Y)$, so the null hypothesis that $m$ is flat cannot be tested using this data. The first step in developing a testing procedure is to translate the causal problem into a problem that is testable with the observed data, which is called \emph{identification} in the causal inference literature.

We first assume that (i) each unit's potential outcomes are independent of all other units' exposures 
and (ii) the observed outcome $Y$ almost surely equals $Y(A)$.  
If (i)--(ii) hold and in addition (iii) $Y(a) \independent A$ for all $a \in \s{A}_0$, then $m(a) = E[Y \mid A = a]$ for all $a \in \s{A}_0$. In this case, $m$ is identified with a univariate regression function, so the null hypothesis that $m$ is flat on $\s{A}_0$ can be tested using existing work from the nonparametric regression literature (see, e.g.\ \citealp{eubank1990testing, andrews1997kolmogorov, horowitz2001adaptive}, among many others). However, assumption (iii) typically only holds in experiments in which $A$ is randomly assigned. In observational studies, there are typically \emph{confounding variables} that impact both $A$ and $Y(a)$, thus invalidating (iii). In these settings, tests based on nonparametric regression may not have valid type I error rate, even asymptotically.

We suppose that a collection $W \in \s{W} \subseteq \d{R}^p$ of possible confounders is recorded.  If (i)--(ii) hold and in addition (iv) $Y(a) \independent A \mid W$ for all $a \in \s{A}_0$, known as \emph{no unmeasured confounding}, and (v) all $a \in \s{A}_0$ are in the support of the conditional distribution of $A$ given $W = w$ for almost every $w$, known as \emph{positivity}, then $m(a)= \theta_0(a) := E[ E(Y \mid A=a, W)]$, which is known as the G-computed regression function \citep{robins1986, gill2001}. Therefore, under (i)--(ii) and (iv)--(v), $m$ is flat on $\s{A}_0$ if and only if $\theta_0$ is flat on $\s{A}_0$. 

Here, we assume that we observe independent and identically distributed random vectors $(Y_1, A_1, W_1), \dotsc, (Y_n, A_n,W_n)$ from a distribution $P_0$ contained in the nonparametric model $\s{M}_{NP}$ consisting of all distributions on $\s{Y} \times \s{A} \times \s{W}$.  For a distribution $P \in \s{M}_{NP}$, we denote by $F_P$ the marginal distribution of $A$ under $P$, $\s{A}_P$ the support of $F_P$, $\mu_P(a,w) := E_P[Y \mid A =a, W=w]$ the outcome regression function, and $Q_P$ the marginal distribution of $W$. Throughout, we use the subscript $0$ to refer to evaluation at or under $P_0$; for example, we denote by $F_0$ the marginal distribution function of $A$ under $P_0$ and by $\s{A}_0$ the support of $F_0$.  Denoting by $C_b(S)$ the class of continuous and bounded functions on a subset $S$ of $\d{R}$, we assume that $P_0$ is contained in the statistical model $\s{M} := \{ P \in \s{M}_{NP}: \theta_P \in C_b(\s{A}_P)\}$.

In this article, we focus on testing the null hypothesis $H_0 : P_0 \in \s{M}_0 \subset \s{M}$ for $\s{M}_0 := \left\{ P \in \s{M} : \theta_P(a) = \theta_P(a') \text{ for all } a, a' \in \s{A}_P \right\}$ versus the complementary alternative  $H_A : P_0 \in \s{M}_A :=  \s{M} \backslash \s{M}_0$. 
This null hypothesis holds if and only if $\theta_0(a) = \gamma_0$ for all $a \in \s{A}_0$, where $\gamma_0 :=  \int \theta_0(a) \,F_0(da) = \iint \mu_0(a, w) \, Q_0(dw) \,F_0(da)$. As noted above, conditions (i)--(ii) and (iv)--(v) imply that $H_0$ holds if and only if the exposure has no causal effect on the average outcome in the sense that setting $A$ to $a$ for all units in the population yields the same average outcome for all $a \in \s{A}_0$. On the other hand, $H_A$ holds if and only if at least two exposures in $\s{A}_0$ yield different average outcomes. Our null hypothesis is stated in terms of the possibly unknown $\s{A}_0$ because condition (v) does not hold for $a \notin \s{A}_0$, and in fact $m(a)$ is not identified in the observed data for such $a$ without further assumptions.

A special case of our null hypothesis is that $\mu_0(a, w) = \mu_0(a', w)$ for all $a, a' \in \s{A}_0$ and almost all $w$. In this case, $\gamma_0 = E_0[Y]$. Under conditions (i)--(ii) and (iv)--(v), this happens if and only if $E_0[Y(a) \mid W = w] =E_0[Y(a') \mid W = w]$ for almost all $w$---i.e. there is no effect of the exposure on the average potential outcome for any strata of $W$ in the population. We shall refer to this case as the \emph{strong} null hypothesis. We emphasize that our null hypothesis $H_0$ can hold even if the strong null does not, since interactions between the exposure and covariates may average out to yield a flat G-computed regression curve. We note that \cite{luedtke2019omnibus} recently proposed a general procedure for testing null hypotheses of the form $R_0(O) \stackrel{d}{=} S_0(O)$, where the generic observation $O$ follows distribution $P_0$, and the functions $R_0$ and $S_0$ may depend on $P_0$. \cite{luedtke2019omnibus} demonstrated that their procedure can be used to consistently test the strong null hypothesis stated above, albeit with type I error rate tending to zero.  Our null hypothesis may also be stated in their general form with $R_0(Y,A,W) := \theta_0(A)$ and $S_0(Y, A, W) := \gamma_0$. However, their results do not apply to our null hypothesis because their Condition~3 does not hold for $R_0 = \theta_0$.

For a measure $\lambda$, we define $\| h \|_{\lambda, p} := \left[ \int | h(x) |^p \, d\lambda(x) \right]^{1/p}$ for $p \in [1,\infty)$, and $\| h \|_{\lambda, \infty} := \sup_{x \in \n{supp}(\lambda)} |h(x)|$. For a probability measure $P$ and $P$-integrable function $h$, we define $Ph := \int h \, dP$. We define $\d{P}_n$ as the empirical distribution function of the observed data.

\subsection{Testing in terms of the primitive function}

Our procedure will be based on estimating a primitive parameter (i.e.\ an integral) of $\theta_0$.  A useful analogy to consider is testing that a density function $h$ is flat on $[0,1]$. A density function, like $\theta_0$, is a challenging parameter to estimate in a nonparametric model, and the fastest attainable rate is slower than $n^{-1/2}$. However, $h$ is flat if and only if the corresponding cumulative distribution function $H$, which is the primitive of $h$, is the identity function on $[0,1]$, which is further equivalent to $H(x) - x = 0$ for all $x \in [0,1]$. In our setting, we define the primitive function as  $\Gamma_0(a) := \int_{-\infty}^a \theta_0(u) \, dF_0(u)$. We integrate against the marginal distribution $F_0$ of $A$ because $\theta_0$ is only estimable on the support of $F_0$. We then note that $\theta_0(a) = \gamma_0$  for all $a \in \s{A}_0$ if and only if $\Gamma_0(a) = \int_{-\infty}^a \gamma_0 \, dF_0(u) = \gamma_0F(a)$. Therefore, our null hypothesis is equivalent to $\Omega_0(a) := \Gamma_0(a) - \gamma_0 F_0(a)$ equals $0$ for all $a \in \s{A}_0$. This is stated formally in the following result.
\begin{prop}\label{prop:equivalence}
If $\theta_0$ is continuous on $\s{A}_0$, then the following are equivalent: (1) $P_0 \in \s{M}_0$; (2) $\theta_0(a) = \gamma_0$ for all $a \in \s{A}_0$; (3) $\Omega_0(a) = 0$ for all $a \in \d{R}$; (4) $\|\Omega_0\|_{F_0, p} = 0$ for all $p \geq 1$.
\end{prop}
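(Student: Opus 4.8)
The plan is to establish $(1)\Leftrightarrow(2)$, $(2)\Rightarrow(3)$, $(3)\Rightarrow(4)$ and $(4)\Rightarrow(2)$, the last by first recovering $(3)$ from $(4)$. Throughout one uses that $F_0$ is a probability measure with support $\s{A}_0$ and that $\theta_0$ is bounded, so that $\gamma_0$, $\Gamma_0$ and $\Omega_0$ are all well defined and finite.

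The equivalence $(1)\Leftrightarrow(2)$ is essentially definitional: $P_0\in\s{M}_0$ says exactly that $\theta_0$ is constant on $\s{A}_0$, say $\theta_0\equiv c$ there, and then $\gamma_0=\int\theta_0\,dF_0=c\,F_0(\s{A}_0)=c$, so the constant must equal $\gamma_0$; the converse is immediate. For $(2)\Rightarrow(3)$: if $\theta_0\equiv\gamma_0$ on $\s{A}_0$ then, since $F_0$ is concentrated on $\s{A}_0$, $\Gamma_0(a)=\int_{(-\infty,a]\cap\s{A}_0}\theta_0\,dF_0=\gamma_0\,F_0(a)$ for every $a\in\d{R}$, i.e.\ $\Omega_0\equiv 0$. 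And $(3)\Rightarrow(4)$ is trivial, since $\Omega_0\equiv 0$ forces every $F_0$-norm of $\Omega_0$ to vanish.

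The substantive content is $(4)\Rightarrow(2)$. Taking $p=1$ in $(4)$ yields $\int|\Omega_0|\,dF_0=0$, hence $\Omega_0=0$ off an $F_0$-null set; put $Z:=\{a\in\d{R}:\Omega_0(a)=0\}$, so $F_0(Z^c)=0$. Writing $\Omega_0(a)=\nu((-\infty,a])$ for the finite signed measure $\nu(B):=\int_B(\theta_0-\gamma_0)\,dF_0$, one checks that $\Omega_0$ is right-continuous on $\d{R}$ with left jump $\Omega_0(a)-\Omega_0(a^-)=(\theta_0(a)-\gamma_0)F_0(\{a\})$; hence $\Omega_0$ is continuous at every point that is not an atom of $F_0$, while every atom of $F_0$ carries positive mass and so belongs to $Z$. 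Since $F_0(Z^c)=0$ and $F_0$ has support $\s{A}_0$, every neighbourhood of a point of $\s{A}_0$ meets $Z\cap\s{A}_0$, so $Z\cap\s{A}_0$ is dense in $\s{A}_0$; approximating an arbitrary $a\in\s{A}_0$ by a sequence in $Z\cap\s{A}_0$ and invoking continuity of $\Omega_0$ at non-atoms — and membership in $Z$ at atoms — gives $\Omega_0\equiv 0$ on $\s{A}_0$. This extends to all of $\d{R}$: if $(-\infty,a]\cap\s{A}_0=\emptyset$ then $F_0(a)=\Gamma_0(a)=0$, while otherwise $b:=\sup\{u\in\s{A}_0:u\le a\}$ lies in $\s{A}_0$ and $(b,a]\cap\s{A}_0=\emptyset$, so $F_0(a)=F_0(b)$ and $\Gamma_0(a)=\Gamma_0(b)$, whence $\Omega_0(a)=\Omega_0(b)=0$. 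This is $(3)$.

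Finally I would pass from $(3)$ to $(2)$ by differentiating: for all $a'<a$, $0=\Omega_0(a)-\Omega_0(a')=\int_{(a',a]}(\theta_0-\gamma_0)\,dF_0$, and since the half-open intervals $(a',a]$ form a $\pi$-system generating $\s{B}(\d{R})$, the signed measure $\nu$ above vanishes identically; evaluating $\nu$ on $\{\theta_0>\gamma_0\}$ and on $\{\theta_0<\gamma_0\}$ then forces $\theta_0=\gamma_0$ $F_0$-almost everywhere. As in the previous step, $\{\theta_0=\gamma_0\}\cap\s{A}_0$ is dense in $\s{A}_0$, and by the assumed continuity of $\theta_0$ on $\s{A}_0$ this upgrades to $\theta_0(a)=\gamma_0$ for all $a\in\s{A}_0$, which is $(2)$. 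I expect the one genuine obstacle to be the careful bookkeeping in the two ``$F_0$-a.e.\ $\Rightarrow$ everywhere on $\s{A}_0$'' upgrades, which requires attention precisely because the paper allows $\s{A}_0$ to be an arbitrary closed set — discrete, continuous, or mixed — and $\Omega_0$ need not be continuous at atoms of $F_0$.
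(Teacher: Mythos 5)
Your proof is correct, and it follows the same implication cycle as the paper ((1)$\Leftrightarrow$(2), (2)$\Leftrightarrow$(3), (3)$\Leftrightarrow$(4)), but the two substantive steps are argued differently. For (4)$\Rightarrow$(3), the paper proceeds by contradiction, while you argue directly: $\Omega_0=0$ $F_0$-a.e., atoms of $F_0$ necessarily lie in the zero set, $\Omega_0$ is continuous at non-atoms, and density of the zero set in $\s{A}_0$ plus constancy of $\Omega_0$ off the support finishes the job; the ingredients (right-continuity of $\Omega_0$, special role of atoms, positivity of $F_0$ on neighborhoods of support points, closedness of $\s{A}_0$) are the same, just packaged without contradiction. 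The real divergence is (3)$\Rightarrow$(2): the paper gives a short local argument---if $\theta_0(a)-\gamma_0=\delta>0$, continuity yields $\theta_0-\gamma_0\geq\delta/2$ on a neighborhood, and $0=\Omega_0(a+\varepsilon)-\Omega_0(a-\varepsilon)\geq(\delta/2)[F_0(a+\varepsilon)-F_0(a-\varepsilon)]>0$ because $a$ is in the support---whereas you identify $\Omega_0(a)=\nu((-\infty,a])$ for the signed measure $\nu(B)=\int_B(\theta_0-\gamma_0)\,dF_0$, use vanishing on the $\pi$-system of half-open intervals and uniqueness of (finite) measures to get $\nu\equiv 0$, deduce $\theta_0=\gamma_0$ $F_0$-a.e., and then upgrade to all of $\s{A}_0$ by density and continuity of $\theta_0$. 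Your route invokes heavier machinery (Jordan decomposition plus the $\pi$-$\lambda$ uniqueness theorem) where the paper's local contradiction is more elementary and shorter, but it makes the ``a.e.\ implies everywhere on the support'' mechanism fully explicit and handles atoms and gaps of $\s{A}_0$ in a uniform way, which is exactly the bookkeeping you correctly flag as the delicate point for general mixed discrete--continuous exposures.
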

\begin{figure}[t!]
\centering
\includegraphics[width=5in]{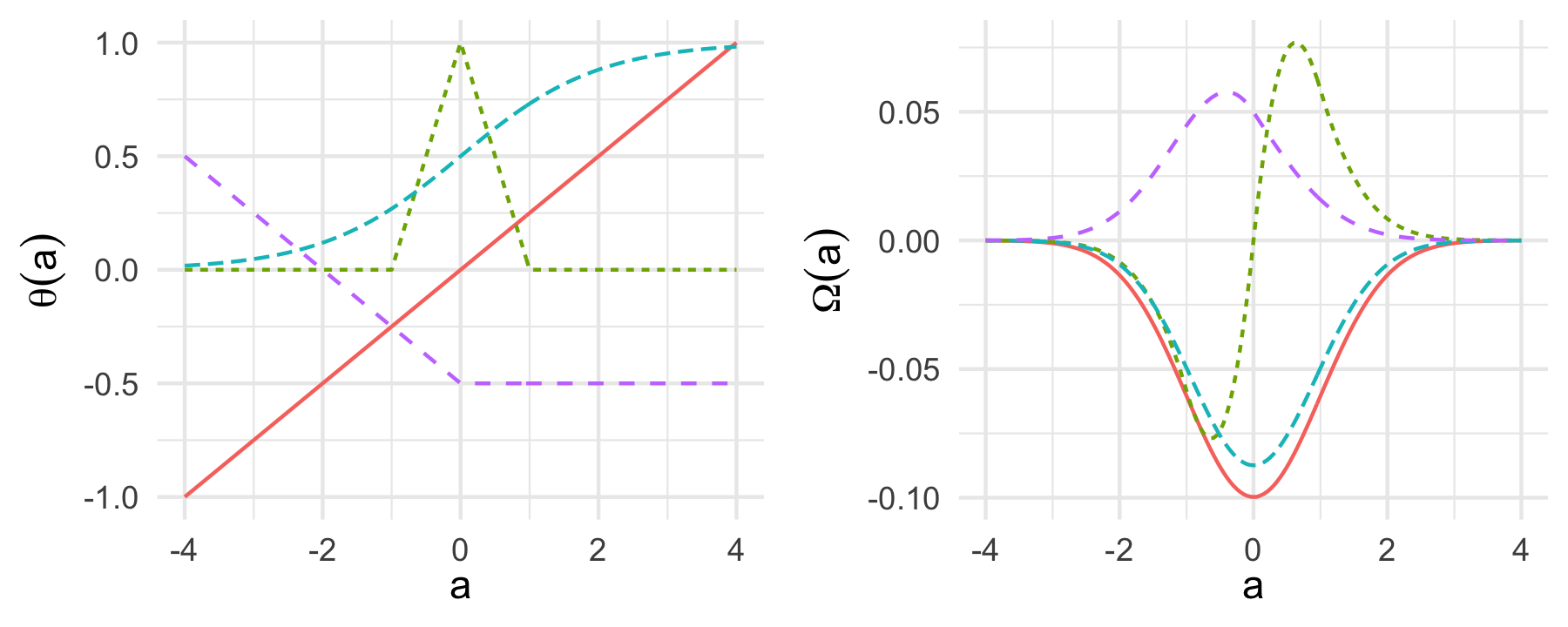}
\caption{Four hypothetical dose-response curves $\theta(a)$ (left) and their corresponding primitives $\Omega(a)$ (right). The marginal distribution of the exposure is the standard normal.}
\label{fig:theta_omega}
\end{figure}
For illustrative purposes, Figure~\ref{fig:theta_omega} displays four hypothetical $\theta_0$'s and their corresponding $\Omega_0$'s. We note that $\Omega_0(a) = 0$ for all $a \in \d{R}$ if and only if $\Omega_0(a) = 0$ for all $a \in \s{A}_0$. This, combined with Proposition~\ref{prop:equivalence}, indicates that in the model $\s{M}$, testing $H_0$ is equivalent to testing the null hypothesis $\|\Omega_0\|_{F_0, p} = 0$ versus the alternative $\|\Omega_0\|_{F_0, p} > 0$. In the case of testing uniformity of a density, this is analogous to testing that $\|H - \n{Id}\|_{[0,1],p} = 0$ for $\n{Id}(x) = x$ the identity map, which is exactly what the Kolmogorov-Smirnov  (in which $p = \infty$) and Anderson-Darling   (in which $p = 2$) tests do. Furthermore, representing the problem as a test that $\|H - \n{Id}\|_{[0,1],p} = 0$ is useful because $H$ is staightforward to estimate at the $n^{-1/2}$ rate, unlike a density function. In our case, $\Omega_0$ is pathwise differentiable in the nonparametric model with an estimable influence function under standard conditions, unlike $\theta_0$.  Specifically, defining $g_0(a, w) := G_0(da, w) / F_0(da)$, where  $G_0(a , w) := P_0(A \leq a \mid W = w)$ is the conditional distribution of $A$ given $W =w$ evaluated at $a$, we have the following result. 
\begin{prop}\label{prop:eif}
For each $a_0 \in \d{R}$, $\Omega_0(a_0)$ is pathwise differentiable relative to the subset of $\s{M}$ in which $g_P$ is almost surely bounded away from zero and $E_P\left[Y^2 \right] < \infty$, and its efficient influence function is 
\begin{align*}
D_{a_0,0}^*(y, a, w) &:= \left[I_{(-\infty, a_0]}(a) - F_0(a_0)\right] \left[\frac{ y - \mu_0(a,w)}{g_0(a, w)} +  \theta_{0}(a) - \gamma_0\right] \\
&\qquad +\int \left[I_{(-\infty, a_0]}(u) - F_0(a_0)\right]   \mu_0(u, w) \, F_0(du)  - 2\Omega_0(a_0)\ .
\end{align*}
\end{prop}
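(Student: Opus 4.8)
The plan is to establish pathwise differentiability directly and simultaneously read off the gradient: I would differentiate $\epsilon \mapsto \Omega_{P_\epsilon}(a_0)$ along an arbitrary regular one-dimensional submodel $\{P_\epsilon\}$ passing through $P_0$ at $\epsilon = 0$ with score $s \in L^2_0(P_0)$ (which may be taken bounded), show the derivative has the form $E_0[D(Y,A,W)\,s(Y,A,W)]$ for an explicit square-integrable $D$, and then observe that $D = D^*_{a_0,0}$ is automatically the \emph{efficient} gradient because the relevant model is locally nonparametric at $P_0$, so that its tangent space is all of $L^2_0(P_0)$ and gradients are unique.

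The first step is to rewrite the parameter so the differentiation is transparent. Using Fubini's theorem together with the definitions of $\Gamma_P$, $\theta_P$, and $\gamma_P$, one obtains
\begin{align*}
\Omega_P(a_0) \;=\; \iint \left[I_{(-\infty,a_0]}(a) - F_P(a_0)\right]\mu_P(a,w)\,Q_P(dw)\,F_P(da),
\end{align*}
which displays $\Omega_P(a_0)$ as a functional of only the marginal law $F_P$ of $A$, the marginal law $Q_P$ of $W$, and the outcome regression $\mu_P$. Differentiating this at $\epsilon = 0$ by the product rule yields four contributions---from perturbing $\mu_{P_\epsilon}$, $Q_{P_\epsilon}$, and the two occurrences of $F_{P_\epsilon}$---and for each I would use the standard submodel derivative formulas $\frac{d}{d\epsilon}E_{P_\epsilon}[h(A)]\big|_{0} = E_0[h(A)s]$, $\frac{d}{d\epsilon}E_{P_\epsilon}[h(W)]\big|_{0} = E_0[h(W)s]$, and $\frac{d}{d\epsilon}\mu_{P_\epsilon}(a,w)\big|_{0} = E_0[\{Y - \mu_0(a,w)\}s \mid A = a, W = w]$.

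The one non-mechanical step is the $\mu$-perturbation term, which is an integral against the \emph{product} measure $Q_0 \otimes F_0$ rather than against the joint law of $(A,W)$ under $P_0$. Writing this joint law as $g_0(a,w)\,F_0(da)\,Q_0(dw)$, so that $Q_0(dw)\,F_0(da)$ equals $g_0(a,w)^{-1}$ times it---legitimate precisely because $g_0$ is well defined and bounded away from zero---and then applying the tower property produces the contribution $[I_{(-\infty,a_0]}(a) - F_0(a_0)]\{y - \mu_0(a,w)\}/g_0(a,w)$. The $Q$-perturbation contributes $\int [I_{(-\infty,a_0]}(u) - F_0(a_0)]\mu_0(u,w)\,F_0(du)$, while the two $F$-perturbations combine to give $[I_{(-\infty,a_0]}(a) - F_0(a_0)]\{\theta_0(a) - \gamma_0\}$ plus a term constant in $(y,a,w)$ that pairs to zero against the mean-zero score $s$ and may be discarded. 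Hence $\frac{d}{d\epsilon}\Omega_{P_\epsilon}(a_0)\big|_{0} = E_0[R(Y,A,W)\,s(Y,A,W)]$ with $R$ the sum of these three pieces; subtracting $E_0[R]$---which a short computation shows equals $2\Omega_0(a_0)$---gives exactly $D^*_{a_0,0}$, and this recentering does not change the pairing with $s$ because $E_0[s] = 0$.

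To finish, I would check $D^*_{a_0,0} \in L^2_0(P_0)$: the mean-zero property is the recentering identity just noted, and square-integrability follows from $E_0[Y^2] < \infty$ and $g_0 \geq c > 0$---the latter also bounding $\int [I_{(-\infty,a_0]}(u) - F_0(a_0)]\mu_0(u,W)\,F_0(du)$ in $L^2(P_0)$ after the same change of measure---together with boundedness of $\theta_0$ on $\s{A}_0$. Since the inequality constraints defining $\s{M}$ and the submodel in the statement hold strictly at $P_0$, the model is locally nonparametric there, the tangent space is all of $L^2_0(P_0)$, and so the unique gradient $D^*_{a_0,0}$ is the efficient influence function. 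I expect the main obstacle to be the rigorous justification of differentiating under the integral sign when $F_{P_\epsilon}$, $Q_{P_\epsilon}$, and $\mu_{P_\epsilon}$ all move simultaneously, and, closely related, the change of measure from $Q_0 \otimes F_0$ to the joint law of $(A,W)$, where the positivity-type hypothesis on $g_0$ is indispensable.
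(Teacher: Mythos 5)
Your proposal is correct and follows essentially the same route as the paper: differentiate $\epsilon \mapsto \Omega_{P_\epsilon}(a_0)$ along one-dimensional submodels, split the derivative into contributions from the conditional law of $Y$ given $(A,W)$ and the marginals of $A$ and $W$, use the change of measure $Q_0(dw)\,F_0(da) = g_0(a,w)^{-1}$ times the joint law (where positivity of $g_0$ enters), recenter by the mean $2\Omega_0(a_0)$, and invoke the nonparametric tangent space to conclude the gradient is the efficient influence function. The only cosmetic difference is that the paper first decomposes $\Omega_\varepsilon(a_0) = \Gamma_\varepsilon(a_0) - \gamma_\varepsilon F_0(a_0) - \gamma_0 F_\varepsilon(a_0)$ before differentiating, whereas you differentiate the combined double-integral representation directly and make the uniqueness-of-gradients efficiency argument explicit.
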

We note that $g_0(a, w) = P_0(A = a \mid W =w) / P_0(A = a)$ for $a$ such that $P_0(A = a) > 0$, and $g_0(a, w) = \left[\frac{d}{da} P_0( A \leq a \mid W =w)\right] /\left[ \frac{d}{da} F_0(a)\right]$ for $a$ where $F_0$ is absolutely continuous. Hence, $g_0$ generalizes both the standardized generalized propensity score (GPS) and the standardized propensity score.
 
For any $p \in [1,\infty]$, Proposition~\ref{prop:eif} allows us to test $H_0$ in the following manner: (1) construct a uniformly asymptotically linear estimator $\Omega_n^\circ$ of $\Omega_0$ for which $\{n^{1/2}[ \Omega_n^\circ(a) - \Omega_0(a)] : a \in \s{A}_0\}$ converges weakly to a Gaussian limit process, (2) use the estimated influence function of $\Omega_n^\circ(a)$ to construct an estimator $T_{n,p,\alpha}$ of the $1-\alpha$ quantile of $n^{1/2}\|\Omega_n^\circ - \Omega_0\|_{F_0,p}$, and (3) reject $H_0$ at level $\alpha$ if $n^{1/2}\|\Omega_n^\circ \|_{F_n,p} > T_{n,p, \alpha}$. In the remainder of this section, we provide details for accomplishing each of these three steps.

\subsection{Estimating the primitive function}\label{sec:primitive_est}

The first step in our testing procedure is to construct an asymptotically linear estimator of $\Omega_0(a)$ for each fixed $a$. The four nuisance parameters present in the definition of $\Omega_0(a_0)$ and its nonparametric efficient influence function $D_{a_0,0}^*$ are the outcome regression $\mu_0$, the standardized propensity $g_0$, and the marginal distributions  $F_0$ and $Q_0$ corresponding to $A$ and $W$, respectively. Given estimators $\mu_n$ and $g_n$ of $\mu_0$ and $g_0$, respectively, we can construct an estimator $D_{a_0,n}^*$ of the influence function by plugging in $\mu_n$ for $\mu_0$, $g_n$ for $g_0$, and the empirical marginal distributions $F_n$ and $Q_n$ for $F_0$ and $Q_0$. A one-step estimator of $\Omega_0(a_0)$ is then given by $\Omega_n(a_0) := \Omega_{\mu_n, F_n, Q_n}(a_0)+ \d{P}_n D_{a_0,n}^*$, where $\Omega_{\mu_n, F_n, Q_n}(a_0) := \iint \left[ I_{(-\infty, a_0]}(a) - F_n(a_0)\right] \mu_n(a, w) \, dF_n(a) \, dQ_n(w)$ is the plug-in estimator of $\Omega_0$. In expanding $\Omega_n(a_0)$, some terms cancel and we are left with
\begin{equation}
\Omega_n(a_0) = \frac{1}{n}\sum_{i=1}^n  \left[ I_{(-\infty, a_0]}(A_i) - F_n(a_0) \right] \left[\frac{ Y_i - \mu_n(A_i,W_i)}{g_n(A_i, W_i)} + \int \mu_n(A_i, w) \, dQ_n(w) \right]  \ .\label{eq:one_step_est}
\end{equation}
If we were to base our test on $\Omega_n$, then, as we will see in Section~\ref{sec:asymptotic}, the large-sample properties of our test would depend on consistency of $\Omega_n$ and on weak convergence of $\{ n^{1/2}[\Omega_n(a) - \Omega_0(a)] : a\in\s{A}_0\}$ as a process. Such statistical properties of asymptotically linear estimators of pathwise differentiable parameters depend on estimators of nuisance parameters in two important ways. First, negligibility of a so-called \emph{second-order} remainder term requires negligibility of $(\mu_n - \mu_0)(g_n - g_0)$ in an appropriate sense. Second, negligibility of \emph{empirical process} remainder terms can be guaranteed if the nuisance estimators fall in sufficiently small function classes. In observational studies, researchers can rarely specify a priori correct parametric models for $\mu_0$ or $g_0$, which motivates the use of data-adaptive (e.g.\ machine learning) estimation of these functions in order to achieve negligibility of the second-order remainder. However, data-adaptive estimators  typically constitute large function classes. Hence, finding estimators that simultaneously satisfy these two requirements can require a delicate balance. Cross-fitting has been found to resolve this challenge by removing the need for nuisance estimators to fall in small function classes \citep{zheng2011cvtmle, belloni2018uniform, kennedy2019incremental}. Instead of basing our test on $\Omega_n$, we will therefore base our test on a cross-fitted version of $\Omega_n$, which we now define.

For a deterministic integer $V \in \{2, 3, \dotsc, \lfloor n/2\rfloor \}$, we randomly partition the indices $\{1, \dotsc, n\}$ into $V$ disjoint sets $\s{V}_{n,1}, \dotsc, \s{V}_{n,V}$ with cardinalities $N_1, \dotsc, N_V$. We require that these sets be as close to equal sizes as possible, so that $|N_v - n/V| \leq 1$ for each $v$, and that the number of folds $V$ be bounded as $n$ grows. For each $v \in \{1, \dotsc, V\}$, we define $\s{T}_{n,v} := \{ O_i : i \notin \s{V}_{n,v}\}$ as the \emph{training set} for fold $v$, and we define $\mu_{n,v}$ and $g_{n,v}$ as nuisance estimators that are estimated using only the observations from $\s{T}_{n,v}$. Similarly, we define $F_{n,v}$ and $Q_{n,v}$ as the marginal empirical distributions of $A$ and $W$, respectively, corresponding to the observations in $\s{T}_{n,v}$. We then define the cross-fitted estimator $\Omega_n^\circ$ of $\Omega_0$ as 
\begin{align}\Omega_{n}^\circ(a_0)  &:=\frac{1}{V} \sum_{v=1}^V \left\{ \frac{1}{N_v}\sum_{i \in \s{V}_{n,v}} \left[ I_{(-\infty, a_0]}(A_i) - F_{n,v}(a_0)\right] \frac{ Y_i - \mu_{n,v}(A_i,W_i)}{g_{n,v}(A_i, W_i)} \right.\nonumber\\
&\qquad\qquad\qquad \left. + \frac{1}{N_v^2}\sum_{i, j \in \s{V}_{n,v}} \left[ I_{(-\infty, a_0]}(A_i) - F_{n,v}(a_0)\right] \mu_{n,v}(A_i, W_j) \right\}\ . \label{eq:one_step_cv_est} \end{align}

In the next section, we indicate properties of the estimators $\mu_{n,v}$ and $g_{n,v}$ that imply certain large-sample properties of $\Omega_n^\circ$, which in turn imply properties of our testing procedure. In particular, we provide conditions under which $\Omega_n^\circ(a)$ is uniformly asymptotically linear with influence function $D_{a,0}^*$, meaning that
\begin{equation} \Omega_n^\circ(a) - \Omega_0(a) = \d{P}_n D_{a,0}^* + R_n(a) \ ,\label{eq:unif_asy_lin} \end{equation}
where $\sup_{a\in\s{A}_0} |R_{n}(a)| = \fasterthan(n^{-1/2})$. If \eqref{eq:unif_asy_lin} holds and in addition the one-dimensional class of functions $\{D_{a, 0}^* : a \in \s{A}_0\}$ is $P_0$-Donsker, then $\{ n^{1/2}[\Omega_n^\circ(a) - \Omega_0(a)] : a \in \s{A}_0 \}$ converges weakly in the space $\ell^{\infty}(\s{A}_0)$ of bounded real-valued functions on $\s{A}_0$ to a mean-zero Gaussian process $Z_0$ with covariance function $\Sigma_0(s,t) := P_0 [ D_{s, 0}^* D_{t,0}^*]$. Since the $L_p(F_0)$-norm is a continuous functional on  $\ell^{\infty}(\s{A}_0)$ for any $p \in [1, \infty]$, by the continuous mapping theorem we will then have $n^{1/2} \| \Omega_n^\circ - \Omega_0\|_{F_0,p} \indist \| Z_0 \|_{F_0, p}$. Given an estimator $D_{a,n,v}^*$ of $D_{a,0}^*$ for each $v$, we can approximate the distribution $\| Z_0 \|_{F_0, p}$ by simulating sample paths of a mean-zero Gaussian process $Z_n$ with covariance function $\Sigma_n(s,t) := \frac{1}{V} \sum_{v=1}^V \d{P}_{n,v} D_{s,n,v}^* D_{t,n,v}^*$, and computing the $L_p(F_n)$-norm of these sample paths, where $\d{P}_{n,v}$ is the empirical distribution for the validation fold $\s{V}_{n,v}$.  Putting it all together, our fully specified procedure for testing the null hypotheses $H_0$ is as follows:
\begin{description}[style=multiline,leftmargin=1.8cm]
\item[Step 1:] Split the sample into $V$ sets $\s{V}_{n,1}, \dotsc, \s{V}_{n,V}$ of approximately equal size.
\item[Step 2:] For each $v \in \{1, \dotsc, V\}$, construct estimates $\mu_{n,v}$ and $g_{n,v}$ of the nuisance functions $\mu_0$ and $g_0$ based on the training set $\s{T}_{n,v}$ for fold $v$.
\item[Step 3:] For each $a$ in the observed values of the exposure $\s{A}_n := \{A_1, \dotsc, A_n\}$, use $\mu_{n,v}$ and $g_{n,v}$ to construct $\Omega_n^\circ(a)$ as defined in \eqref{eq:one_step_cv_est}.
\item[Step 4:] Let $T_{n, \alpha, p}$ be the $1-\alpha$ quantile of $ \left( \frac{1}{n} \sum_{i=1}^n |Z_{n}(A_i)|^p\right)^{1/p}$ for $p < \infty$ or $\max_{a \in \s{A}_n} |Z_n(A_i)|$ for $p = \infty$, where, conditional on $O_1, \dotsc, O_n$, $(Z_n(A_1), \dotsc, Z_n(A_n))$ is distributed according to a mean-zero multivariate normal distribution with covariances given by $\Sigma_n(A_i,A_j) := E[ Z_{n}(A_i) Z_n(A_j) \mid O_1, \dotsc, O_n] = \frac{1}{V} \sum_{v=1}^V \d{P}_{n,v} D_{A_i,n,v}^*D_{A_j,n,v}^*$ for
\begin{align*}
  D_{a_0,n,v}^*(y, a, w)&=\left[I_{(-\infty, a_0]}(a) - F_{n,v}(a_0)\right] \left[\frac{ y - \mu_{n,v}(a,w)}{g_{n,v}(a, w)} +  \theta_{n,v}(a) -  \gamma_{n,v}\right] \\
  &\qquad + \int \left[I_{(-\infty, a_0]}(u) - F_{n,v}(a_0)\right]   \mu_{n,v}(u, w) \, F_{n,v}(du)  - 2\Omega_{\mu_{n,v}, F_{n,v}, Q_{n,v}}(a_0) \ , 
  \end{align*}
where $\theta_{n,v}(a) := \int \mu_{n,v}(a, w) \, dQ_{n,v}(w)$ and $\gamma_{n,v} := \iint \mu_{n,v}(a, w) \, dF_{n,v}(a) \, dQ_{n,v}(w)$.
\item[Step 5:] Reject $H_0$ at level $\alpha$ if $n^{1/2}\|\Omega_n^\circ \|_{F_n, p} > T_{n, \alpha, p}$.
\end{description}

In practice, we recommend using $p=\infty$ for several reasons. First, as illustrated in the numerical studies, tests based on $p=\infty$ offer better finite-sample power for detecting non-linear alternatives than other $p$, at no cost to test size. Relatedly, we expect tests based on $p= \infty$ to be more sensitive to deviations from the null that are concentrated on a narrow region of the support of $F_0$. Second, unlike $\|\Omega_0\|_{F_0, p}$ for $p < \infty$, $\|\Omega_0\|_{F_0, \infty}$ only depends on $F_0$ through its support, which makes it a more interpretable and generalizable parameter.

%

\section{Asymptotic properties of the proposed procedure}\label{sec:asymptotic}

\subsection{Doubly-robust consistency}

In this section, we derive sufficient conditions for three large-sample properties of our proposed test: consistency under fixed alternatives, asymptotically correct type I error rate, and positive asymptotic power under local alternatives. Each of these three properties is established by first proving an accompanying result for the estimator $\Omega_n^\circ$ upon which the test is based. We start by showing that the proposed test is doubly-robust consistent, meaning that it rejects any alternative hypothesis with probability tending to one as long as either of the two nuisance parameters involved in the problem is estimated consistently. We first introduce several conditions upon which our results rely.

\begin{description}[style=multiline,leftmargin=1.1cm]
\item[(A1)]  There exist $K_0, K_1, K_2 \in (0, \infty)$ such that, almost surely as $n \to \infty$ and for all $v$, $\mu_{n,v}$ and $\mu_0$ are contained in a class of functions $\s{F}_0$ and $g_{n,v}$ and $g_0$ are contained in a class of functions $\s{F}_{1}$, where $|\mu| \leq K_{0}$ for all $\mu \in \s{F}_{0}$ and $K_{1} \leq g \leq K_{2}$ for all $g \in \s{F}_{1}$. Additionally, $E_0[Y^2] < \infty$.
\item[(A2)] There exist $\mu_{\infty} \in \s{F}_0$ and $g_{\infty} \in \s{F}_1$ such that $\max_v P_0 (\mu_{n,v} - \mu_{\infty})^2 \inprob 0$ and $\max_v P_0(g_{n,v} - g_{\infty})^2  \inprob 0$.
\item[(A3)] There exist subsets $\s{S}_1, \s{S}_2$ and $\s{S}_3$ of $\s{A}_0 \times\s{W}$ such that $P_0(\s{S}_1 \cup \s{S}_2 \cup \s{S}_3) = 1$ and:
\begin{enumerate}[(a)]
\item $\mu_{\infty}(a,w) = \mu_0(a,w)$ for all $(a,w) \in \s{S}_1$;
\item $g_{\infty}(a,w) = g_0(a,w)$ for all $(a,w) \in \s{S}_2$;
\item $\mu_{\infty}(a,w) = \mu_0(a,w)$ and $g_{\infty}(a,w) = g_0(a,w)$ for all $(a,w) \in \s{S}_3$.
\end{enumerate}
\end{description}
Condition (A1) requires that the true nuisance functions as well as their estimators satisfy certain boundedness constraints. The requirement that $g_0 \geq K_1 > 0$ is a type of \emph{overlap} or \emph{positivity} condition. For mass points $a \in \s{A}_0$, this is equivalent to requiring that $P_0(A = a_0 \mid W = w) / P_0(A = a_0) \geq K_1 > 0$ for almost every $w$ and every such $a$. If there are a finite number of mass points, then this condition is equivalent to the standard overlap condition used for $n^{-1/2}$-rate estimation with a discrete exposure. However, if there are an infinite number of mass points, then the condition is weaker than the standard overlap condition. Similarly, for points $a \in \s{A}_0$ where $F_0$ is absolutely continuous, (A1) is related to but strictly weaker than the standard overlap condition for estimation of a dose-response curve with a continuous exposure, which requires that the conditional density $p_0(a \mid w)$ be bounded away from zero for almost all $w$ (e.g.\ condition (e) of Theorem 2 of \citealp{kennedy2016continuous}). Instead, (A1) requires that $p_0(a \mid w) / f_0(a)$ be bounded away from zero, which may hold even when $p_0(a \mid w)$ is arbitrarily close to zero. For example, if $A$ and $W$ are independent, so that $p_0(a \mid w) = f_0(a)$, then (A1) is automatically satisfied, whereas the standard overlap condition would not necessarily be.

Condition (A2) requires that the nuisance estimators converge to some limits $\mu_\infty$ and $g_\infty$. Condition (A3) is known as a double-robustness condition, since it is satisfied if either $\mu_\infty = \mu_0$ almost surely or $g_\infty = g_0$ almost surely. Double-robustness has been studied for over two decades, and is now commonplace in causal inference \citep{robins1994estimation, rotnitzky1998semiparametric, scharfstein1999adjusting, van2003unified, neugebauer2005prefer, bang2005doubly}. However, condition (A3) is slightly more general than standard double-robustness, since it is satisfied if either $\mu_\infty(a,w) = \mu_0(a,w)$ or $g_\infty(a,w) = g_0(a,w)$ for almost all $(a,w)$, which can happen even if neither $\mu_\infty = \mu_0$ nor $g_\infty = g_0$ almost surely.

Under these conditions, we have the following result concerning consistency of $\Omega_n^\circ$.
\begin{thm}[Doubly-robust consistency of $\Omega_n^\circ$]\label{thm:dr_cons_omega}
If (A1)--(A3) hold, then $\sup_{a \in \d{R}}|\Omega_n^\circ(a) - \Omega_0(a)| \inprob 0.$
\end{thm}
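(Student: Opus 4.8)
The plan is to decompose $\Omega_n^\circ(a) - \Omega_0(a)$ into a leading empirical-process term and a remainder, show the leading term is uniformly negligible via a Glivenko–Cantelli argument, and show the remainder converges uniformly to a deterministic bias that vanishes under the double-robustness condition (A3). Concretely, I would write $\Omega_n^\circ(a)$ from \eqref{eq:one_step_cv_est} and, working fold by fold, add and subtract the ``population'' version of each summand evaluated at the limit nuisances $(\mu_\infty, g_\infty)$ and at $F_0, Q_0$. This yields, for each fold $v$, a term of the form $(\d{P}_{n,v} - P_0)$ applied to a fixed (conditional on the training set) function, plus a term where $\d{P}_{n,v}$ is replaced by $P_0$ but the nuisances are still the estimated ones. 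Averaging over $v$ and taking $\sup_{a \in \d{R}}$ gives the two pieces to control.

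For the empirical-process piece, I would argue that the relevant class of functions $\{(y,a,w) \mapsto [I_{(-\infty,a_0]}(a) - F_{n,v}(a_0)][\frac{y-\mu(a,w)}{g(a,w)} + \int \mu(a,w')\,dQ(w')] : a_0 \in \d{R}\}$, for any fixed $\mu \in \s{F}_0$, $g \in \s{F}_1$, and distribution functions $F, Q$, is contained in a $P_0$-Glivenko–Cantelli class: the indicators $\{I_{(-\infty,a_0]} : a_0\}$ form the classical VC class of half-lines, the map is bounded uniformly by (A1) (since $|\mu| \le K_0$, $g \ge K_1$, and $E_0[Y^2] < \infty$ gives an integrable envelope), and multiplying/adding fixed bounded functions and a bounded constant preserves the uniform law of large numbers. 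Since each training set has size proportional to $n$ and $V$ is fixed, $\max_v \sup_{a_0} |(\d{P}_{n,v} - P_0) \phi_{a_0,\mu_{n,v},g_{n,v},F_{n,v},Q_{n,v}}| \inprob 0$; the fact that the nuisances are random but lie in the fixed classes $\s{F}_0, \s{F}_1$ a.s. (by (A1)) lets me take the sup over those classes and still get a GC statement. I would also need that the cross terms in the double sum $\frac{1}{N_v^2}\sum_{i,j}$ behave like $\d{P}_{n,v}$ applied to $\int \mu_{n,v}(A_i,w)\,dQ_{n,v}(w)$ up to $o_{P_0}(1)$, which is a standard U-statistic/V-statistic reduction using boundedness.

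For the bias piece, I would show that $\sup_{a_0}|P_0 \phi_{a_0,\mu_{n,v},g_{n,v},F_{n,v},Q_{n,v}} - \Omega_0(a_0)| \inprob 0$. Replacing $F_{n,v}, Q_{n,v}$ by $F_0, Q_0$ costs only $o_{P_0}(1)$ uniformly (Glivenko–Cantelli for $F_{n,v}$, plus boundedness), so the question reduces to controlling the population functional evaluated at $(\mu_\infty, g_\infty)$ versus $(\mu_0, g_0)$. Here the key algebraic fact is the standard mixed-bias (``product'') structure of doubly-robust estimators: the difference equals $\int [I_{(-\infty,a_0]}(a) - F_0(a_0)] \,\frac{g_0(a,w) - g_\infty(a,w)}{g_\infty(a,w)}\,(\mu_\infty(a,w) - \mu_0(a,w))\, dP_0$ (plus terms that vanish exactly), which is bounded in absolute value, uniformly in $a_0$, by $\frac{1}{K_1}\,P_0 |\mu_\infty - \mu_0|\,|g_\infty - g_0|$. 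On $\s{S}_1$ the factor $\mu_\infty - \mu_0$ vanishes, on $\s{S}_2$ the factor $g_\infty - g_0$ vanishes, on $\s{S}_3$ both vanish, and $P_0(\s{S}_1 \cup \s{S}_2 \cup \s{S}_3) = 1$, so the product integrand is zero $P_0$-a.e. and the bias is exactly $0$ (in the limit); with the estimated nuisances in place of the limits, (A2) and Cauchy–Schwarz plus (A1) give $o_{P_0}(1)$.

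The main obstacle I anticipate is making the uniformity over $a_0 \in \d{R}$ fully rigorous while the nuisance estimators $\mu_{n,v}, g_{n,v}$ are themselves random: one must invoke (A1) to confine them to fixed classes, verify that those classes combined with the half-line indicators still yield a Glivenko–Cantelli (not merely pointwise-LLN) class with an integrable envelope, and handle the dependence between the validation fold and the (independent) training fold correctly — conditioning on the training set, applying the fixed-class GC result, and then removing the conditioning. The V-statistic reduction of the double sum and the bookkeeping to verify that the ``exactly vanishing'' terms in the mixed-bias expansion really do cancel (this is where the specific form of $D^*_{a_0,0}$ and the one-step construction matter) are the other places where care is needed, though each is routine once set up.
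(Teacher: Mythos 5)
Your proposal is correct and follows essentially the same route as the paper: the paper's proof (Lemma~\ref{lemma:first_order} plus Lemmas~\ref{lemma:emp_process_neg} and~\ref{lemma:uprocess}) likewise decomposes $\Omega_n^\circ - \Omega_0$ into an empirical-process part controlled by conditioning on the training fold and applying a maximal inequality for the VC class of half-lines (and integrals thereof), a degenerate U-/V-process term for the double sum, and a mixed-bias term bounded by $K_1^{-2} P_0|\mu_{n,v}-\mu_0|\,|g_{n,v}-g_0|$, which is shown to vanish by splitting over $\s{S}_1,\s{S}_2,\s{S}_3$ and applying Cauchy--Schwarz with (A2); the only (immaterial for consistency) difference is that the paper centers the leading empirical process at the limit influence function $D^*_{a_0,\infty}$ rather than at the estimated nuisances. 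One caveat: your parenthetical claim that (A1) lets you take a supremum over $\mu\in\s{F}_0$, $g\in\s{F}_1$ and still obtain a Glivenko--Cantelli statement is not justified, since (A1) imposes only boundedness and no entropy control on these classes (avoiding exactly such complexity conditions is the point of cross-fitting); however, the conditioning-on-the-training-fold argument you also describe is the correct mechanism and is what the paper uses, so this does not affect the validity of your plan.
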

It follows immediately from Theorem~\ref{thm:dr_cons_omega} that if (A1)--(A3) hold, then $\|\Omega_n^\circ\|_{F_0, p} \inprob \|\Omega_0\|_{F_0,p}$ for any $p \in [1, \infty]$, so that $P_0\left(\|\Omega_n^\circ\|_{F_0, p} > t_n\right) \longrightarrow 1$ for any $t_n \inprob 0$ and $P_0 \in \s{M}$ such that $H_A$ holds. In order to fully establish consistency of the proposed test, we need to justify using $\|\cdot\|_{F_n,p}$ instead of $\|\cdot\|_{F_0,p}$, and in addition we need to show that $T_{n,\alpha, p} / n^{1/2} \inprob 0$. We do so in the next result, and conclude that the proposed test is doubly-robust consistent.
\begin{thm}[Doubly-robust consistency of proposed test]\label{thm:dr_cons_test}
If conditions (A1)--(A3) hold, then $P_0\left(n^{1/2}\|\Omega_n^\circ \|_{F_n, p} > T_{n, \alpha, p} \right) \longrightarrow 1$ for any $P_0 \in \s{M}_A$.
\end{thm}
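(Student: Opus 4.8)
The plan is to supply the two ingredients flagged just before the statement — passing from the $\|\cdot\|_{F_0,p}$-norm to the empirical $\|\cdot\|_{F_n,p}$-norm, and showing that the estimated critical value satisfies $T_{n,\alpha,p}/n^{1/2} \inprob 0$ — and then to combine them with Theorem~\ref{thm:dr_cons_omega} and Proposition~\ref{prop:equivalence}.

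For the first ingredient I would not try to identify the limit of $\|\Omega_n^\circ\|_{F_n,p}$ exactly, since for consistency a positive lower bound suffices. Because $F_n$ is a probability measure, $\|h\|_{F_n,p}\geq\|h\|_{F_n,1}=\d{P}_n|h|$ for every $p\in[1,\infty]$, so
\[ \|\Omega_n^\circ\|_{F_n,p} \;\geq\; \d{P}_n|\Omega_0| \;-\; \sup_{a\in\d{R}}\bigl|\Omega_n^\circ(a)-\Omega_0(a)\bigr|, \]
whose right-hand side converges in $P_0$-probability to $P_0|\Omega_0|$ by Theorem~\ref{thm:dr_cons_omega} and the law of large numbers (here $\Omega_0$ is bounded since $\theta_0\in C_b(\s{A}_0)$ and $F_0\leq 1$). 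Under $H_A$ we have $P_0\in\s{M}_A$, so Proposition~\ref{prop:equivalence}(4) gives $P_0|\Omega_0|=\|\Omega_0\|_{F_0,1}>0$; writing $2c:=P_0|\Omega_0|$, this yields $P_0(\|\Omega_n^\circ\|_{F_n,p}\geq c)\to 1$. If a cleaner statement is wanted, the same computation together with the elementary fact that the maximum of an i.i.d.\ sample of a bounded function converges to its essential supremum (the case $p=\infty$) upgrades this to $\|\Omega_n^\circ\|_{F_n,p}\inprob\|\Omega_0\|_{F_0,p}$.

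The second ingredient is where the real work lies. Since $F_n$ is a probability measure, $\|Z_n\|_{F_n,p}\leq\|Z_n\|_{F_n,\infty}=\max_{i\leq n}|Z_n(A_i)|$, hence $T_{n,\alpha,p}\leq T_{n,\alpha,\infty}$ for every $p$ and it suffices to control the sup-norm critical value. Under (A1) a crude bound on the plug-in influence function gives $|D_{a_0,n,v}^*(y,a,w)|\leq C(|y|+1)$ uniformly in $a_0$, for a constant $C$ depending only on $K_0,K_1$; consequently
\[ \max_{i\leq n}\Sigma_n(A_i,A_i) \;=\; \max_{i\leq n}\tfrac{1}{V}\textstyle\sum_{v=1}^V \d{P}_{n,v}(D_{A_i,n,v}^*)^2 \;\leq\; 2C^2\bigl(\max_v \d{P}_{n,v}Y^2 + 1\bigr) \;=\; \bounded(1), \]
the last step because $V$ is bounded and $E_0[Y^2]<\infty$, so each $\d{P}_{n,v}Y^2\inprob E_0[Y^2]$ by the law of large numbers. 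Conditionally on the data, $(Z_n(A_1),\dotsc,Z_n(A_n))$ is mean-zero Gaussian, so the Gaussian maximal inequality gives $E\bigl[\max_{i\leq n}|Z_n(A_i)| \,\big|\, O_1,\dotsc,O_n\bigr]\leq\sqrt{2\log(2n)}\,\bigl(\max_{i\leq n}\Sigma_n(A_i,A_i)\bigr)^{1/2}=\bounded(\sqrt{\log n})$; a conditional Markov inequality then bounds the conditional $(1-\alpha)$-quantile $T_{n,\alpha,\infty}$ by $\alpha^{-1}$ times this conditional expectation, so $T_{n,\alpha,p}/n^{1/2}\leq T_{n,\alpha,\infty}/n^{1/2}=\bounded\bigl(\sqrt{(\log n)/n}\bigr)=\fasterthan(1)$.

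Combining the two ingredients, for $P_0\in\s{M}_A$,
\[ P_0\bigl(n^{1/2}\|\Omega_n^\circ\|_{F_n,p} > T_{n,\alpha,p}\bigr) \;\geq\; P_0\bigl(\|\Omega_n^\circ\|_{F_n,p}\geq c,\ T_{n,\alpha,p}/n^{1/2}<c\bigr)\;\longrightarrow\;1, \]
which is the claim. The main obstacle is the second ingredient: the estimated critical value must be shown not to grow faster than $n^{1/2}$ even though the nuisance estimators are essentially unrestricted and $Y$ is only square-integrable. The two observations that make it go through are (i) the $p=\infty$ critical value dominates all the others, so one sup-norm Gaussian maximal inequality suffices, and (ii) although $D_{a_0,n,v}^*$ is not uniformly bounded, its conditional second moment is controlled by $\d{P}_{n,v}Y^2=\bounded(1)$. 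Step~1 is comparatively routine once one notices that a crude $L^1(F_n)$ lower bound — rather than a matching $L^p(F_0)$ limit — already suffices for consistency.
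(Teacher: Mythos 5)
Your proposal is correct, and it follows the same two-ingredient skeleton that the paper itself announces before the theorem (control of $\|\Omega_n^\circ\|_{F_n,p}$ plus $T_{n,\alpha,p}/n^{1/2}\inprob 0$, combined with Theorem~\ref{thm:dr_cons_omega} and Proposition~\ref{prop:equivalence}); the difference lies in how you execute each ingredient, and in both cases your route is more elementary than the paper's. For the first ingredient, the paper proves the full limit $\|\Omega_n^\circ\|_{F_n,p}\inprob\|\Omega_0\|_{F_0,p}$, which for $p=\infty$ requires a separate argument distinguishing mass points of $F_0$ from continuity points of $\Omega_0$; your monotonicity-in-$p$ reduction to the $L^1(F_n)$ norm and the crude lower bound $\d{P}_n|\Omega_0|-\sup_a|\Omega_n^\circ(a)-\Omega_0(a)|$ sidesteps that case analysis entirely, at the (harmless) cost of only obtaining a positive lower bound rather than the exact limit; the only slight gloss is your appeal to part (4) of Proposition~\ref{prop:equivalence} for $\|\Omega_0\|_{F_0,1}>0$ under $H_A$, which strictly speaking follows from the contrapositive of the (4)$\implies$(3) step of that proposition's proof (that argument runs precisely through the $L^1(F_0)$ norm), so nothing is missing. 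For the second ingredient, the paper bounds the conditional expectation of $\sup_{a\in\s{A}_n}|Z_n(a)|/n^{1/2}$ via sub-Gaussianity with respect to the intrinsic semimetric and the chaining bound of Corollary~2.2.8 of \cite{van1996weak} with the trivial covering bound $N\leq n$, whereas you use the finite-maximum Gaussian inequality over the $n$ observed points directly; both give the same $\bounded(\sqrt{(\log n)/n})$ rate, and both then convert a quantile bound into an expectation bound by the same Markov-type device (your explicit bound $|D^*_{a_0,n,v}|\leq C(|y|+1)$ with $\max_v\d{P}_{n,v}Y^2=\bounded(1)$ is exactly what underlies the paper's remark that (A1) implies $\sup_{s,t}|\Sigma_n(s,t)|=\bounded(1)$). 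In short: same decomposition and same conclusion, with your version trading the paper's sharper intermediate facts for shorter, more self-contained bounds.
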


\subsection{Asymptotically correct type I error rate}

Next, we provide conditions under which the proposed test has asymptotically correct type I error rate. We start by introducing an additional condition that we will need.
\begin{description}[style=multiline,leftmargin=1.1cm]
\item[(A4)] Both $\mu_\infty = \mu_0$ and $g_\infty = g_0$, and $r_n :=  \max_v \left|P_0 (\mu_{n,v} - \mu_0)( g_{n,v} - g_0)\right| = \fasterthan\left(n^{-1/2}\right)$.
\end{description}
In concert with Condition (A2), condition (A4) requires that both estimators are consistent. Furthermore, condition (A4) requires that the rate of convergence of the mean of the product of the nuisance errors tend to zero in probability faster than $n^{-1/2}$. We note that $r_n^2 \leq \max_v P_0(\mu_{n,v} - \mu_0)^2 P_0(g_{n,v} - g_0)^2$, so that $r_n$ is bounded by the product of the $L_2(P_0)$ rates of convergence of the two estimators. Therefore, if in particular $\max_v \|\mu_{n,v} - \mu_0\|_{P_0,2} = \fasterthan(n^{-1/4})$ and $\max_v \| g_{n,v} - g_0\|_{P_0,2} = \fasterthan(n^{-1/4})$, then (A4) is satisfied. For example, if $\mu_n$ and $g_n$ are based on correctly-specified parametric models, then (A4) typically holds with room to spare. However, in many contexts, \textit{a priori} correctly specified parametric models for $\mu_0$ and $g_0$ are not available, which motivates the use of semiparametric and nonparametric estimators for $\mu_n$ and $g_n$. While we can expect such semi- or nonparametric estimators to be consistent for a larger class of true functions than parametric estimators, whether (A4) is satisfied depends on the adaptability of the estimators to the specific, often unknown  nature of $\mu_0$ and $g_0$. For this reason, we suggest using ensemble methods based on cross-validation in practice: leveraging several parametric, semiparametric, and nonparametric estimators may give the best chance of minimizing bias and ensuring that (A4) is satisfied. These themes are prevalent throughout the doubly-robust estimation literature, and indeed they are fundamental to nonparametric estimation problems in causal inference (see, e.g.\ \citealp{van2003unified,neugebauer2005prefer, kennedy2016continuous}).

Under these conditions, we have the following result.
\begin{thm}[Weak convergence of $n^{1/2} ( \Omega_n^\circ - \Omega_0)$] \label{thm:weak_conv_omega}
If (A1)--(A2) and (A4) hold, then 
\[\sup_{a \in \s{A}_0} \left| n^{1/2} \left[ \Omega_n^\circ(a) - \Omega_0(a) \right] - n^{1/2} \d{P}_n D_{a, 0}^* \right| \inprob 0 \ ,\]
and in particular, $\left\{ n^{1/2} \left[ \Omega_n^\circ(a)- \Omega_0(a)\right] : a \in \s{A}_0\right\}$ converges weakly as a process in $\ell^\infty(\s{A}_0)$ to a mean-zero Gaussian process $Z_0$ with covariance function given by $\Sigma_0(s,t) := P_0 \left[ D_{s,0}^* D_{t,0}^*\right]$.
\end{thm}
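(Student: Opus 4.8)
The plan is to first establish the uniform linearization
\[ \sup_{a \in \s{A}_0} \left| \Omega_n^\circ(a) - \Omega_0(a) - \d{P}_n D_{a,0}^* \right| = \fasterthan(n^{-1/2}), \]
and then to deduce the weak-convergence conclusion exactly along the lines sketched after \eqref{eq:unif_asy_lin}. For the deduction, note that $P_0 D_{a,0}^* = 0$ for every $a$ (a direct consequence of $D_{a,0}^*$ being the efficient influence function of $\Omega_0(a)$, or a short calculation using $E_0[Y \mid A,W] = \mu_0$ and $\theta_0 = \int \mu_0 \, dQ_0$), so the display reads $n^{1/2}(\Omega_n^\circ - \Omega_0) = \d{G}_n D_{\cdot,0}^* + \fasterthan(1)$ in $\ell^\infty(\s{A}_0)$ with $\d{G}_n := n^{1/2}(\d{P}_n - P_0)$. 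I would then verify that $\{D_{a,0}^* : a \in \s{A}_0\}$ is $P_0$-Donsker by writing
\[ D_{a,0}^*(y,a',w) = \left[I_{(-\infty,a]}(a') - F_0(a)\right] H_0(y,a',w) + K_{a,0}(w) - 2\Omega_0(a), \]
where $H_0(y,a',w) := \{y - \mu_0(a',w)\}/g_0(a',w) + \theta_0(a') - \gamma_0$ is a fixed function, square-integrable under $P_0$ by (A1), and $K_{a,0}(w) := \int [I_{(-\infty,a]}(u) - F_0(a)]\mu_0(u,w)\,F_0(du)$ is bounded by $2K_0$ and Lipschitz in $F_0(a)$. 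The families $\{I_{(-\infty,a]}(\cdot) H_0 : a\}$ (a bounded VC-subgraph class times a fixed $L_2(P_0)$ function), $\{F_0(a) H_0 : a\}$ and $\{K_{a,0} : a\}$ (one-parameter Lipschitz families with integrable envelopes), and the constants $\{2\Omega_0(a)\}$ are each $P_0$-Donsker, so their sum is. The uniform CLT for Donsker classes then gives $\d{G}_n D_{\cdot,0}^* \indist Z_0$ in $\ell^\infty(\s{A}_0)$, a tight mean-zero Gaussian process with $\n{Cov}\{Z_0(s),Z_0(t)\} = P_0 D_{s,0}^* D_{t,0}^* = \Sigma_0(s,t)$, and Slutsky's lemma in $\ell^\infty(\s{A}_0)$ completes the deduction.

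For the uniform linearization I would argue fold by fold, conditioning throughout on the training sample $\s{T}_{n,v}$ so that $\mu_{n,v}, g_{n,v}, F_{n,v}, Q_{n,v}$ act as deterministic functions. Writing $\Omega_n^\circ = \frac1V \sum_{v=1}^V \Omega_n^{\circ,v}$, isolating the diagonal $i=j$ terms of the double sum in \eqref{eq:one_step_cv_est} (which contribute at most $2K_0/N_v = \boundeddet(1/n)$ uniformly in $a$), and applying the Hájek projection to the remaining bilinear plug-in term, one obtains for each $v$
\[ \Omega_n^{\circ,v}(a) - \Omega_0(a) = (\d{P}_{n,v} - P_0)\,\overline D_{a,n,v}^* + b_{n,v}(a) + U_{n,v}(a) - \Delta_{n,v}(a), \]
where $\overline D_{a,n,v}^*$ is the influence function produced by the linearization (it agrees with $D_{a,0}^*$ after substituting the fold-$v$ nuisance estimates, except that the $-\gamma_0$ summand is absent, since the plug-in part of the estimator contains no validation-fold version of it, and up to data-constants), $b_{n,v}(a) := \phi_a(P_0; \mu_{n,v}, g_{n,v}, F_{n,v}) - \Omega_0(a)$ is the drift term with $\phi_a(P; \cdot)$ the population analogue of $\Omega_n^{\circ,v}(a)$, $U_{n,v}(a)$ is the degenerate second-order $U$-statistic remainder from the bilinear term, and $\Delta_{n,v}(a)$ is the diagonal correction.

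The two substantive steps are then the following. (i) \emph{Empirical-process remainder:} one shows $\sup_{a}|(\d{P}_{n,v} - P_0)(\overline D_{a,n,v}^* - \overline D_a^*)| = \fasterthan(n^{-1/2})$, where $\overline D_a^*$ denotes $\overline D_{a,n,v}^*$ with the true nuisances. Conditionally on $\s{T}_{n,v}$, the class $\{\overline D_{a,n,v}^* - \overline D_a^* : a \in \s{A}_0\}$ is VC-type with a uniform-entropy bound independent of $n$---crucially, the indicator-in-$a$ and Lipschitz-in-$F_0(a)$ structure is unaffected by the realized $\mu_{n,v}, g_{n,v}$, so \emph{no} Donsker condition on $\s{F}_0$ or $\s{F}_1$ is needed---and its squared envelope tends to $0$ in $P_0$-probability by (A2); for the term $Y(g_{n,v} - g_0)/(g_0 g_{n,v})$ one needs a uniform-integrability refinement because only $E_0[Y^2]<\infty$ is assumed (split at $\{Y^2 \le M\}$, use $\|g_{n,v}-g_0\|_{P_0,2}\inprob 0$ on the truncated part and $E_0[Y^2 I_{\{Y^2>M\}}]\to 0$ on the tail). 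A conditional maximal inequality for VC-type classes together with a standard conditioning argument then yields the claim; the same tools applied to the fixed Donsker class $\{\overline D_a^* : a\}$ give $\sup_a |U_{n,v}(a)| = \bounded(\log n/n)$ and control the discrepancy between $\frac1V\sum_v(\d{P}_{n,v} - P_0)\overline D_a^*$ and $(\d{P}_n - P_0)\overline D_a^*$ caused by unequal fold sizes. (ii) \emph{Drift term:} using $dP_0 = g_0\, d(F_0 \otimes Q_0)$ on $(A,W)$-space (valid since $G_0(\cdot,w)\ll F_0$ for $Q_0$-a.e.\ $w$) and the expansion $1/g_{n,v} = 1/g_0 + (g_0-g_{n,v})/(g_0 g_{n,v})$, all cross-terms between the $\mu$- and $g$-errors and all first-order terms in the $\mu$-error cancel, leaving exactly $b_{n,v}(a) = [F_0(a) - F_{n,v}(a)]\gamma_0 + \fasterthan(n^{-1/2})$ uniformly in $a$, the remainder being the genuinely second-order piece bounded by $r_n/K_1^2$ and negligible by (A4). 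Averaging over $v$ and using $\frac1V\sum_v F_{n,v} = F_n$ (up to an $\boundeddet(1/n)$ unequal-fold-size error) gives $\frac1V\sum_v b_{n,v}(a) = [F_0(a) - F_n(a)]\gamma_0 + \fasterthan(n^{-1/2}) = (\d{P}_n - P_0)\{-\gamma_0 I_{(-\infty,a]}\} + \fasterthan(n^{-1/2})$---precisely the contribution of the $-\gamma_0$ summand of $D_{a,0}^*$ to $\d{P}_n D_{a,0}^*$ (the accompanying data-constant $\gamma_0 F_0(a) - 2\Omega_0(a)$ being annihilated by $\d{P}_n - P_0$). Collecting (i) and (ii) across folds and recalling $P_0 D_{a,0}^* = 0$ produces the claimed representation.

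The main obstacle is step (i): bounding $(\d{P}_{n,v} - P_0)(\overline D_{a,n,v}^* - \overline D_a^*)$ uniformly over $a$ \emph{without} a Donsker assumption on the nuisance classes. This is exactly what cross-fitting circumvents---after conditioning on $\s{T}_{n,v}$ the only free index is $a$, so the complexity is governed by the one-dimensional indicator/Lipschitz structure alone, and one needs only that the envelope vanishes in $L_2(P_0)$, which (A2) supplies (with the uniform-integrability refinement because $Y$ is merely square-integrable). Beyond this, the remaining work is careful bookkeeping: confirming that the VC-type/uniform-entropy bounds hold uniformly over the realized $\mu_{n,v}, g_{n,v}$, tracking the $\boundeddet(1/n)$ terms generated by $|N_v - n/V| \le 1$, and verifying the identity $\frac1V\sum_v F_{n,v} = F_n$ (each observation lying in exactly $V-1$ of the training sets).
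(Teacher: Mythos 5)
Your proposal is correct and follows essentially the same route as the paper: a fold-wise expansion obtained by conditioning on the training sample, with the doubly-robust drift term controlled by $r_n$ from (A4), the empirical-process remainder handled by a maximal inequality for a VC-type class in $a$ with an envelope vanishing in $L_2(P_0)$, the bilinear plug-in term handled as a degenerate $U$-statistic plus projection, $\boundeddet(1/n)$ fold-size corrections, and finally the Donsker property of $\{D_{a,0}^*: a \in \s{A}_0\}$ for the weak convergence (this is exactly the content of the paper's Lemmas~\ref{lemma:first_order}--\ref{lemma:uprocess}). The only differences are cosmetic bookkeeping --- you recover the $-\gamma_0$ indicator contribution through the drift term $\gamma_0[F_0 - F_{n,v}]$ rather than carrying it inside the fold-wise influence function as the paper does, and your truncation/uniform-integrability refinement for the $Y$-dependent envelope is a slightly more careful treatment of a step the paper passes over quickly.
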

As with the relationship between Theorems~\ref{thm:dr_cons_omega} and~\ref{thm:dr_cons_test}, Theorem~\ref{thm:weak_conv_omega} does not quite imply that the proposed test has asymptotically correct size due to the two additional approximations made in the proposed test. Specifically, it follows from Theorem~\ref{thm:weak_conv_omega} that $P_0\left(\left\|\Omega_n^\circ \right\|_{F_0, p} > T_{0,\alpha,p} / n^{1/2}\right) \longrightarrow \alpha$, where $T_{0,\alpha,p}$ is the $1-\alpha$ quantile of $\|Z_0\|_{F_0,p}$. Validity of the proposed test follows if $\left\|\Omega_n^\circ\right\|_{F_n, p} - \left\|\Omega_n^\circ\right\|_{F_0, p} = \fasterthan(n^{-1/2})$ and $T_{n,\alpha,p} \inprob T_{0,\alpha,p}$. Theorem~\ref{thm:test_size} verifies these facts and concludes that the test has asymptotically valid size.
\begin{thm}[Asymptotic validity of proposed test]\label{thm:test_size}
If conditions (A1)--(A2) and (A4) hold and the distribution function of $\|Z_0\|_{F_0,p}$ is strictly increasing and continuous in a neighborhood of $T_{0,\alpha,p}$, then $P_0\left(n^{1/2}\|\Omega_n^\circ \|_{F_n, p} > T_{n, \alpha, p} \right) \longrightarrow \alpha $ for any $P_0 \in \s{M}_0$.
\end{thm}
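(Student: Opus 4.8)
\noindent\textit{Proof plan.}\quad The strategy is to leverage Theorem~\ref{thm:weak_conv_omega} to obtain the limit distribution of the test statistic, and separately to show that the simulated critical value consistently estimates the corresponding quantile of that limit. Throughout we work under $P_0 \in \s{M}_0$. By Proposition~\ref{prop:equivalence}, $\Omega_0(a) = 0$ for every $a \in \s{A}_0$, so Theorem~\ref{thm:weak_conv_omega} gives both $\sup_{a \in \s{A}_0} |n^{1/2} \Omega_n^\circ(a) - n^{1/2} \d{P}_n D_{a,0}^*| \inprob 0$ and $\{ n^{1/2} \Omega_n^\circ(a) : a \in \s{A}_0\} \indist Z_0$ in $\ell^\infty(\s{A}_0)$, where $Z_0$ is the tight mean-zero Gaussian process with covariance $\Sigma_0$. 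In particular, $Z_0$ concentrates on the set of functions that are uniformly continuous relative to the intrinsic $L_2(P_0)$-semimetric on $\{D_{a,0}^* : a \in \s{A}_0\}$, a fact used below.

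First I would relate $\|\Omega_n^\circ\|_{F_n, p}$ to $\|\Omega_n^\circ\|_{F_0, p}$. Since $f \mapsto \|f\|_{F_0, p}$ is continuous on $\ell^\infty(\s{A}_0)$ for every $p \in [1,\infty]$, the continuous mapping theorem yields $n^{1/2} \|\Omega_n^\circ\|_{F_0, p} \indist \|Z_0\|_{F_0, p}$. To replace $F_0$ by $F_n$, write $G_n(a) := n^{1/2} \d{P}_n D_{a,0}^*$; because $F_n$ and $F_0$ are probability measures supported in $\s{A}_0$, both $\|\cdot\|_{F_n,p}$ and $\|\cdot\|_{F_0,p}$ are dominated by the supremum norm, so it suffices to show $\|G_n\|_{F_n, p} - \|G_n\|_{F_0, p} = \fasterthan(1)$. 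For $p < \infty$, the difference of $p$-th powers equals $(\d{P}_n - P_0)|G_n|^p$, which is $\fasterthan(1)$ by a uniform law of large numbers: $G_n$ is asymptotically tight in $\ell^\infty(\s{A}_0)$, and $\{D_{a,0}^* : a \in \s{A}_0\}$ is bounded and $P_0$-Donsker, hence $P_0$-Glivenko--Cantelli. For $p = \infty$, one has $\|G_n\|_{F_n, \infty} \leq \|G_n\|_{F_0, \infty}$ because $\s{A}_n \subseteq \s{A}_0$, and the reverse inequality up to $\fasterthan(1)$ follows from stochastic equicontinuity of $G_n$ together with the fact that $\s{A}_n$ is eventually dense in $\s{A}_0$: atoms of $F_0$ lie in $\s{A}_n$ with probability tending to one, and at continuity points $a \mapsto D_{a,0}^*$ is $L_2(P_0)$-continuous, so Euclidean density transfers to density in the intrinsic semimetric. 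Combining, $n^{1/2} \|\Omega_n^\circ\|_{F_n, p} \indist \|Z_0\|_{F_0, p}$.

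Next I would show $T_{n, \alpha, p} \inprob T_{0, \alpha, p}$. The crux is uniform consistency of the covariance estimator, $\sup_{s, t \in \s{A}_0} |\Sigma_n(s, t) - \Sigma_0(s, t)| \inprob 0$. Expanding $\Sigma_n - \Sigma_0$ into terms involving either an empirical-process fluctuation of $\d{P}_{n,v}$ over the (Glivenko--Cantelli) index class or a nuisance error, one bounds the former by uniform laws of large numbers and Glivenko--Cantelli for $F_{n,v}$ and $Q_{n,v}$, and the latter using (A1)--(A2) and (A4), which under $H_0$ give $\max_v P_0(\mu_{n,v} - \mu_0)^2 \inprob 0$ and $\max_v P_0(g_{n,v} - g_0)^2 \inprob 0$ with uniformly bounded integrands. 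Uniform covariance convergence, together with tightness of the Gaussian limit $Z_0$ from Theorem~\ref{thm:weak_conv_omega}, implies that the conditional law of the simulated process $Z_n$ given $O_1, \dotsc, O_n$ converges weakly in probability to the law of $Z_0$ in $\ell^\infty(\s{A}_0)$. Applying the continuous mapping theorem jointly with $F_n \to F_0$, the conditional law of $\|Z_n\|_{F_n, p}$ converges weakly in probability to that of $\|Z_0\|_{F_0, p}$; since by hypothesis the distribution function of $\|Z_0\|_{F_0, p}$ is continuous and strictly increasing near its $(1-\alpha)$ quantile $T_{0, \alpha, p}$, the conditional $(1-\alpha)$ quantiles $T_{n, \alpha, p}$ converge in probability to $T_{0, \alpha, p}$.

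Finally, combining the last two conclusions via Slutsky's theorem and the continuity and strict monotonicity of the limiting distribution function at $T_{0, \alpha, p}$ gives $P_0\!\left(n^{1/2} \|\Omega_n^\circ\|_{F_n, p} > T_{n, \alpha, p}\right) \to P\!\left(\|Z_0\|_{F_0, p} > T_{0, \alpha, p}\right) = \alpha$. I expect the main obstacle to be handling the data-dependent simulated process $Z_n$: making precise the passage from uniform covariance convergence to weak convergence in probability of its conditional law (and hence to convergence of the conditional quantiles $T_{n,\alpha,p}$), and, in the $p = \infty$ case, the careful bookkeeping needed to match the supremum over the random grid $\s{A}_n$ with the supremum over $\s{A}_0$ when $F_0$ may possess atoms. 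The remaining steps are routine continuous-mapping and Slutsky arguments built on Theorem~\ref{thm:weak_conv_omega}.
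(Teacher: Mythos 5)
Your plan is essentially sound and reproduces the paper's architecture: weak convergence of $n^{1/2}\Omega_n^\circ$ under $H_0$ from Theorem~\ref{thm:weak_conv_omega}, substitution of $F_n$ for $F_0$ in the norm, uniform consistency of $\Sigma_n$ (the paper's Lemma~\ref{lemma:covar_cons}) feeding into convergence of the simulated process and hence of $T_{n,\alpha,p}$ to $T_{0,\alpha,p}$ (the paper's Lemma~\ref{lemma:Zn_conv}), and a final Slutsky-plus-continuity step. Where you genuinely diverge is the norm-substitution step. The paper's Lemma~\ref{lemma:Omegan_norm} proves $\|\Omega_n^\circ\|_{F_n,p}-\|\Omega_n^\circ\|_{F_0,p}=\fasterthan(n^{-1/2})$ by working directly with the estimator: it notes exact equality for $p=\infty$ (since $\Omega_n^\circ$ is a step function with jumps at the $A_i$) and, for $p<\infty$, embeds $\Omega_n^\circ$ in a scaled symmetric convex hull of indicators with envelope $n^{-\alpha}$ and applies a maximal inequality with the convex-hull entropy bound. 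You instead first replace $n^{1/2}\Omega_n^\circ$ by the linearization $G_n(a)=n^{1/2}\d{P}_nD_{a,0}^*$ (legitimate, since both norms are dominated by the sup norm and the linearization error is uniformly $\fasterthan(1)$ under the null) and then compare $\|G_n\|_{F_n,p}$ with $\|G_n\|_{F_0,p}$. Note that for $G_n$ the exact-equality trick at $p=\infty$ is unavailable, since $G_n$ is not a pure step function in $a$, so your equicontinuity-plus-density argument there is genuinely needed; your use of right-continuity of $F_0$ and atoms landing in $\s{A}_n$ is the right way to transfer Euclidean proximity to proximity in $d_0(s,t)=|F_0(s)-F_0(t)|^{1/2}$. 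Your reduction buys a more elementary argument than the convex-hull entropy computation, at the cost of only giving the statement under the asymptotic-linearity conditions (which suffices here, and the same reduction would still serve Theorem~\ref{thm:local_alt_power}), whereas the paper's lemma is a standalone statement about $\Omega_n^\circ$.

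One caution: for $p<\infty$ your one-line justification that $(\d{P}_n-P_0)|G_n|^p=\fasterthan(1)$ ``by a uniform law of large numbers'' is not correct as literally stated, because $|G_n|^p$ is a data-dependent functional of the whole empirical process and is not a member of the fixed Glivenko--Cantelli class $\{D_{a,0}^*:a\in\s{A}_0\}$; Glivenko--Cantelli for that class controls $\sup_a|(\d{P}_n-P_0)D_{a,0}^*|$, not integrals of $|G_n|^p$ against $F_n-F_0$. The correct ingredients are the ones you also name---asymptotic tightness (equivalently, asymptotic $d_0$-equicontinuity) of $G_n$ together with $\|F_n-F_0\|_\infty\inprob 0$---combined via a partition of $\s{A}_0$ into sets of small $F_0$-mass, exactly the device the paper uses for $Z_n$ in the second half of Lemma~\ref{lemma:Zn_conv}. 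With that made explicit, and with the conditional-weak-convergence-in-probability argument for $T_{n,\alpha,p}$ carried out as you indicate (it is an acceptable, arguably cleaner, alternative to the paper's unconditional treatment), your plan yields the theorem.
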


\subsection{Asymptotic behavior under local alternatives}

Finally, we demonstrate that the proposed test has power to detect local alternatives approaching the null at the rate $n^{-1/2}$. We let $h : \s{O} \to \d{R}$ be a score function satisfying $P_0h = 0$ and $P_0( h^2)< \infty$. We suppose that the local alternative distributions $P_n$ satisfy
\begin{equation}\lim_{n\to\infty} \int \left[ n^{1/2} \left( dP_n^{1/2} - dP_0^{1/2}\right) - \tfrac{1}{2} h \, dP_0^{1/2} \right]^2 = 0\label{eq:local_alt}\end{equation}
for some $P_0 \in \s{M}_0$. We then have the following result.
\begin{thm}[Weak convergence of $n^{1/2}\Omega_n^\circ$ under local alternatives] \label{thm:weak_conv_local}
If for each $n$, $(O_1, \dotsc, O_n)$ are independent and identically distributed from  $P_n$ satisfying \eqref{eq:local_alt} and the conditions of Theorem~\ref{thm:test_size} hold, then $\left\{ n^{1/2} \Omega_n^\circ(a): a \in \s{A}_0\right\}$ converges weakly under $P_n$ in $\ell^\infty(\s{A}_0)$ to a Gaussian process $\overline{Z}_{0,h}$ with mean $E[\overline{Z}_{0,h}(a)] = P_0 \left( h D_{a,0}^*\right)$ and covariance $\Sigma_0(s,t) := P_0 \left[ D_{s,0}^* D_{t,0}^*\right]$.
\end{thm}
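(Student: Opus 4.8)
The plan is to reduce the statement to Le~Cam's third lemma applied to the influence-function term, after transferring the asymptotic-linearity expansion of Theorem~\ref{thm:weak_conv_omega} from $P_0$ to the local alternatives $P_n$. Condition~\eqref{eq:local_alt} is differentiability in quadratic mean, so by Le~Cam's first lemma the product measures $(P_n^{\otimes n})$ and $(P_0^{\otimes n})$ are mutually contiguous and $\log(dP_n^{\otimes n}/dP_0^{\otimes n}) = n^{1/2}\d{P}_n h - \tfrac12 P_0 h^2 + o_{P_0}(1)$. Because $P_0 \in \s{M}_0$, Proposition~\ref{prop:equivalence} gives $\Omega_0 \equiv 0$ on $\d{R}$, so the conclusion of Theorem~\ref{thm:weak_conv_omega} may be written $\sup_{a\in\s{A}_0}\big| n^{1/2}\Omega_n^\circ(a) - n^{1/2}\d{P}_n D_{a,0}^*\big| = o_{P_0}(1)$. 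This is convergence to zero in $P_0$-probability of a quantity built from the data and the cross-fitted nuisance estimators, and its proof uses only (A1)--(A2), (A4), the $P_0$-Donsker property of $\{D_{a,0}^*: a\in\s{A}_0\}$, and cross-fitting empirical-process bounds that do not see which member of $\s{M}$ generated the data; I would therefore argue (see the last paragraph) that the same expansion holds with $(O_1,\dots,O_n)\iidsim P_n$, and then contiguity upgrades it to $\sup_{a\in\s{A}_0}\big| n^{1/2}\Omega_n^\circ(a) - n^{1/2}\d{P}_n D_{a,0}^*\big| = o_{P_n}(1)$. It then suffices to identify the weak limit of $\{n^{1/2}\d{P}_n D_{a,0}^* : a\in\s{A}_0\}$ under $P_n$.

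To that end I would work under $P_0$ and establish joint weak convergence of the pair $\big(\{n^{1/2}\d{P}_n D_{a,0}^*\}_{a\in\s{A}_0},\ \log(dP_n^{\otimes n}/dP_0^{\otimes n})\big)$ in $\ell^\infty(\s{A}_0)\times\d{R}$. The first coordinate converges weakly to the mean-zero Gaussian process $Z_0$ with covariance $\Sigma_0$, since $\{D_{a,0}^*: a\in\s{A}_0\}$ is $P_0$-Donsker: it depends on $a_0$ only through the VC-subgraph indicators $I_{(-\infty,a_0]}$ and the bounded monotone maps $a_0\mapsto F_0(a_0)$ and $a_0\mapsto\Omega_0(a_0)$, composed with functions that are uniformly bounded under (A1). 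The second coordinate converges in distribution, by the expansion above and the ordinary central limit theorem for $n^{1/2}\d{P}_n h$ (valid because $P_0 h=0$ and $P_0 h^2<\infty$), to $N\big(-\tfrac12 P_0 h^2,\, P_0 h^2\big)$. Joint asymptotic normality follows from the multivariate central limit theorem applied to finite-dimensional linear combinations $\sum_j c_j D_{a_j,0}^*$ together with $h$, and the limiting covariance between $Z_0(a)$ and the log-likelihood limit is $P_0(D_{a,0}^* h)$, using $P_0 D_{a,0}^* = 0 = P_0 h$.

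The $\ell^\infty(\s{A}_0)$-version of Le~Cam's third lemma --- which follows from the finite-dimensional statement together with the asymptotic tightness of $\{n^{1/2}\d{P}_n D_{a,0}^*\}_a$, a property preserved under contiguity --- then gives that under $P_n$ the process $\{n^{1/2}\d{P}_n D_{a,0}^* : a\in\s{A}_0\}$ converges weakly to a Gaussian process with the same covariance $\Sigma_0(s,t) = P_0(D_{s,0}^* D_{t,0}^*)$ and mean function $a\mapsto P_0(h D_{a,0}^*)$, i.e.\ to $\overline{Z}_{0,h}$. Combining this with the $o_{P_n}(1)$ remainder from the first paragraph and Slutsky's lemma yields that $\{n^{1/2}\Omega_n^\circ(a): a\in\s{A}_0\}$ converges weakly to $\overline{Z}_{0,h}$ under $P_n$, as claimed.

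I expect the main obstacle to be justifying that the asymptotic-linearity expansion of Theorem~\ref{thm:weak_conv_omega} survives passage to the triangular array $(O_1,\dots,O_n)\iidsim P_n$, given that $\mu_{n,v}$ and $g_{n,v}$ are now trained on $P_n$-data. The resolution rests on three observations: (i) (A2) and (A4) are $P_0$-in-probability statements about deterministic functionals of the estimators and hence transfer under contiguity; (ii) (A1) is a deterministic boundedness/function-class requirement, insensitive to which distribution in $\s{M}$ produced the data; and (iii) the empirical-process remainders controlled by cross-fitting depend only on the fold sizes and uniform envelopes, not on the underlying distribution. With these in hand, mutual contiguity of $(P_n^{\otimes n})$ and $(P_0^{\otimes n})$ converts every $o_{P_0}(1)$ remainder appearing in that proof into an $o_{P_n}(1)$ remainder, and the rest is standard Le~Cam theory. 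A secondary point is to invoke the correct process-level form of Le~Cam's third lemma (finite-dimensional convergence plus asymptotic tightness), rather than its one-dimensional version.
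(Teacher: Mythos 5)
Your proposal is correct and follows essentially the same route as the paper: the paper likewise uses Theorem~\ref{thm:weak_conv_omega} with $\Omega_0 \equiv 0$ to get the uniform linearization under $P_0$, transfers the $\fasterthan(n^{-1/2})$ remainder to $o_{P_n}(1)$ by contiguity (Lemmas~3.10.11 and Theorem~3.10.5 of van der Vaart and Wellner), and then obtains the shifted limit of $\{\d{G}_n D_{a,0}^*\}$ under $P_n$ from the Donsker property via Theorem~3.10.12, exactly the process-level Le~Cam third lemma you re-derive by hand. Your worry about re-verifying (A1)--(A2), (A4) under $P_n$ is unnecessary --- the remainder is a fixed measurable function of the data that is $\fasterthan(1)$ under $P_0$, so contiguity alone converts it to $o_{P_n}(1)$, which is precisely how the paper handles it.
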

The limiting process $\overline{Z}_{0,h}$ in Theorem~\ref{thm:weak_conv_local} is equal in distribution to $\{ Z_0(a) + P_0 \left( h D_{a,0}^*\right) : a \in \s{A}_0\}$, where $Z_0$ is the limit Gaussian process when generating data under $P_0$ from Theorem~\ref{thm:weak_conv_omega}. 

Theorem~\ref{thm:weak_conv_local} leads to the following local power result for the proposed test.
\begin{thm}[Power under local alternatives]\label{thm:local_alt_power}
If the conditions of Theorem~\ref{thm:weak_conv_local} hold and $T_{0,\alpha, p}$ is the $1-\alpha$ quantile of $\| Z_0 \|_{F_0, p}$, then $P_n\left(n^{1/2}\|\Omega_n^\circ \|_{F_n, p} > T_{n, \alpha, p} \right) \longrightarrow P\left( \|\overline{Z}_{0,h}\|_{F_0, p} >T_{0,\alpha,p} \right) $.
\end{thm}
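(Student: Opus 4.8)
The plan is to combine the weak-convergence statement of Theorem~\ref{thm:weak_conv_local} with the same two auxiliary approximations (replacing $F_0$ by $F_n$ in the norm, and $T_{0,\alpha,p}$ by $T_{n,\alpha,p}$) that were used to pass from Theorem~\ref{thm:weak_conv_omega} to Theorem~\ref{thm:test_size}, now carried out under the local alternatives $P_n$ rather than under $P_0$. First I would observe that by Theorem~\ref{thm:weak_conv_local}, under $P_n$ the process $\{n^{1/2}\Omega_n^\circ(a) : a \in \s{A}_0\}$ converges weakly in $\ell^\infty(\s{A}_0)$ to $\overline{Z}_{0,h}$; since the map $z \mapsto \|z\|_{F_0,p}$ is continuous on $\ell^\infty(\s{A}_0)$ for every $p \in [1,\infty]$, the continuous mapping theorem gives $n^{1/2}\|\Omega_n^\circ\|_{F_0,p} \indist \|\overline{Z}_{0,h}\|_{F_0,p}$ under $P_n$.

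Second, I would argue that the two approximation errors that were shown to be $\fasterthan(n^{-1/2})$ (respectively $\fasterthan(1)$) under $P_0$ in the proof of Theorem~\ref{thm:test_size} remain negligible under $P_n$. The key fact is \emph{contiguity}: condition \eqref{eq:local_alt} is precisely the statement that $\{P_n\}$ is a differentiable-in-quadratic-mean (i.e.\ locally asymptotically normal) perturbation of $P_0$, so by Le Cam's first lemma $P_n$ is mutually contiguous to $P_0^{\otimes n}$. Hence any statistic that is $\fasterthan(1)$ or $\fasterthan(n^{-1/2})$ under $P_0$ is also $\fasterthan(1)$ or $\fasterthan(n^{-1/2})$ under $P_n$. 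Applying this to $n^{1/2}\left(\|\Omega_n^\circ\|_{F_n,p} - \|\Omega_n^\circ\|_{F_0,p}\right)$ and to $T_{n,\alpha,p} - T_{0,\alpha,p}$, both of which were controlled in the proof of Theorem~\ref{thm:test_size}, yields $n^{1/2}\|\Omega_n^\circ\|_{F_n,p} - n^{1/2}\|\Omega_n^\circ\|_{F_0,p} \inoutprob 0$ and $T_{n,\alpha,p} \inoutprob T_{0,\alpha,p}$ under $P_n$, where I write $\inoutprob$ for convergence in $P_n$-probability.

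Third, I would assemble these pieces: writing $n^{1/2}\|\Omega_n^\circ\|_{F_n,p} = n^{1/2}\|\Omega_n^\circ\|_{F_0,p} + \fasterthan_{P_n}(1)$ and $T_{n,\alpha,p} = T_{0,\alpha,p} + \fasterthan_{P_n}(1)$, Slutsky's theorem gives $n^{1/2}\|\Omega_n^\circ\|_{F_n,p} - T_{n,\alpha,p} \indist \|\overline{Z}_{0,h}\|_{F_0,p} - T_{0,\alpha,p}$ under $P_n$. The conclusion then follows from the portmanteau theorem provided the limiting distribution function of $\|\overline{Z}_{0,h}\|_{F_0,p}$ has no atom at $T_{0,\alpha,p}$. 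Since $\overline{Z}_{0,h} \stackrel{d}{=} Z_0 + P_0(h D_{\cdot,0}^*)$ is a fixed (deterministic) shift of the Gaussian process $Z_0$, and the distribution function of $\|Z_0\|_{F_0,p}$ is assumed (in the hypotheses of Theorem~\ref{thm:test_size}, which are invoked here) strictly increasing and continuous near $T_{0,\alpha,p}$, I would need to check that the shifted process $\|\overline{Z}_{0,h}\|_{F_0,p}$ inherits continuity of its distribution function at that point; this follows because adding a bounded deterministic function cannot create an atom in the law of the $L_p(F_0)$-norm of a Gaussian process whose unshifted norm already has a continuous law there. Combining gives $P_n(n^{1/2}\|\Omega_n^\circ\|_{F_n,p} > T_{n,\alpha,p}) \to P(\|\overline{Z}_{0,h}\|_{F_0,p} > T_{0,\alpha,p})$, as claimed.

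\textbf{Main obstacle.} The delicate point is transferring the negligibility of the two approximation errors from $P_0$ to $P_n$. Under $P_0$ these bounds were established using properties of the empirical process and of the nuisance estimators that were stated for i.i.d.\ data from $P_0$; under $P_n$ the data are i.i.d.\ from a moving distribution, so one cannot simply reapply those lemmas verbatim. The clean way around this is the contiguity argument sketched above --- \eqref{eq:local_alt} plus $P_0 h = 0$, $P_0 h^2 < \infty$ gives mutual contiguity via Le Cam's first lemma, and contiguity preserves $\fasterthan(1)$ statements --- but one must be careful that the quantities being transferred really were shown to be $\fasterthan_{P_0}(1)$ (as opposed to merely bounded) in the proof of Theorem~\ref{thm:test_size}; for $T_{n,\alpha,p} - T_{0,\alpha,p}$ this in turn relies on $\Sigma_n \inprob \Sigma_0$ uniformly and on continuity of the quantile map, both of which hold under the stated assumptions and are likewise preserved under contiguity.
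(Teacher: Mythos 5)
Your proposal is correct and follows essentially the same route as the paper's own proof: invoke Theorem~\ref{thm:weak_conv_local} with the continuous mapping theorem, transfer the two $P_0$-negligibility statements (the $F_n$-versus-$F_0$ norm discrepancy from Lemma~\ref{lemma:Omegan_norm} and $T_{n,\alpha,p}\to T_{0,\alpha,p}$ from the proof of Theorem~\ref{thm:test_size}) to the local alternatives via contiguity of $P_n$ to $P_0$, then apply Slutsky and conclude. Your added remark about verifying that $T_{0,\alpha,p}$ is a continuity point of the law of $\|\overline{Z}_{0,h}\|_{F_0,p}$ is a point the paper's proof passes over silently, so flagging it is a mild improvement rather than a deviation.
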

We note that $P_0\left( \|\overline{Z}_{0,h}\|_{F_0, p} >T_{0,\alpha,p} \right) >\alpha$. Therefore, Theorem~\ref{thm:local_alt_power} implies that, if the two nuisance parameters converge fast enough to the true functions, our test has non-trivial asymptotic power to detect local alternatives approaching the null at the rate $n^{-1/2}$. 

\section{Simulation studies}\label{sec:simulation}

\subsection{Simulation study I: mixed continuous-discrete exposure}

We conducted two simulation studies to examine the finite-sample behavior of the proposed procedure under various null and alternative hypotheses. The general form of our first simulation procedure was as follows. We generated three continuous covariates $W \in \d{R}^3$ from a multivariate normal distribution with mean $(0,0,1)^T$ and identity covariance. In order to generate $A$ given $W$, we define $\lambda_{\beta,\kappa}(w) := \kappa + 2(1 - \kappa) \n{logit}^{-1}\left(\beta^T w - \beta_3\right)$, where $\beta \in \d{R}^3$, $\kappa \in (0, 1)$, and $\n{logit}(x) := \log[x / (1-x)]$, and we define $G_{\beta, \kappa}(u, w) := \lambda_{\beta, \kappa}(w) u + [1-\lambda_{\beta, \kappa}(w)] u^2$ and $G_{\beta,\kappa}^{-1}$ its inverse with respect to the first argument. Finally, we define the mixed discrete-continuous distribution function $F_0$ as $F_0(a) := 0.2\times \left[I_{[0,\infty)}(a) +  I_{[0.5,\infty)}(a) + I_{[1,\infty)}(a)\right] + 0.4 \times B(a; 2,2)$, where $B$ is the distribution function of a beta random variable, and we define $F_0^-$ is the generalized inverse corresponding to $F_0$.  Given $W$, we then simulated $A$ as $F_0^- \circ G_{\beta, \kappa}^{-1}(Z, W)$, where  $Z$ was a Uniform$(0,1)$ random variable independent of $W$, so that $A$ had marginal mass $0.2$ each at $0, 0.5,$ and $1$, and the remaining $0.4$ mass was distributed as $\n{Beta}(2,2)$. For all data generating processes, we set $\kappa = 0.1$ and $\beta = (-1,1,-1)$.

We generated $Y$ given $A$ and $W$ from a linear model with possible interactions and a possible quadratic component. Defining $\mu_{\gamma_1, \gamma_2, \gamma_3}(a, w) :=  \gamma_1^T \bar{w}+  \left( \gamma_2^T \bar{w}\right) a + \gamma_3 (a - 0.5)^2$, where $\bar{w} := (1, w)$, $\gamma_1$ and $\gamma_2$ are elements of $\d{R}^4$, and $\gamma_3 \in \d{R}$, we generated $Y$ from a normal distribution with mean $\mu_{\gamma_1, \gamma_2,\gamma_3}(A, W)$ and variance $ 1 + \left|\mu_{\gamma_1, \gamma_2, \gamma_3}(A,W) \right|$. Given these definitions, we then have $\theta_0(a) = \gamma_{1,1} + \gamma_{1,4} + \left( \gamma_{2,1} + \gamma_{2,4}\right) a + \gamma_3(a - 0.5)^2$. Hence,  $H_0$ holds if and only if $\gamma_{2,1} = - \gamma_{2,4}$ and $\gamma_3 = 0$. We set $\gamma_1 = (0, 2, 2, -2)^T$ for all simulations, and we considered  five combinations of $\gamma_2$ and $\gamma_3$. First, we set $\gamma_2 = (2, 2, 2, -2)^T$ and $\gamma_3 = 0$. We call this the \emph{weak null} because $\mu_0$ depends on $a$ even though $\theta_0$ does not. Second, we simulated data under the strong null by setting $\gamma_2 = (0,0,0,0)^T$ and $\gamma_3 = 0$, so that neither $\mu_0$ nor $\theta_0$ depend on $a$. We also simulated data under four alternative hypotheses. In the first three alternative hypotheses, we set $\gamma_3 = 0$, but varied $\gamma_{2,1} + \gamma_{2,4}$, which is the slope of $\theta_0$. We call these \emph{weak}, \emph{moderate}, and \emph{strong} (linear) alternatives. Finally, we set $\gamma_3 = 2$ and $\gamma_2 = (1,1,-1,-1)^T$, which we call the \emph{quadratic} alternative. These simulation settings are summarized for convenience in Table~\ref{tab:sim_settings}.

\begin{table}
\centering
\begin{tabular}{cccccc}
Setting name &  $\gamma_2$ & $\gamma_3$ & $\|\Omega_0\|_{F_0, 1}$ & $\|\Omega_0\|_{F_0, 2}$ & $\|\Omega_0\|_{F_0, \infty}$ \\
\hline
Weak null  & $(2, 2, 2, -2)^T$ & 0 & 0 &0  & 0\\
Strong null  & $(0, 0, 0, 0)^T$ & 0 & 0& 0 & 0\\
Weak alternative & $(0.5, 1, -1, -0.25)^T$ & 0 & 0.019 & 0.023 & 0.036 \\
Moderate alternative  & $(1, 1, -1, -0.5)^T$ & 0 &0.043 & 0.050 & 0.070 \\
Strong alternative & $(2, 1, -1, -1)^T$  & 0 &0.10 & 0.11 & 0.14\\
Quadratic alternative & $(1, 1, -1, -1)^T$ & 2 & 0.03 & 0.04 & 0.06 \\
\hline
\end{tabular}
\caption{Summary of the six simulation settings used to generate the outcome. We note that $\gamma_2 =  (0, 2, 2, -2)^T$ for all settings. For context, $\n{Var}(Y) \in (4,4.5)$ for all alternative simulation settings.}
\label{tab:sim_settings}
\end{table}

For each sample size $n \in \{100, 250,500, 750, 1000, 2500, 5000\}$ and each of the settings listed in Table~\ref{tab:sim_settings}, we generated 1000 datasets using the process described above. For each dataset, we estimated the pair of nuisance parameters $(\mu_n, g_n)$ in the following ways. First, we estimated $\mu_n$ using a correctly specified linear regression, and $g_n$ using maximum likelihood estimation with a correctly specified parametric model for $\beta$ with $\kappa$ set to the true data-generating value. Second, we used the same correctly-specified procedure for $g_n$, but used an incorrectly specified linear regression to estimate $\mu_n$ by excluding the interactions between $A$ and $W$ and $W_3$ from the regression. Third, we used the correctly specified linear regression to estimate $\mu_n$, but used an incorrectly specified parametric model for $g_n$ by maximizing the incorrectly specified likelihood $(\alpha_1, \alpha_2) \mapsto \sum_{i=1}^n \log \left\{2U_i + \left(1 - 2U_i\right) \n{logit}^{-1}\left(\alpha_1 W_1 + \alpha_2 W_2\right)\right\}$ for $U_i = F_n(A_i)$. Fourth, we used the incorrectly specified parametric models for both $\mu_n$ and $g_n$. Fifth, we estimated $\mu_n$ and $g_n$ nonparametrically. To estimate $\mu_n$ nonparametrically, we used SuperLearner \citep{vanderlaan2007super} with a library consisting of linear regression, linear regression with interactions, a generalized additive model, and multivariate adaptive regression splines. To estimate $g_n$ nonparametrically, we used an adaptation of the method described in \cite{diaz2011super} that allows for mass points. For each of these five pairs of estimation strategies for $\mu_n$ and $g_n$, we used the method described in this article with $p \in \{1, 2, \infty\}$ to test the null hypothesis. For nonparametric nuisance estimation, we used both the cross-fitted estimator $\Omega_n^\circ$ and the non-cross-fitted estimator $\Omega_n$ in order to assess the effect of cross-fitting on type I error. Finally, we compared our test to a test based on dichotomizing $A$. Specifically, we defined $\bar{A} := I_{[0.5,1]}(A)$, and used Targeted Minimum-Loss based Estimation (TMLE) \citep{vanderlaan2011tmle} to test the null hypothesis that $E_0[E_0(Y \mid \bar{A} = 0, W)] = E_0[E_0(Y \mid \bar{A} = 1, W)]$. We used cross-fitted SuperLearners with the same library as above as the nuisance estimators for TMLE.

\begin{figure}[t!]
\centering
\includegraphics[width=\linewidth]{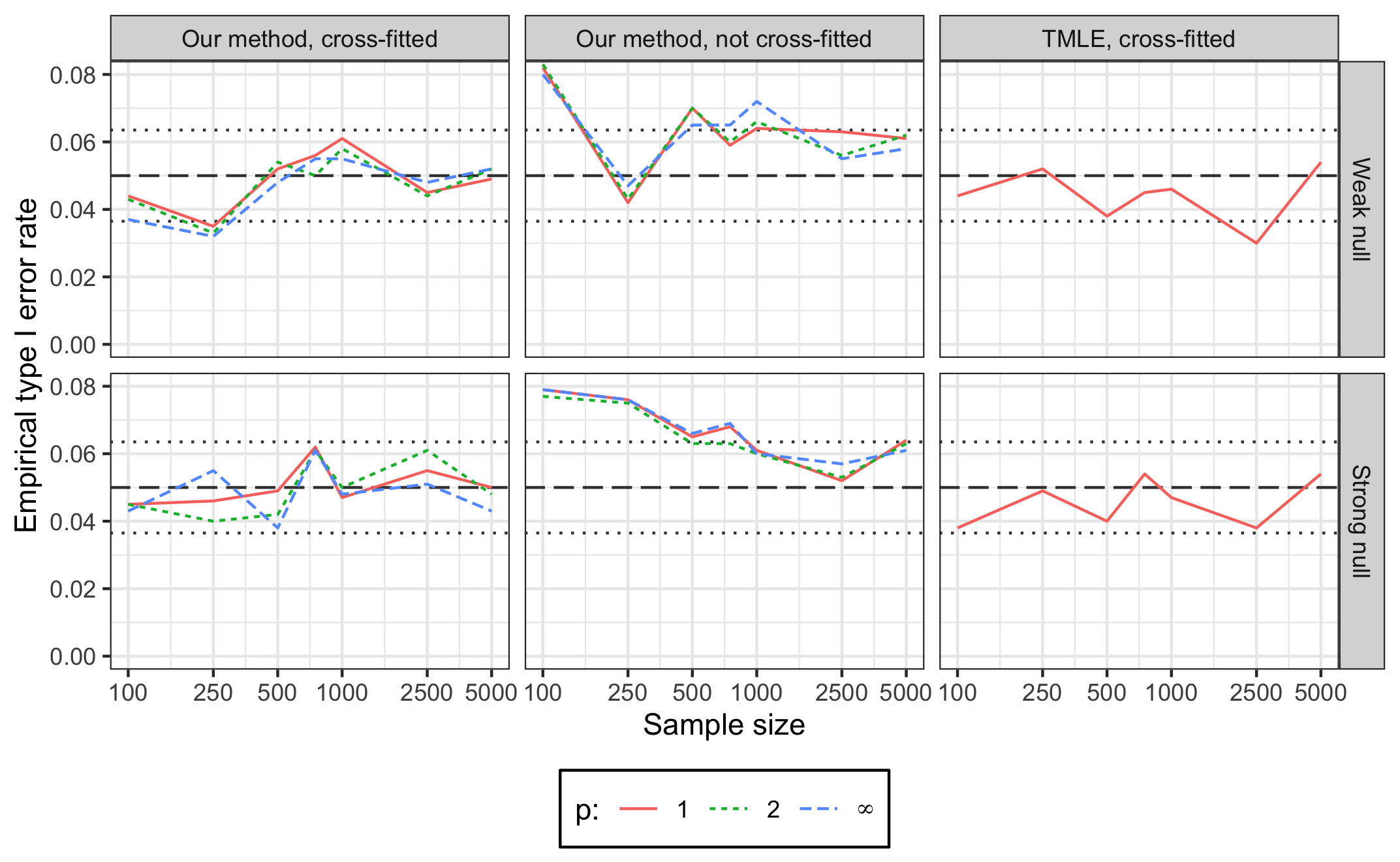}
\caption{Empirical type I error rate of nominal $\alpha = 0.05$ level tests using the nonparametric nuisance estimators. The left column is our cross-fitted test with nonparametric nuisances estimators, the middle column is the same without cross-fitting, and the right column is a TMLE-based test using a dichotomized exposure. Horizontal wide-dash line indicates then nominal 0.05 test size, and horizontal dotted lines indicate expected sampling error bounds were the true size 0.05.}
\label{fig:sim_size_np}
\end{figure}

We now turn to the results of the simulation study. We focus on the results from the use of nonparametric nuisance estimators, since this is what we suggest to use in practice. The results from the use of parametric nuisance estimators were in line with expectations based on our theoretical results; full details of the results may be found in Supplementary Material. Figure~\ref{fig:sim_size_np} displays the empirical type I error rate for the three estimators with nonparametric nuisance estimators. Our tests with cross-fitted nuisance estimators (first column) had empirical error rates within Monte Carlo error of the nominal error rate at all sample sizes and under both the strong and weak nulls. This empirically validates the large-sample theoretical guarantee of Theorem~\ref{thm:test_size}, and also indicates that the type I error of the method is valid even for small sample sizes. However, the nonparametric nuisance estimators without cross-fitting (second column) had type I error significantly larger than $0.05$ for $n \leq 1000$. This suggests that the cross-fitting procedure reduced the bias of the estimator of $\Omega_0$ and/or of the bias of the estimator of the quantile $T_{0,\alpha,p}$ for small and moderate sample sizes, resulting in improved type I error rates. The TMLE-based test with a dichotomized exposure also had empirical error rates within Monte Carlo error of the nominal rate for all sample sizes under both types of null hypotheses, as expected.

Figure~\ref{fig:sim_power_np} displays the empirical power using the nonparametric nuisance estimators. We omitted the estimator without cross-fitting, since this estimator had poor type I error control. Power was generally very low for $n = 100$, but increased with sample size and with distance from the null. For the three linear alternatives (first three rows), our test had only slightly (i.e.\ 5-10 percentage points) better power than the TMLE-based test using a dichotomized exposure.  This makes sense, since the true effect size induced by dichotomization of the exposure increased with the slope of $\theta_0$ in the case that $\theta_0$ was linear. However, for the quadratic alternative, the test proposed here had substantially larger power than the TMLE-based test. For example, at sample size $n=1000$, the TMLE-based test had power 0.09, while our test had power between 0.25 and 0.35, and at sample size $n=5000$, the TMLE-based test had power 0.25, while our test had power near 0.85. This can be explained by the fact that the true effect size induced by dichotomization for the quadratic alternative was close to zero because the axis of symmetry for the parabolic effect curve was 0.5, the same as the point of dichotomization. This suggests that, as has been previously noted (e.g.\ \citealp{fedorov2009consequence}), dichotomization can result in substantial loss of power for certain types of data-generating mechanisms. Discretizing the exposure into more categories would increase the power of the TMLE-based test, but in practice it is hard to know what discretization will yield acceptable power without knowing the true form of $\theta_0$.

\begin{figure}[t!]
\centering
\includegraphics[width=\linewidth]{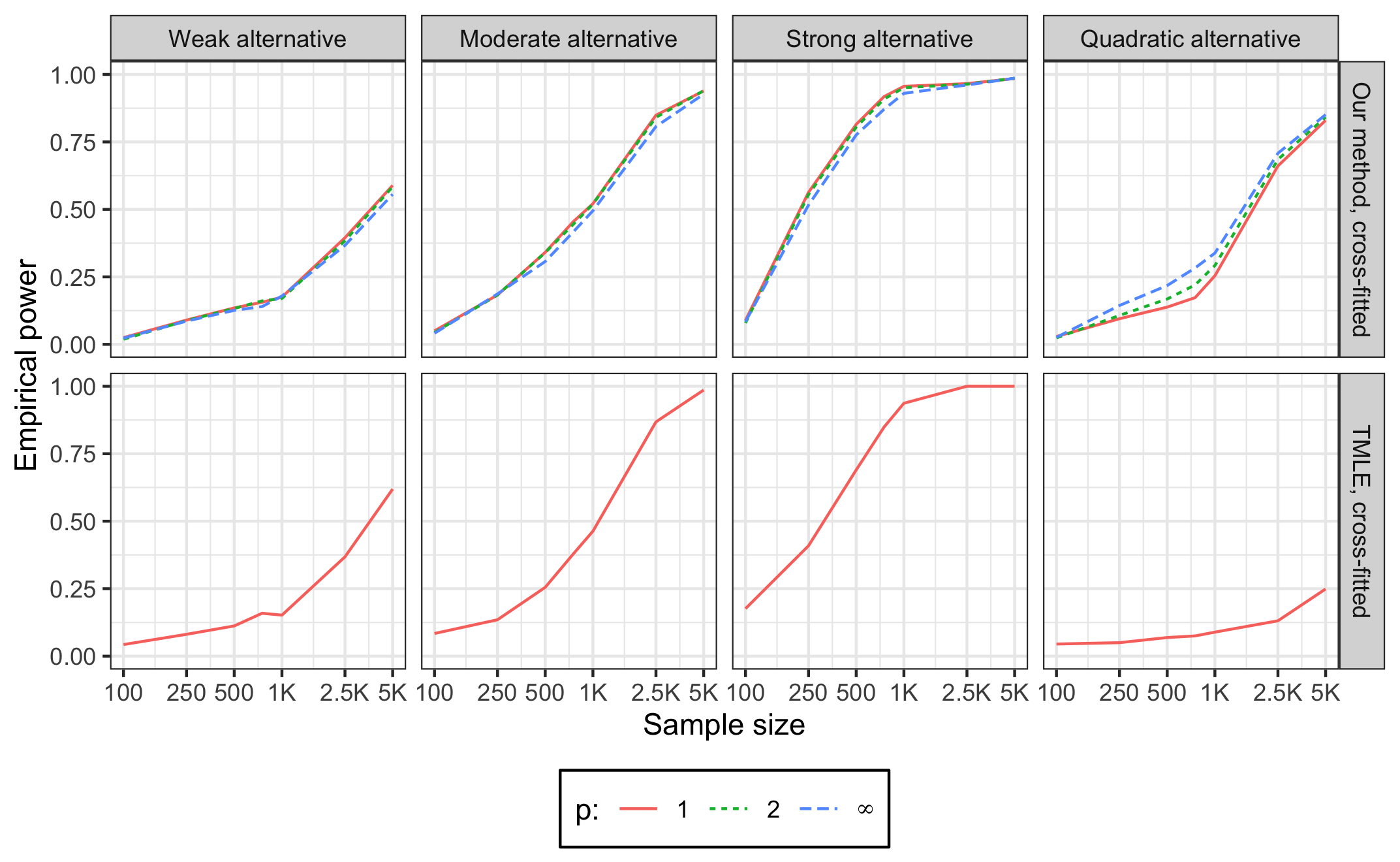}
\caption{Empirical power of nominal $\alpha = 0.05$ level tests using nonparametric nuisance estimators. Columns indicate the alternative hypothesis used to generate the data, and rows indicate the method used to test the null hypothesis.}
\label{fig:sim_power_np}
\end{figure}

Overall, we observed little systematic difference in type I error rates between the three values of $p$ using either type of nuisance estimator for our test. For the linear alternatives, the test with $p=\infty$ had consistently slightly smaller power than that with $p=1$ or $p=2$. However, for the quadratic alternative, the test with $p=\infty$ had consistently larger power than the others. Therefore, which value of $p$ yields the greatest power depends on the shape of the true effect curve. As noted previously, we recommend using $p=\infty$ in practice due in part to the improved power against nonlinear alternatives.

\subsection{Simulation study II: discrete exposure with many levels}

In a second simulation study, we assessed the effect of increasing the number of levels of a discrete exposure on the properties of tests of the null hypothesis of no average causal effect. We simulated three covariates $W \in \d{R}^3$ from independent uniform distributions on $[-1,1]$, $[-1,1]$, and $[0,2]$, respectively. For a number of levels $k$, we set $P_0(A = a \mid W = w) = \n{logit}^{-1}[(.5 - a)(\beta^T w) ] / h(w)$ for $a \in \{1/k, 2/k,\dotsc, 1\}$, and $P_0(A = a \mid W = w) = 0$ otherwise, where $h(w)$ is a normalizing constant. Given $A$ and $W$, we simulated $Y$ as in simulation study I described above. For each $n \in \{250, 500, 750\}$, we considered eight values of $k$ between $k = 5$ and $k = n/2$, which allowed us to assess the effect of the number of discrete components of the exposure on the properties of testing procedures. For each setting, we simulated 1000 datasets,  and used two methods to test the null hypothesis of no average causal effect of $A$ on $Y$. First, we used the method described here with $p \in \{1,2,\infty\}$. Second, we used a chi-squared test based on an augmented IPW  (AIPW) estimator of $\theta_0(a_j)$ for each $a_j$ in the support of $A$. Since the exposure was discrete, the AIPW-based test had \emph{asymptotically} valid size for any fixed $k$; our goal was to assess its finite-sample performance as a function of $k$.  For both tests, we used cross-fitted maximum likelihood estimators from correctly-specified parametric models for the nuisance estimators.

\begin{figure}[ht!]
\centering
\includegraphics[width=\linewidth]{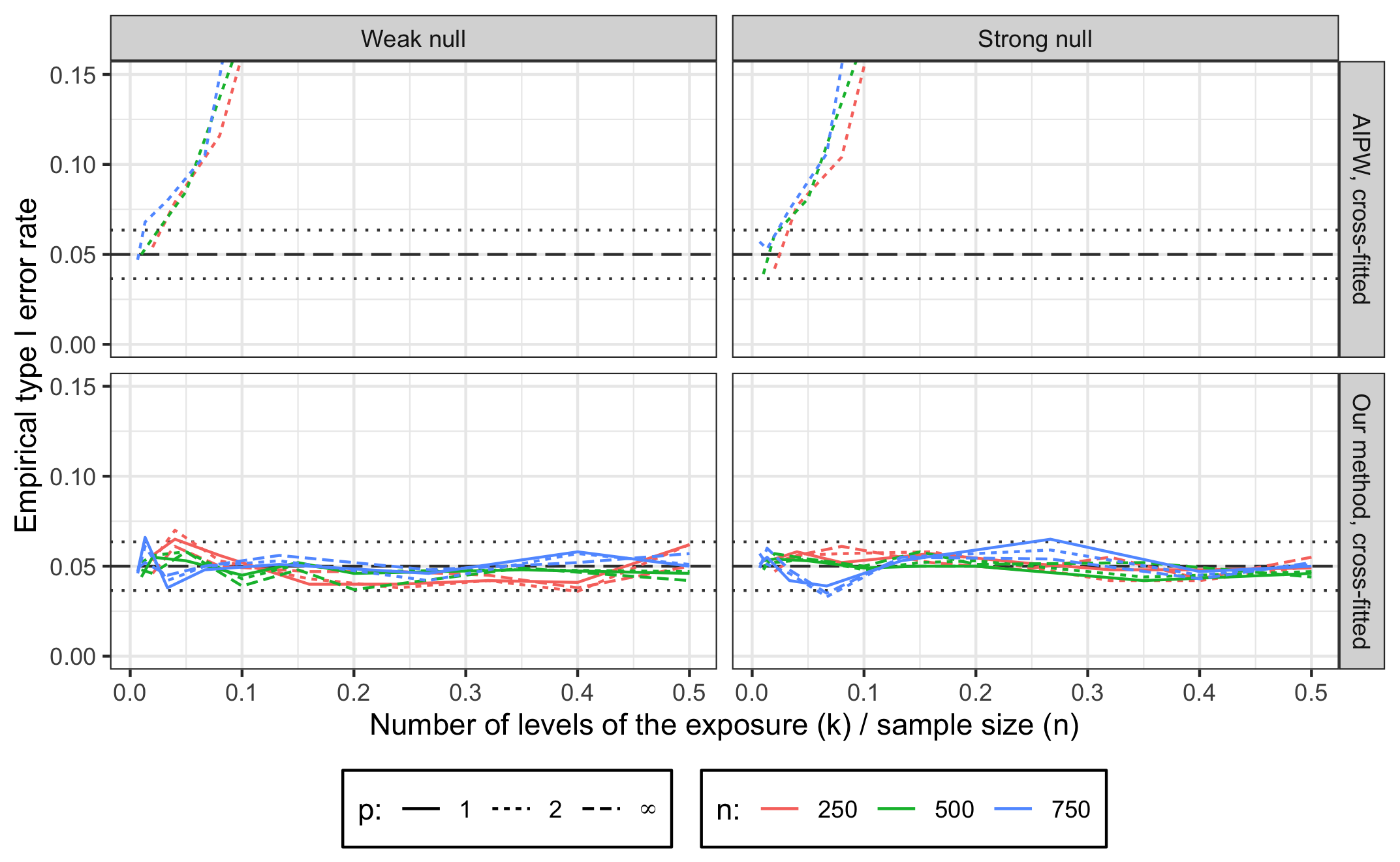}
\caption{Empirical type I error rate of nominal $\alpha = 0.05$ level tests in the second simulation study. The $x$-axis is the ratio of $k$, the number of levels of the exposure, to $n$, the sample size. Columns indicate the null hypothesis used to generate the data, and rows indicate the method used to test the null hypothesis.}
\label{fig:sim_size_disc}
\end{figure}

Figure~\ref{fig:sim_size_disc} displays the empirical type I error of the two methods under the weak and strong null hypotheses. Our methods (bottom row) had type I error near 0.05 for all $n$ and $k$ considered. AIPW, on the other hand, only had valid type I error for the smallest value of $k$ considered ($k = 5$). As the number of levels of $A$ grew, the type I error rate of the AIPW-based test rapidly grew to 1. In Supplementary Material, we display the power of our test, which was constant in $k$, and for all settings considered increased with $n$. Therefore, use of the methods proposed here is not limited to exposures with a continuous component, but should be considered for all discrete exposures with more than a few values.


\section{BMI and T-cell response in HIV vaccine studies}\label{sec:bmi}

Numerous scientific studies have found a negative association between obesity or BMI and immune responses to vaccination. In \cite{jin2015multiple}, the authors found that low BMI ($<25$) participants in early-phase HIV vaccine trials had a higher rate of CD4+ T cell response than high BMI ($\geq 30$) participants. They also found a significant effect of BMI in a logistic regression of CD4+ responses on sex, age, BMI, vaccination dose, and number of vaccinations (OR: 0.92; 95\% CI: 0.86--0.98; $p$=0.007). However, as discussed in the introduction, this odds ratio only has a causal interpretation under strong parametric assumptions.

In \cite{westling2020isotonic}, the authors estimated the causal dose-response curve $\theta_0$, adjusting for the same set of confounders as did \cite{jin2015multiple}, under the assumption that it is decreasing. However, they did not assess the null hypothesis that the curve was flat. We note that \cite{westling2020isotonic} estimated $\theta_0(20) - \theta_0(35)$, i.e.\ the difference between the probabilities of having a positive CD4+ immune response under assignment to BMIs of 20 and 35, to be 0.22, with 95\% confidence interval 0.03--0.41. However, this confidence interval is only valid under the assumption that $\theta_0(20) > \theta_0(35)$, and hence cannot be used as evidence against the null hypothesis that $\theta_0$ is flat. Furthermore, the fact that the lower end of this confidence interval is relatively close to zero suggests that there may not actually be strong evidence against this null. Here, we formally assess this null hypothesis using the same data as \cite{westling2020isotonic}.

The data consist of pooled vaccine arms from 11 phase I/II clinical trials conducted through the HIV Vaccine Trials Network (HVTN). Descriptions of these trials may be found in \cite{jin2015multiple} and \cite{westling2020isotonic}. CD4+ and CD8+ T-cell responses at the first visit following administration of the last vaccine dose were measured using validated intracellular cytokine staining, and these continuous responses were converted to binary indicators of whether there was a significant change from baseline using the method described in \cite{jin2015multiple}. After excluding three participants with missing BMI and participants with missing immune response, our analysis datasets consisted of a total of $n=439$ participants for the analysis of CD4+ responses and $n=462$ participants for CD8+ responses.

We tested the null hypotheses that there is a causal effect of BMI on CD4+ and CD8+ T-cell responses using the method developed in this paper with $p = \infty$ and $V = 10$ folds. To estimate the outcome regression $\mu_0$ and the propensity score $g_0$, we used SuperLearner \citep{vanderlaan2007super, diaz2011super} with flexible libraries consisting of generalized linear models, generalized additive models, multivariate regression splines, random forests, and gradient boosting. For the analysis of the effect of BMI on CD4+ responses, we found $p = 0.16$, and for the analysis of the effect of BMI on CD8+ responses, we found $p = 0.22$. Hence, we do not find evidence of a causal effect of BMI on the probability of having a positive immune response in these data. Plots of $\Omega_n$ are presented in Supplementary Material.

\section{Discussion}\label{sec:discussion}

We have presented a nonparametric method for testing the null hypothesis that a causal dose-response curve is flat, for use in observational studies with no unobserved confounding. The key idea behind our test was to translate the null hypothesis on the parameter of interest, which is not a pathwise differentiable parameter in the nonparametric model, into a null hypothesis on a primitive parameter, which is pathwise differentiable.

In addition to permitting the use of methods and theory for pathwise differentiable parameters, using the primitive function gives our tests non-trivial power to detect alternatives approaching the null at the rate $n^{-1/2}$.  However, results from the literature concerning tests of marginal regression functions suggest that any test able to detect $n^{-1/2}$-rate alternatives must necessarily have low finite-sample power against certain smooth alternatives \citep{horowitz2001adaptive}. Analogous results regarding causal dose-response functions are not to our knowledge available, but we conjecture that appropriate parallels can be established. Therefore, our tests may have poor finite-sample power against certain shapes of dose-response functions; in particular, functions that are nearly flat except for a sharp peak in a narrow range of the support of the exposure. We suggest that users conduct numerical studies on the power of our tests if they expect that their dose-response function may look like this. Tests based on an estimator of the dose-response function, which to our knowledge do not yet exist, may not have this problem, although such tests would also have slower than $n^{-1/2}$ rates of convergence. We leave further inquiry along these lines to future work.

An additional benefit of using the primitive function defined here is that it makes the test agnostic to the marginal distribution of the exposure: the test works equally well with discrete, continuous, and mixed discrete-continuous exposures. This was validated in numerical studies, where in particular we demonstrated that a traditional doubly-robust test of no average causal effect in the setting of a fully discrete exposure quickly became invalid as the number of levels of the exposure grew. We note that tests based on directly estimating the dose-response function, such as tests based on the local linear estimator, may only work in the context of fully continuous exposures.

Several modifications of the proposed test may be of interest in future research. Here, we studied the properties of the test for fixed values of $p$. In numerical studies, we found little difference in the performance of the test for $p \in \{1,2,\infty\}$, and we do not expect that the choice of $p$ would drastically change the results in most cases. However, the results presented herein were for fixed values of $p$, and so if a researcher were to select a value of $p$ based on the results of the test, the test may no longer have asymptotically valid type I error. In future research, it would be of interest to adaptively select a value of $p$ to maximize power while retaining type I error control. In addition, here, we used the empirical distribution function as our weight function to assess whether the primitive parameter is flat. Alternative weight functions could be used to, for instance, place more emphasis in the tails or center of the distribution of the exposure, or a weight function could possibly be adaptively chosen to maximize power. Finally, while we used a one-step estimator of the primitive parameter, TMLE could be used instead.


\clearpage

\begin{center}
\Large
\textit{
\textsc{Supplementary Material for:}\\
\vspace{.5em}
Nonparametric tests of the causal null \\with non-discrete exposures}
\end{center}

\vspace{1em}

\section*{Proof of Theorems}

\begin{proof}[\bfseries{Proof of Proposition~\ref{prop:equivalence}}]
(1)$\implies$(2): Let $a \in \s{A}_0$. Then, since $\theta_0(u) = \theta_0(a)$ for all $u \in \s{A}_0$, $\gamma_0 = \int\theta_0(u) \, dF_0(u) =  \int\theta_0(a) \, dF_0(u) = \theta_0(a)$. 

(2)$\implies$(1): trivial.

(2)$\implies$(3): Let $a \in \d{R}$. Then $\Gamma_0(a) =\int_{-\infty}^a \theta_0(u) \, dF_0(u) = \gamma_0F_0(a)$, so $\Omega_0(a) = 0$.

(3)$\implies$(2): We proceed by contradiction:  suppose that $\theta_0(a) \neq \gamma_0$ for some $a \in \s{A}_0$. We assume first that $\theta_0(a)  - \gamma_0 = \delta > 0$. Then since by assumption $\theta_0$ is continuous on $\s{A}_0$, there exists $\varepsilon > 0$ such that $|\theta_0(u) - \theta_0(a) | \leq \delta / 2$ for all $u \in \s{A}_0 \cap [a - \varepsilon, a + \varepsilon]$, which implies that $\theta_0(u) -\gamma_0 \geq \delta / 2$ for all such $u$. We then have 
\begin{align*}
0 = \Omega_0(a + \varepsilon) - \Omega_0(a - \varepsilon) &= \int_{a - \varepsilon}^{a+\varepsilon} [\theta_0(u) - \gamma_0] \, dF_0(u) = \int_{(a - \varepsilon, a+\varepsilon] \cap\s{A}_0} [\theta_0(u) - \gamma_0] \, dF_0(u) \\
&\geq (\delta / 2) \left[F_0(a + \varepsilon) - F_0(a - \varepsilon)\right] > 0 \ ,
\end{align*}
where the last inequality follows because $a$ is in the support of $F_0$. This is a contradiction, and therefore $\theta_0(a) \leq \gamma_0$. The argument if $\theta_0(a) < \gamma_0$ is essentially identical, and since $a \in \s{A}_0$ was arbitrary, this yields that $\theta_0(a) = \gamma_0$ for all $a \in \s{A}_0$. 

(3)$\implies$(4): trivial.

(4)$\implies$(3): We proceed again by contradiction. Suppose $|\Omega_0(a)| > 0$ for some $a \in \d{R}$. First suppose $a \in \s{A}_0$. If $a$ is a mass point of $F_0$, then clearly $\|\Omega_0\|_{F_0,1} >0$, a contradiction. If $a$ is not a mass point of $F_0$, then for any $\varepsilon > 0$
\[\left|\Omega_0(a + \varepsilon) - \Omega_0(a)\right| \leq \int_{a}^{a + \varepsilon} \left| \theta_0(u) - \gamma_0\right| \, dF_0(u) \leq c[F_0(a + \varepsilon) - F_0(a)]  \]
for $c < \infty$, and since $F_0(a + \varepsilon) \to F_0(a)$ as $\varepsilon \to 0$, $\Omega_0$ is right-continuous at $a$. An analogous argument shows that $\Omega_0$ is also left-continuous at $a$. This implies that $|\Omega_0|$ is positive in a neighborhood of $a$, which implies since $a \in \s{A}_0$ that $\|\Omega_0\|_{F_0,1} >0$, a contradiction. Finally, if $a \in \d{R}$ is not an element of $\s{A}_0$, then $\Omega_0(a) = \Omega_0(a_0)$ for $a_0 := \sup\{ u \in\s{A}_0 : u < a\}$, so that $|\Omega_0(a)| > 0$ implies $|\Omega_0(a_0)| > 0$, and since $a_0 \in \s{A}_0$ (because $\s{A}_0$ is closed), this leads to a contradiction.

\end{proof}

\clearpage

\begin{proof}[\bfseries{Proof of Proposition~\ref{prop:eif}}]
We let $\left\{P_\varepsilon : |\varepsilon | \leq \delta \right\}$ be any one-dimensional differentiable in quadratic mean (DQM) path in $\s{M}^\circ$ such that $P_{\varepsilon = 0} = P_0$, where $\s{M}^\circ$ be the subset of $P \in \s{M}$ such that there exists $\eta > 0$ such that $ g_P(a,w) \geq \eta$ for $P$-a.e.\ $(a,w)\}$ and $E_P[Y^2] < \infty$. We let $(y, a, w) \mapsto \dot\ell_0(y, a, w)$ be the score function of the path at $\varepsilon = 0$, We note that $P_0\dot\ell_0 = 0$ and $P_0 \dot\ell_0^2 < \infty$ \citep{bickel1998efficient}. Furthermore, we define $\dot\ell_0(a, w) := E_0[ \dot\ell_0 \mid A = a, W =w]$ and analogously $\dot\ell_0(a)$ and $\dot\ell_0(w)$ as  the marginal score functions and $\dot\ell_0(y \mid a, w) := \dot\ell_0(y, a, w) - \dot\ell_0(a,w)$ as the conditional score function, which has mean zero conditional given $A, W$ under $P_0$.

The nonparametric efficient influence function $D_{a_0,0}^*$ of $\Omega_0(a_0)$ can be derived by showing that $\left.\frac{\partial}{\partial\varepsilon} \Omega_\varepsilon(a_0) \right|_{\varepsilon = 0} = P_0 \left( D_{a_0,0}^* \dot\ell_0 \right)$. We have by definition of $\Omega_0$ and ordinary rules of calculus that
\begin{align}
\left.\frac{\partial}{\partial\varepsilon} \Omega_\varepsilon(a_0) \right|_{\varepsilon = 0} &= \left.\frac{\partial}{\partial\varepsilon} \Gamma_\varepsilon(a_0) \right|_{\varepsilon = 0} - \left.\frac{\partial}{\partial\varepsilon} \gamma_\varepsilon \right|_{\varepsilon = 0} F_0(a_0) - \gamma_0 \left.\frac{\partial}{\partial\varepsilon} F_\varepsilon(a_0) \right|_{\varepsilon = 0}. \label{eq:inf_decomp1}
\end{align}
We first note that since $F_P(a_0) = E_P[ I(A \leq a_0)] = \int I(a \leq a_0) \, dP(Y, A, W)$ for any $P$,
\begin{align*}
 \left.\frac{\partial}{\partial\varepsilon} F_\varepsilon(a_0) \right|_{\varepsilon = 0} &= \left.\frac{\partial}{\partial\varepsilon} \int I(a \leq a_0) dP_\varepsilon(y,a,w) \right|_{\varepsilon = 0} = \int I(a \leq a_0) \dot\ell_0(y, a, w) dP_0(y,a,w) \\
 &= E_0 \left[  I(A \leq a_0) \dot\ell_0(Y,A,W)\right].
 \end{align*}
 Therefore, the term $\gamma_0 \left.\frac{\partial}{\partial\varepsilon} F_\varepsilon(a_0) \right|_{\varepsilon = 0}$ in~\eqref{eq:inf_decomp1} contributes $-\gamma_0  I(a \leq a_0)$ to the uncentered influence function.
 
Next, by definitions of $\Gamma_P$ and $\gamma_P$ and the product rule,
\begin{align*}
 \left.\frac{\partial}{\partial\varepsilon} \Gamma_\varepsilon(a_0) \right|_{\varepsilon = 0} - \left.\frac{\partial}{\partial\varepsilon} \gamma_\varepsilon \right|_{\varepsilon = 0} F_0(a_0) &= \left.\frac{\partial}{\partial\varepsilon}  \iint  I(a \leq a_0) \mu_\varepsilon (a, w) \, dF_\varepsilon(a) \, dQ_\varepsilon(w)  \right|_{\varepsilon = 0}  \\
 &\qquad - \left.\frac{\partial}{\partial\varepsilon} \iint  \mu_\varepsilon (a, w) \, dF_\varepsilon(a) \, dQ_\varepsilon(w) \right|_{\varepsilon = 0} F_0(a_0) \\
 &=\left.\frac{\partial}{\partial\varepsilon}  \iint  \left[ I(a \leq a_0)  - F_0(a_0) \right] \mu_\varepsilon (a, w) \, dF_\varepsilon(a) \, dQ_\varepsilon(w)  \right|_{\varepsilon = 0} \\
 &= \left.\frac{\partial}{\partial\varepsilon}  \iiint  \left[ I(a \leq a_0)  - F_0(a_0) \right] y \, dP_\varepsilon(y \mid a, w) \, dF_0(a) \, dQ_0(w)  \right|_{\varepsilon = 0} \\
 &\qquad + \left.\frac{\partial}{\partial\varepsilon}  \iint  \left[ I(a \leq a_0)  - F_0(a_0) \right] \mu_0 (a, w) \, dF_\varepsilon(a) \, dQ_0 (w)  \right|_{\varepsilon = 0} \\ 
 &\qquad + \left.\frac{\partial}{\partial\varepsilon}  \iint  \left[ I(a \leq a_0)  - F_0(a_0) \right] \mu_0 (a, w) \, dF_0(a) \, dQ_\varepsilon(w)  \right|_{\varepsilon = 0}.
\end{align*}
Now using basic properties of score functions, we have
\begin{align*}
& \left.\frac{\partial}{\partial\varepsilon}  \iiint  \left[ I(a \leq a_0)  - F_0(a_0) \right] y \, dP_\varepsilon(y \mid a, w) \, dF_0(a) \, dQ_0(w)  \right|_{\varepsilon = 0} \\
&\qquad =    \iiint  \left[ I(a \leq a_0)  - F_0(a_0) \right] y \dot\ell_0(y \mid a, w) \, dP_0(y \mid a, w) \, dF_0(a) \, dQ_0(w)  \\
&\qquad = E_0 \left\{\left[ I(A \leq a_0)  - F_0(a_0) \right] \frac{Y}{g_0(A, W)} \dot\ell_0(Y \mid A, W) \right\} \\
&\qquad = E_0 \left\{\left[ I(A \leq a_0)  - F_0(a_0) \right] \frac{Y}{g_0(A, W)} \left[ \dot\ell_0(Y, A, W)  - \dot\ell_0(A, W) \right] \right\} \\
&\qquad =  E_0 \left\{\left[ I(A \leq a_0)  - F_0(a_0) \right] \frac{Y}{g_0(A, W)} \dot\ell_0(Y, A, W)\right\} \\
&\qquad\qquad -E_0 \left\{\left[ I(A \leq a_0)  - F_0(a_0) \right] \frac{E_0 \left[Y \mid A, W \right]}{g_0(A, W)} \dot\ell_0(A, W) \right\} \\
&\qquad = E_0 \left\{\left[ I(A \leq a_0)  - F_0(a_0) \right] \frac{Y}{g_0(A, W)} \dot\ell_0(Y, A, W)\right\} \\
&\qquad\qquad -E_0 \left\{\left[ I(A \leq a_0)  - F_0(a_0) \right] \frac{\mu_0(A,W)}{g_0(A, W)} \dot\ell_0(Y, A, W) \right\} \\
&\qquad = E_0 \left\{\left[ I(A \leq a_0)  - F_0(a_0) \right] \frac{Y - \mu_0(A, W)}{g_0(A, W)} \dot\ell_0(Y, A, W)\right\}
\end{align*}
since $E_0[ h(A, W) \dot\ell_0(A, W)] = E_0[ h(A, W) \dot\ell_0(Y,A, W)]$ for any suitable $h$. Note that we have used the assumption that $g_0$ is almost surely positive in the above derivation. We also have
\begin{align*}
& \left.\frac{\partial}{\partial\varepsilon}  \iint  \left[ I(a \leq a_0)  - F_0(a_0) \right] \mu_0 (a, w) \, dF_\varepsilon(a) \, dQ_0 (w)  \right|_{\varepsilon = 0} \\
&\qquad =   \iint  \left[ I(a \leq a_0)  - F_0(a_0) \right] \mu_0 (a, w)  \dot\ell_0(a) \, dF_0(a) \, dQ_0 (w) \\
&\qquad =\int  \left[ I(a \leq a_0)  - F_0(a_0) \right] \theta_0(a)  \dot\ell_0(a) \, dF_0(a)\\
&\qquad = E_0 \left\{  \left[ I(A \leq a_0)  - F_0(a_0) \right]  \theta_0(A) \dot\ell_0(A)\right\} \\
&\qquad = E_0 \left\{  \left[ I(A \leq a_0)  - F_0(a_0) \right]  \theta_0(A) \dot\ell_0(Y,A,W)\right\}
\end{align*}
by an analogous argument, and  similarly
\begin{align*}
& \left.\frac{\partial}{\partial\varepsilon}  \iint  \left[ I(a \leq a_0)  - F_0(a_0) \right] \mu_0 (a, w) \, dF_0(a) \, dQ_\varepsilon (w)  \right|_{\varepsilon = 0} \\
&\qquad =   \iint  \left[ I(a \leq a_0)  - F_0(a_0) \right] \mu_0 (a, w)  \dot\ell_0(w) \, dF_0(a) \, dQ_0 (w) \\
&\qquad = E_0 \left\{  \int \left[ I(a \leq a_0)  - F_0(a_0) \right] \mu_0(a, W) \, dF_0(a) \dot\ell_0(W)\right\} \\
&\qquad = E_0 \left\{  \int \left[ I(a \leq a_0)  - F_0(a_0) \right] \mu_0(a, W) \, dF_0(a) \dot\ell_0(Y,A,W)\right\}.
\end{align*}
Putting it together, the uncentered influence function is 
\begin{align*}
\left[I(a \leq a_0)  - F_0(a_0) \right] \left[\frac{y - \mu_0(a, w)}{g_0(a, w)} +\theta_0(a) \right] +  \int \left[ I(u \leq a_0)  - F_0(a_0) \right] \mu_0(u, w) \, dF_0(u) - \gamma_0 I(a \leq a_0).
\end{align*}
Influence functions have mean zero, so we need to subtract the mean under $P_0$ of this uncentered function to obtain the centered influence function.
The mean under $P_0$ of this function is $2\Omega_0(a_0) + \gamma_0 F_0(a_0)$, so the centered influence function is
\begin{align*}
&\left[I(a \leq a_0)  - F_0(a_0) \right] \left[\frac{y - \mu_0(a, w)}{g_0(a, w)} +\theta_0(a) - \gamma_0\right] \\
&\qquad +  \int \left[ I(u \leq a_0)  - F_0(a_0) \right] \mu_0(u, w) \, dF_0(u) -2\Omega_0(a_0),
\end{align*}
which equals $D_{a_0,0}^*(y,a,w)$ as claimed. When $g_0$ is almost surely bounded away from zero and $E_0[Y^2] < \infty$, this function has finite variance.
\end{proof}

\clearpage

Before proving our main results, we derive a first-order expansion of $\Omega_n^\circ(a)$. We define 
\begin{align*}
D_{a_0,n,v}(y, a, w) &:= \left[ I_{(-\infty, a_0]}(a) - F_{n,v}(a_0)\right] \left[  \frac{ y - \mu_{n,v}(a,w)}{g_{n,v}(a,w)}+ \theta_{\mu_{n,v}, Q_{n,v}}(a)-\gamma_{\mu_{n,v}, F_{n,v}, Q_{n,v}}\right]\\
&\qquad + \int \left[ I_{(-\infty, a_0]}(\tilde{a}) - F_0(a_0)\right] \mu_{n,v}(\tilde{a}, w) F_0(d\tilde{a}) - \Omega_{\mu_{n,v}, F_0, Q_{n,v}}(a_0) \ ,\\
D_{a_0,\mu, g}(y, a, w) &:= \left[ I_{(-\infty, a_0]}(a) - F_0(a_0)\right] \left[\frac{ y - \mu(a,w)}{g(a,w)}+\theta_{\mu, Q_0}(a)-\gamma_{\mu, F_0, Q_0}\right] \\
& \qquad + \int \left[ I_{(-\infty, a_0]}(\tilde{a}) - F_0(a_0)\right] \mu(\tilde{a}, w) F_0(d\tilde{a}) - \Omega_{\mu, F_0, Q_0}(a_0)\\
D_{a_0,n,v}^{\circ}(y,a,w) &:= \left[ I_{(-\infty, a_0]}(a) - F_{n,v}(a_0)\right] \left[  \frac{ y - \mu_{n,v}(a,w)}{g_{n,v}(a,w)} +\theta_{\mu_{n,v}, Q_0}(a) \right]  \\
&\qquad + \int \left[ I_{(-\infty, a_0]}(\tilde{a}) - F_0(a_0)\right] \mu_{n,v}(\tilde{a}, w) F_0(d\tilde{a}) \\
D_{a_0, \infty}^{\circ}(y,a,w) &:= \left[ I_{(-\infty, a_0]}(a) - F_0(a_0)\right] \left[ \frac{ y - \mu_\infty(a,w)}{g_\infty(a,w)} + \theta_{\mu_{\infty}, Q_0}(a)\right] \\
&\qquad + \int \left[ I_{(-\infty, a_0]}(\tilde{a}) - F_0(a_0)\right] \mu_\infty(\tilde{a}, w) F_0(d\tilde{a}) \\
D_{a_0,\infty} &:= D_{a_0,\mu_\infty, g_\infty} \\
D_{a_0,\infty}^* &:=D_{a_0, \infty} - \Omega_0(a_0)
\end{align*}
and
\begin{align*}
R_{n,a_0,v,1} &:= (\d{P}_{n,v} - P_0) \left( D_{a_0,n,v}^{\circ} - D_{a_0, \infty}^{\circ} \right)\ ,\\
R_{n,a_0,v,2} &:=  \left( \gamma_{\mu_{n,v}, F_0, Q_0} - \gamma_{\mu_{n,v}, F_0, Q_{n,v}}- \gamma_{\mu_0, F_0, Q_0}+ \gamma_{\mu_\infty, F_0, Q_0}   \right) \left[ F_{n,v}(a_0) - F_0(a_0) \right] \\
R_{n,a_0,v,3} &:= \iint \left[ I_{(-\infty, a_0]}(a) - F_{n,v}(a_0)\right]  \mu_{n,v}(a, w) \, (F_{n,v} - F_0)(da) \, (Q_{n,v} - Q_0)(dw)\\
R_{n,a_0,v,4} &:= \iint\left[ I_{(-\infty, a_0]}(a) - F_{n,v}(a_0)\right] \left[\mu_{n,v}(a, w) - \mu_0(a, w)\right] \left[ 1 - \frac{g_0(a, w)}{g_{n,v}(a,w)}\right] \, F_0(da) \, Q_0(dw)\\
R_{n,a_0,v,5} &:=  \left( 1 - \frac{VN_v}{n} \right) \d{P}_{n,v} D_{a_0, \infty}^*\ .
\end{align*}

\begin{lemma}[First-order expansion of estimator]\label{lemma:first_order}
If condition (A3) or (A4) holds, then $\Omega_{n}^\circ(a_0) -  \Omega_0(a_0) = \d{P}_nD_{a_0, \infty}^* + \frac{1}{V}\sum_{v=1}^V \sum_{j=1}^5 R_{n,a_0,v,j}.$
\end{lemma}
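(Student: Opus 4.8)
The plan is to prove this as a deterministic algebraic identity: once the limiting nuisances $\mu_\infty$ and $g_\infty$ are pinned down --- which is the only role of the hypothesis, since under (A4) one takes $\mu_\infty = \mu_0$ and $g_\infty = g_0$, while under (A3) they are the limits supplied by (A2) --- every term in the claimed decomposition matches up, with one caveat handled below. Writing $\Omega_n^\circ(a_0) = \tfrac{1}{V}\sum_{v=1}^V \omega_{n,v}(a_0)$ for the $v$-th summand $\omega_{n,v}(a_0)$ in \eqref{eq:one_step_cv_est}, I would first record that, because the folds need not be exactly equal in size, $\tfrac{1}{V}\sum_v \d{P}_{n,v} = \d{P}_n + \tfrac{1}{V}\sum_v \left(1 - \tfrac{VN_v}{n}\right)\d{P}_{n,v}$; applying this to $D_{a_0,\infty}^*$ peels off $\tfrac{1}{V}\sum_v R_{n,a_0,v,5}$, so it suffices to prove the per-fold identity $\omega_{n,v}(a_0) - \Omega_0(a_0) = \d{P}_{n,v} D_{a_0,\infty}^* + \sum_{j=1}^4 R_{n,a_0,v,j}$, whence averaging over $v$ gives the lemma.

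For the per-fold identity I would treat $\mu_{n,v}, g_{n,v}, F_{n,v}, Q_{n,v}$ as fixed (the identity is pointwise on the sample space, so this is merely notational), so that $\omega_{n,v}(a_0)$ is an average of a fixed function over $\s{V}_{n,v}$. The augmented-IPW part of $\omega_{n,v}(a_0)$ is already $\d{P}_{n,v}$ applied to $[I_{(-\infty,a_0]}(\cdot) - F_{n,v}(a_0)](Y - \mu_{n,v})/g_{n,v}$; the bilinear $\mu$-term is reorganized by writing the marginal empirical measures of $A$ and $W$ appearing in it as the true marginals plus centered measures, which isolates the degenerate bilinear remainder $R_{n,a_0,v,3}$ and leaves leading terms integrated against $F_0$ and $Q_0$. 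Replacing the centering constant $F_{n,v}(a_0)$ by $F_0(a_0) + [F_{n,v}(a_0) - F_0(a_0)]$ wherever it occurs and gathering the pieces carrying the factor $[F_{n,v}(a_0) - F_0(a_0)]$ produces $R_{n,a_0,v,2}$ (the $\gamma$-type discrepancy). After this reorganization, $\omega_{n,v}(a_0)$ equals $\d{P}_{n,v} D_{a_0,n,v}^{\circ}$ plus fixed constants plus $R_{n,a_0,v,2} + R_{n,a_0,v,3}$.

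Next I would add and subtract $D_{a_0,\infty}^{\circ}$: the stochastic piece $(\d{P}_{n,v} - P_0)(D_{a_0,n,v}^{\circ} - D_{a_0,\infty}^{\circ})$ is exactly $R_{n,a_0,v,1}$, and what remains is the drift $P_0(D_{a_0,n,v}^{\circ} - D_{a_0,\infty}^{\circ})$ together with $\d{P}_{n,v} D_{a_0,\infty}^{\circ}$ and the leftover constants. The drift is evaluated using $P_0 h(A,W) = \iint h(a,w)\,g_0(a,w)\,F_0(da)\,Q_0(dw)$ for $P_0$-integrable $h$ (since $g_0 = dG_0/dF_0$ is the conditional density of $A$ given $W$) together with $E_0[Y \mid A, W] = \mu_0$, so the residual $Y - \mu_{n,v}$ collapses to $\mu_0 - \mu_{n,v}$ in expectation. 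Collecting terms, the $(\mu_{n,v},g_{n,v})$-dependent parts assemble into $R_{n,a_0,v,4}$, the $(\mu_\infty,g_\infty)$-dependent parts assemble into $\iint [I_{(-\infty,a_0]}(a) - F_0(a_0)](\mu_0 - \mu_\infty)(a,w)\big[1 - g_0(a,w)/g_\infty(a,w)\big]\,F_0(da)\,Q_0(dw)$, and the remaining $\mu_\infty$-only pieces supply the $\gamma_{\mu_\infty,F_0,Q_0}$ and $\Omega_{\mu_\infty,F_0,Q_0}$ constants that turn $\d{P}_{n,v} D_{a_0,\infty}^{\circ}$ into $\d{P}_{n,v} D_{a_0,\infty} = \d{P}_{n,v} D_{a_0,\infty}^* + \Omega_0(a_0)$. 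The displayed cross term vanishes because $(\mu_0 - \mu_\infty)(1 - g_0/g_\infty) = 0$ almost everywhere $F_0 \times Q_0$ under (A3) (on $\s{S}_1$ the first factor is zero, on $\s{S}_2$ the second, on $\s{S}_3$ both), and a fortiori under (A4); this is the only point at which the hypothesis is used. Assembling the pieces yields the per-fold identity.

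The argument is entirely bookkeeping, so the difficulty is organizational. The two steps demanding the most care are (i) tracking the double role of $F_{n,v}(a_0)$ --- both as an estimator of the number $F_0(a_0)$ centering the indicator and inside empirical averages --- so its replacement by $F_0(a_0)$ is carried out consistently across the augmented-IPW term, the bilinear $\mu$-term and the $\Omega_{\mu_{n,v},F_{n,v},Q_{n,v}}(a_0)$ correction; and (ii) verifying that, after all cancellations, the only term left over beyond $\d{P}_{n,v} D_{a_0,\infty}^* + \sum_{j=1}^4 R_{n,a_0,v,j}$ is the $(\mu_\infty,g_\infty)$ cross term, so that (A3)/(A4) is precisely what is needed to close the identity --- a miscount there would leave an unaccounted, generally non-negligible term. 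Cross-fitting plays no role in the identity itself, but is what will make $R_{n,a_0,v,1}$ negligible in the theorems that build on this lemma.
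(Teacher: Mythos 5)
Your proposal is correct and takes essentially the same route as the paper's proof: in both, conditions (A3)/(A4) enter only to make the doubly-robust cross term $\iint\left[I_{(-\infty,a_0]}(a)-F_0(a_0)\right]\left[\mu_\infty(a,w)-\mu_0(a,w)\right]\left[1-g_0(a,w)/g_\infty(a,w)\right]F_0(da)\,Q_0(dw)$ vanish (equivalently, $P_0 D_{a_0,\infty}=\Omega_0(a_0)$), and everything else is the term-by-term bookkeeping that the paper compresses into ``straightforward algebra.'' Your peeling off of $R_{n,a_0,v,5}$ via the unequal-fold-size identity, the per-fold reduction, and the add-and-subtract of $D_{a_0,\infty}^{\circ}$ are just a more explicit organization of that same decomposition into the five remainder terms.
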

\begin{proof}[\bfseries{Proof of Lemma~\ref{lemma:first_order}}]
We recall that $\theta_{\mu, Q}(a) := \int \mu(a, w) \, Q(dw)$, 
\[ \Omega_{\mu, F, Q}(a) := \iint \left[ I_{(-\infty, a_0]}(a) - F(a_0)\right] \mu(a, w) \, F(da) \, Q(dw) \ ,\]
 and $\gamma_{\mu, F, Q} := \iint  \mu(a, w) \, F(da) \, Q(dw)$. We thus have $\Omega_{n}^\circ(a_0) =\frac{1}{V}\sum_{v=1}^V \d{P}_{n,v} D_{a_0,n,v}$. If (A3) or (A4) hold, then
\begin{align*}
 P_0 D_{a_0,\infty} &=  \Omega_0(a_0)  + \iint \left[I_{(-\infty, a_0]}(a) - F_0(a_0)  \right] \left[\mu_{\infty}(a,w) - \mu_0(a,w)\right] \left[1 - \frac{g_0(a,w)}{g_{\infty}(a,w)}\right]\, F_0(da) \, Q_0(dw) \\
 &= \Omega_0(a_0)\ . 
 \end{align*}
Thus, we have the first-order expansion $\Omega_{n}^\circ(a_0) -  \Omega_0(a_0) = \d{P}_nD_{a_0, \infty}^* + \frac{1}{V}\sum_{v=1}^V R_{n,a_0,v}$, where 
\[R_{n,a_0,v} :=  (\d{P}_{n,v} - P_0)(D_{a_0,n, v} - D_{a_0, \infty})+\left[P_0 D_{a_0,n, v} - \Omega_0(a_0)\right] + \left( 1 - \frac{VN_v}{n} \right) \d{P}_{n,v} D_{a_0, \infty}^*\ .\]
Straightforward algebra shows that the remainder term $R_{n,a_0,v}$ can be further decomposed as $\sum_{j=1}^5 R_{n,a_0,v,j}$.
\end{proof}

\begin{lemma}\label{lemma:emp_process_neg}
Conditions (A1) and (A2) imply that $\max_v\sup_{a_0 \in\s{A}_0} |\d{G}_{n,v} ( D_{a_0,n,v}^{\circ} - D_{a_0, \infty}^{\circ})| \inprob 0$.
\end{lemma}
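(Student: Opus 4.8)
The plan is to exploit cross-fitting exactly as it is designed to be used: since the number of folds $V$ is fixed, it suffices to prove the statement for each $v$ separately, and I would do so by conditioning on the training set $\s{T}_{n,v}$. Conditionally on $\s{T}_{n,v}$ the estimators $\mu_{n,v}$ and $g_{n,v}$ are fixed functions, the validation observations $\{O_i : i \in \s{V}_{n,v}\}$ are i.i.d.\ from $P_0$ and independent of $\s{T}_{n,v}$, and $a_0 \mapsto D_{a_0,n,v}^{\circ} - D_{a_0,\infty}^{\circ}$ is a fixed path of functions indexed only by $a_0 \in \s{A}_0$. The target then reduces to showing that, conditionally on $\s{T}_{n,v}$, the empirical process $\d{G}_{n,v}$ evaluated along this path is uniformly small over $a_0$, which I would obtain from two ingredients: (i) the class $\{D_{a_0,\mu,g}^{\circ} - D_{a_0,\infty}^{\circ} : a_0 \in \s{A}_0\}$ is $P_0$-Donsker, with entropy and envelope bounds uniform over $(\mu,g) \in \s{F}_0 \times \s{F}_1$; and (ii) $\sup_{a_0 \in \s{A}_0} \lVert D_{a_0,n,v}^{\circ} - D_{a_0,\infty}^{\circ} \rVert_{P_0,2} \inprob 0$.

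For (i), I would observe that $D_{a_0,\mu,g}^{\circ} - D_{a_0,\infty}^{\circ}$ depends on $a_0$ only through the half-line indicator $a \mapsto I_{(-\infty,a_0]}(a)$ and through the bounded monotone scalar functions $a_0 \mapsto F_0(a_0)$ and (after conditioning) $a_0 \mapsto F_{n,v}(a_0)$. The half-line indicators form a VC class, and multiplying by a fixed function and adding fixed functions of $a_0$ preserves a polynomial uniform-entropy bound. The boundedness in (A1) ($\lvert\mu\rvert \le K_0$, $K_1 \le g \le K_2$) supplies a common envelope of the form $c(1 + \lvert Y \rvert)$ with $c$ depending only on $K_0,K_1,K_2$, which is square-integrable since $E_0[Y^2] < \infty$. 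Hence the class is $P_0$-Donsker with constants not depending on $(\mu,g) \in \s{F}_0 \times \s{F}_1$, and in particular the frozen random pair $(\mu_{n,v},g_{n,v})$ lies in such a class.

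For (ii), I would use the identity $\frac{y-\mu(a,w)}{g(a,w)} - \frac{y-\mu_\infty(a,w)}{g_\infty(a,w)} = \frac{(y-\mu_\infty(a,w))(g_\infty(a,w)-g(a,w))}{g(a,w)\,g_\infty(a,w)} + \frac{\mu_\infty(a,w)-\mu(a,w)}{g(a,w)}$, together with $K_1 \le g,g_\infty \le K_2$, $\lvert\mu\rvert,\lvert\mu_\infty\rvert \le K_0$, and the simpler expansions of the $\theta_{\mu,Q_0}$ and integral-against-$F_0$ terms, to bound $\sup_{a_0}\lVert D_{a_0,\mu,g}^{\circ} - D_{a_0,\infty}^{\circ}\rVert_{P_0,2}$ by a constant multiple of $\lVert\mu - \mu_\infty\rVert_{P_0,2} + \lVert (Y-\mu_\infty)(g-g_\infty)\rVert_{P_0,2} + \sup_{a_0}\lvert F_{n,v}(a_0) - F_0(a_0)\rvert$. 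Evaluated at $(\mu_{n,v},g_{n,v})$, the first summand $\inprob 0$ by (A2), the third $\to 0$ almost surely by Glivenko--Cantelli on the training sample, and the middle one is handled by truncation: for any $M$, $E_0[(Y-\mu_\infty)^2(g-g_\infty)^2] \le M\,\lVert g-g_\infty\rVert_{P_0,2}^2 + (K_2-K_1)^2\, E_0[(Y-\mu_\infty)^2 I_{\{(Y-\mu_\infty)^2 > M\}}]$, and the last expectation vanishes as $M \to \infty$ since $Y-\mu_\infty \in L_2(P_0)$, so letting $M$ grow slowly gives $\inprob 0$ by (A2).

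Combining (i) and (ii) via a standard maximal inequality for Donsker classes over shrinking $L_2$-balls, conditionally on $\s{T}_{n,v}$ one gets $E\big[\sup_{a_0 \in \s{A}_0}\lvert\d{G}_{n,v}(D_{a_0,n,v}^{\circ} - D_{a_0,\infty}^{\circ})\rvert \,\big|\, \s{T}_{n,v}\big] \le \phi\big(\sup_{a_0}\lVert D_{a_0,n,v}^{\circ} - D_{a_0,\infty}^{\circ}\rVert_{P_0,2}\big)$ for a deterministic $\phi$ with $\phi(\delta)\to 0$ as $\delta \downarrow 0$; since the argument $\inprob 0$ by (ii), and since $E[X_n \wedge 1 \mid \s{T}_{n,v}] \inprob 0$ implies $X_n \inprob 0$ (by conditional Markov and dominated convergence), the supremum $\inprob 0$ for each $v$, hence also its maximum over the fixed number of folds. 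I expect the main obstacle to be part (i): arguing that the Donsker property (or an adequate maximal inequality) holds with constants independent of the frozen nuisance pair as it ranges over the possibly very large classes $\s{F}_0, \s{F}_1$ — this is precisely where the boundedness hypotheses of (A1) and the fact that cross-fitting has replaced the high-complexity nuisance index by the one-dimensional index $a_0$ do the work. The truncation device in (ii), forced by only assuming a second moment of $Y$, is a secondary technical wrinkle.
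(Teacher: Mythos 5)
Your proposal is correct and follows the paper's overall skeleton --- condition on the training fold so that $\mu_{n,v}$, $g_{n,v}$, $F_{n,v}$ are frozen, control the $a_0$-indexed class via VC/uniform-entropy arguments, and let (A1)--(A2) drive the bound to zero --- but the final maximal-inequality step is executed differently. The paper builds the nuisance discrepancies into the envelope itself: it bounds $\sup_{a_0}|D^{\circ}_{a_0,n,v}-D^{\circ}_{a_0,\infty}|$ by a function $F_{n,v}$ involving $|\mu_{n,v}-\mu_\infty|$, $(|y|+K_0)|g_{n,v}-g_\infty|$ and $\sup_{a_0}|F_{n,v}(a_0)-F_0(a_0)|$, shows the uniform entropy integral of the class is $\boundeddet(1)$, and applies Theorem~2.14.1 of van der Vaart and Wellner conditionally, so the conditional bound is proportional to $\|F_{n,v}\|_{P_0,2}$, which shrinks; no shrinking-ball argument is needed. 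You instead keep a fixed square-integrable envelope $c(1+|Y|)$, prove $\sup_{a_0}\|D^{\circ}_{a_0,n,v}-D^{\circ}_{a_0,\infty}\|_{P_0,2}\inprob 0$, and invoke a maximal inequality over shrinking $L_2(P_0)$-balls. That route works, but it is the more delicate one: with an unbounded envelope, an expectation bound $\phi(\delta)\to 0$ over the $\delta$-ball does not follow from Theorem~2.14.1 (whose bound scales with the envelope norm, not with $\delta$); you need either the in-probability asymptotic-equicontinuity form together with your conditional Markov device, or a local maximal inequality under uniform entropy with constants depending only on the entropy and envelope bounds (so that they are uniform over the frozen pair $(\mu_{n,v},g_{n,v})$ --- Donskerness of the union over $\s{F}_0\times\s{F}_1$ is neither available nor needed). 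Two further remarks: your explicit truncation of the $(Y-\mu_\infty)(g_{n,v}-g_\infty)$ term under only $E_0[Y^2]<\infty$ is a point where you are more careful than the paper, which obtains $\|F_{n,v}\|_{P_0,2}\inprob 0$ from (A1)--(A2) ``by the triangle inequality'' and implicitly uses the same uniform-integrability step; and the terms $w\mapsto\int I_{(-\infty,a_0]}(\tilde a)\mu(\tilde a,w)\,F_0(d\tilde a)$ are not fixed functions added in but $a_0$-indexed families --- they have total variation at most $K_0$ in $a_0$ uniformly in $w$, and the polynomial covering-number bound you need for them is exactly what the paper supplies by citing Lemma~1 of Westling et al.\ (2020).
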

\begin{proof}[\bfseries{Proof of Lemma~\ref{lemma:emp_process_neg}}]
We define $\s{F}_{n,v} := \{  D_{a_0,n,v}^{\circ} - D_{a_0, \infty}^{\circ} : a_0 \in \s{A}_0 \}$, so that we can write $\max_v\sup_{a_0 \in\s{A}_0} |\d{G}_{n,v} ( D_{a_0,n,v}^{\circ} - D_{a_0,\infty}^{\circ})|  = \max_v\sup_{f \in\s{F}_{n,v} } |\d{G}_{n,v} f|$. By the tower property, we have
\[ E_0\left[ \sup_{f \in\s{F}_{n,v} } \left|\d{G}_{n,v} f \right| \right] = E_0 \left[ E_0\left(  \sup_{f \in\s{F}_{n,v} } \left|\d{G}_{n,v} f \right| \Bigg| \s{T}_{n,v} \right) \right] \ .\]
The inner expectation is taken with respect to the distribution of the observations with indices in the validation sample $\s{V}_{n,v}$, while the outer expectation is with respect to the observations in the training sample $\s{T}_{n,v}$. By construction, the functions $\mu_{n,v}$ and $g_{n,v}$ depend only upon the observations in the training sample $\s{T}_{n,v}$, so that they are fixed with respect to the inner expectation.  We note that $\sup_{f \in\s{F}_{n,v} } |f(y, a,w)| \leq F_{n,v}(y,a,w)$ for all $y,a,w$, where
\begin{align*}
F_{n,v}(y,a,w)| &= \left[ \left(|y| + K_0\right) K_1^{-1} + K_0\right]\sup_{a_0 \in \s{A}_0} \left| F_{n,v}(a_0) - F_0(a_0)\right| \\
&\qquad+ K_1^{-2}\left( |y| + K_0\right)\left| g_{n,v}(a, w) - g_\infty(a,w) \right| + K_1^{-1}\left|\mu_{n,v}(a,w) - \mu_\infty(a,w)\right| \\
&\qquad+ \int \left| \mu_{n,v}\left(a,\tilde{w}\right) - \mu_\infty\left(a,\tilde{w}\right)\right| \, Q_0\left(d\tilde{w}\right) + \int \left| \mu_{n,v}(u, w) - \mu_\infty(u, w) \right| \, F_0(du) \ .
\end{align*}
We then have by Theorem~2.14.1 of \cite{van1996weak} that 
\[ E_0\left(  \sup_{f \in\s{F}_{n,v} } \left|\d{G}_{n,v} f \right| \Bigg| \s{T}_{n,v} \right)  \leq C \left\{ E_{0}\left[  F_{n,v}(Y, A,W)^2\Bigg| \s{T}_{n,v}\right] \right\}^{1/2}  J\left(1, \s{F}_{n,v}\right) \ , \]
for a constant C not depending on $\s{F}_{n,v}$, where $J$ is the uniform entropy integral as defined in Chapter 2.14 of \cite{van1996weak}. The class $\s{F}_{n,v}$ is a convex combination of the classes (1) $\{I_{(-\infty,a_0]}(a) : a_0 \in \s{A}_0\}$, (2) $\{\int I_{(-\infty,a_0]}(a) F(da) : a_0 \in \s{A}_0\}$ for $F = F_{n,v}$ and $F = F_\infty$, (3) $\{\int I_{(-\infty,a_0]}(a) \mu(a,w)F_0(da) : a_0 \in \s{A}_0\}$ for $\mu = \mu_\infty$ and $\mu = \mu_{n,v}$, and various fixed functions with finite second moments. Class (1) is  well-known to possess polynomial covering numbers. Classes (2) and (3) therefore do as well by Lemma~1 of \cite{westling2020isotonic}. Thus, $\max_v J\left(1, \s{F}_{n,v}\right) = \boundeddet(1)$. Hence, we now have
\begin{align*}
E_0\left[ \sup_{f \in\s{F}_{n,v} } \left|\d{G}_{n,v} f \right| \right] &\leq C' E_0\left( \left\{ E_{0}\left[  F_{n,v}(Y, A,W)^2\Bigg| \s{T}_{n,v}\right] \right\}^{1/2}\right) = C'  E_0 \left[  \|F_{n,v}\|_{P_0,2} \right] \ ,
\end{align*}
The triangle inequality and conditions (A1) and (A2) imply that $\|F_{n,v}\|_{P_0, 2} \inprob 0$ for each $v$, and also that  $\|F_{n,v}\|_{P_0, 2}$ is uniformly bounded for all $n$ and $v$. This implies that $E_0 \left[  \|F_{n,v}\|_{P_0,2} \right] \longrightarrow 0$. Therefore $\sup_{f \in \s{F}_{n,v}} \left|\d{G}_{n,v} f\right| = \sup_{a_0 \in \s{A}_0} \left|\d{G}_n( D_{a_0,n,v}^{\circ} - D_{a_0, \infty}^{\circ})\right| \inprob 0$ for each $v$, which implies that \[\max_v \sup_{a_0 \in \s{A}_0} \left|  \d{G}_n(D_{a_0,n,v}^{\circ} - D_{a_0, \infty}^{\circ})\right| \inprob 0\] since $V = \boundeddet(1)$.
\end{proof}

\begin{lemma}\label{lemma:uprocess}
Condition (A1) implies that 
\[\max_v \sup_{a_0 \in \s{A}_0} \left| \iint \left[ I_{(-\infty, a_0]}(a) - F_{n,v}(a_0)\right]\mu_{n,v}(a, w)(F_{n,v} - F_0)(da) (Q_{n,v} - Q_0)(dw) \right| = \bounded(n^{-1}) \ .\]
\end{lemma}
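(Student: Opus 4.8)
The plan is to expose the hidden second-order degenerate structure of $R_{n,a_0,v,3}$ via a Fubini rearrangement followed by the Hoeffding (ANOVA) decomposition of the product of centered empirical processes $(F_{n,v}-F_0)\otimes(Q_{n,v}-Q_0)$, and then to bound the resulting degenerate $U$-process uniformly in $a_0$ by a maximal inequality. The key observation is that integrating against $F_{n,v}-F_0$ and $Q_{n,v}-Q_0$, each centered at the population marginal, annihilates the first-order (H\'ajek) projections of the associated double sum, so what survives is genuinely degenerate of order two; this degeneracy is precisely what upgrades the rate from $n^{-1/2}$ to $n^{-1}$.

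Write $m := |\s{T}_{n,v}|$ (so $m \asymp n$ since $V = \boundeddet(1)$) and $\phi_{n,v,a_0}(a,w) := [I_{(-\infty,a_0]}(a) - F_{n,v}(a_0)]\mu_{n,v}(a,w)$, and let $\widetilde\phi_{n,v,a_0}$ be the double-centering of $\phi_{n,v,a_0}$ relative to $F_0$ and $Q_0$, i.e.\ with the $F_0$- and $Q_0$-marginals and the $F_0\otimes Q_0$-mean subtracted off, so that $\int \widetilde\phi_{n,v,a_0}(a,\cdot)\,dQ_0 = 0$ and $\int \widetilde\phi_{n,v,a_0}(\cdot,w)\,dF_0 = 0$. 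Expanding the product of signed measures, separating the diagonal from the off-diagonal, and applying the Hoeffding decomposition on the latter yields the algebraic identity
\[ R_{n,a_0,v,3} \;=\; \frac{1}{m^2}\sum_{i \neq j}\widetilde\phi_{n,v,a_0}(A_i,W_j) \;+\; E_{n,a_0,v}, \]
where $E_{n,a_0,v}$ collects the diagonal contribution $\tfrac{1}{m^2}\sum_i \phi_{n,v,a_0}(A_i,W_i)$ and the $O(1/m)$ correction incurred in passing between the $\tfrac1{m^2}$- and $\tfrac1{m(m-1)}$-normalizations. Condition (A1) gives $|\mu_{n,v}| \leq K_0$, hence $|\phi_{n,v,a_0}| \leq K_0$ and $|\widetilde\phi_{n,v,a_0}| \leq 4K_0$, so $\max_v\sup_{a_0}|E_{n,a_0,v}| \leq 4K_0/m$ deterministically. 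The claim therefore reduces to showing $\max_v \sup_{a_0 \in \s{A}_0}\bigl|\tfrac1{m^2}\sum_{i\neq j}\widetilde\phi_{n,v,a_0}(A_i,W_j)\bigr| = \bounded(1/m)$.

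For the reduced statement I would treat $\tfrac1{m^2}\sum_{i\neq j}\widetilde\phi_{n,v,a_0}(A_i,W_j)$, indexed by $a_0$, as a degenerate second-order $U$-process. Using $F_{n,v}(a_0) \in [0,1]$ to absorb the random constant, it is dominated by a $U$-process over the nonrandom kernel class obtained, via the double-centering, from
\[ \bigl\{\, (a,w) \mapsto I_{(-\infty,a_0]}(a)\,\mu(a,w)\ \text{and}\ (a,w)\mapsto \mu(a,w) \;:\; a_0 \in \d{R},\ \mu \in \s{F}_0 \,\bigr\}. \]
This class has envelope of order $K_0$ by (A1), and — by the VC property of $\{I_{(-\infty,a_0]} : a_0\in\d{R}\}$, the bound on the uniform entropy integral of $\s{F}_0$ imposed by (A1), Lemma~1 of \cite{westling2020isotonic}, and the usual preservation of polynomial uniform covering numbers under products of uniformly bounded classes and under the bounded linear double-centering map — it has a bounded uniform entropy integral. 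A maximal inequality for degenerate second-order $U$-processes over such classes (the $U$-process analogue of Theorem~2.14.1 of \cite{van1996weak}, obtainable by symmetrization, decoupling, conditioning on one sample, and chaining) then gives $E_0\bigl[\sup_{a_0}|\tfrac1{m^2}\sum_{i\neq j}\widetilde\phi_{n,v,a_0}(A_i,W_j)|\bigr] = \boundeddet(1/m)$, hence the desired $\bounded(1/m)$ uniformly over the $\boundeddet(1)$ values of $v$; combined with the deterministic bound on $E_{n,a_0,v}$ this proves the lemma.

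\textbf{Main obstacle.} The delicate steps are (i) verifying that the first-order projections of the double sum vanish, so the relevant $U$-process is degenerate and the $n^{-1}$ rather than $n^{-1/2}$ rate is available — exactly what the double-centering inherited from centering at $F_0$ and $Q_0$ accomplishes — and (ii) assembling and applying the degenerate-$U$-process maximal inequality, which is less off-the-shelf than its empirical-process counterpart and whose entropy hypothesis must be checked for a kernel class depending on the random $\mu_{n,v}$. In contrast to Lemma~\ref{lemma:emp_process_neg}, cross-fitting offers no leverage here: $F_{n,v}$, $Q_{n,v}$, and $\mu_{n,v}$ all come from the same fold $\s{T}_{n,v}$, so there is no training/validation independence to exploit, and the argument rests squarely on the boundedness and complexity control of $\s{F}_0$ provided by (A1).
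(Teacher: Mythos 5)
Your overall skeleton—separate the $F_{n,v}(a_0)$ piece, Hoeffding-decompose the double sum into a doubly-centered degenerate second-order $U$-process plus diagonal and normalization corrections of order $n^{-1}$, and finish with a maximal inequality uniform in $a_0$—is the same as the paper's. The genuine gap is in how you propose to verify the hypothesis of that maximal inequality. You invoke ``the bound on the uniform entropy integral of $\s{F}_0$ imposed by (A1),'' but (A1) imposes no such bound: it only requires $|\mu|\leq K_0$ for all $\mu\in\s{F}_0$. A uniformly bounded class can have an infinite uniform entropy integral, so the kernel class $\{(a,w)\mapsto I_{(-\infty,a_0]}(a)\mu(a,w): a_0\in\d{R},\,\mu\in\s{F}_0\}$ need not satisfy the entropy condition your degenerate-$U$-process inequality requires, and the key step of your argument cannot be completed from (A1). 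Indeed, the whole point of cross-fitting in this paper (as stated in Section~\ref{sec:primitive_est}) is precisely to avoid imposing complexity conditions on the nuisance classes, so no such condition is available to you.

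Relatedly, your closing claim that ``cross-fitting offers no leverage here'' is backwards, and it is exactly the leverage the paper uses to close the gap you created. In the expansion of $\Omega_n^\circ$ (see \eqref{eq:one_step_cv_est} and the proof of Lemma~\ref{lemma:first_order}), the empirical measures entering this remainder are those of the validation fold: the paper's proof writes the term as $\frac{1}{2N_v^2}\sum_{i\neq j,\; i,j\in\s{V}_{n,v}}\gamma_{\mu_{n,v},a_0}(O_i,O_j)$ plus terms of order $N_v^{-3/2}$ and $N_v^{-1}$, while $\mu_{n,v}$ depends only on $\s{T}_{n,v}$. Conditioning on $\s{T}_{n,v}$ (tower property) makes $\mu_{n,v}$ a fixed function, so the kernel class is indexed by $a_0$ alone—half-lines and their integrals, which have polynomial uniform covering numbers by Lemma~1 of \cite{westling2020isotonic}—and Lemma~2 of \cite{westling2020isotonic} (a maximal inequality for the degenerate $U$-process) then yields $E_0\bigl[\sup_{a_0}|S_{n,v}(\gamma_{\mu_{n,v},a_0})|\,\big|\,\s{T}_{n,v}\bigr]\leq C\left[N_v(N_v-1)\right]^{1/2}$, giving the $\bounded(n^{-1})$ rate. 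In short: the independence you renounce is what substitutes for the entropy condition you incorrectly attribute to (A1); once you replace your entropy claim with this conditioning argument, your proof becomes essentially the paper's.
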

\begin{proof}[\bfseries{Proof of Lemma~\ref{lemma:uprocess}}]
We have 
\begin{align*}
&\iint \left[ I_{(-\infty, a_0]}(a) - F_{n,v}(a_0)\right]\mu_{n}(a, w)(F_{n,v} - F_0)(da) (Q_{n,v} - Q_0)(dw) \\
&\qquad= \iint I_{(-\infty, a_0]}(a)\mu_{n,v}(a, w)(F_{n,v} - F_0)(da) (Q_{n,v} - Q_0)(dw)  \\
&\qquad\qquad- F_{n,v}(a_0)\iint \mu_{n,v}(a, w)(F_{n,v} - F_0)(da) (Q_{n,v} - Q_0)(dw) \ .
\end{align*}
Controlling these two terms is almost identical, and in fact the second term can be controlled by setting $a_0 = +\infty$. Therefore, we focus only on the first term. 

We write 
\[ \iint I_{(-\infty, a_0]}(a)\mu_{n,v}(a, w)(F_{n,v} - F_0)(da) (Q_{n,v} - Q_0)(dw)  = R_{n, a_0,v, 6} + R_{n, a_0, v,7} + R_{n,a_0, v,8}\] where
\begin{align*}
R_{n,a_0,v,6} &= \frac{1}{2N_v^2} \sum_{\stackrel{i \neq j}{ i, j \in \s{V}_{n,v}}} \gamma_{\mu_{n,v}, a_0}(O_i, O_j) \\
R_{n, a_0,v ,7} &= N_v^{-3/2}\d{G}_{n,v} \omega_{\mu_{n,v}, a_0}\\
R_{n,a_0,v,8} &= N_v^{-1}E_{0} [I_{(-\infty, a_0]}(A) \mu_{n,v}(A, W)]\ ,
\end{align*}
where we have defined $\omega_{\mu, a_0}(y, a, w) := I_{(-\infty, a_0]}(a)\mu(a, w)$ and 
\begin{align*}
\gamma_{\mu, a_0}(o_i, o_j) &:=  I_{(-\infty, a_0]}(a_i) \mu(a_i, w_j) + I_{(-\infty, a_0]}(a_j) \mu(a_j, w_i) \\
&\qquad- \int \left[ I_{(-\infty, a_0]}(a_i) \mu(a_i, w) + I_{(-\infty, a_0]}(a_j)\mu(a_j, w)\right]Q_0(dw) \\
&\qquad- \int_{-\infty}^{a_0} \left[ \mu(a, w_i) + \mu(a, w_j)\right]F_0(da) +  2\int I_{(-\infty, a_0]}(a) \mu(a, w) F_0(da) Q_0(dw)\ .
\end{align*}
For $R_{n,a_0,6}$, we define $\s{G}_{n,v} := \{ \gamma_{\mu_{n,v}, a_0}(O_i, O_j) : a_0 : \s{A}_0\}$ and $S_{n,v}(\gamma) :=  \sum_{\stackrel{i \neq j}{ i, j \in \s{V}_{n,v}}} \gamma(O_i, O_j)$. As in the proof of Lemma~\ref{lemma:emp_process_neg}, we begin by conditioning on $\s{T}_{n,v}$ using the tower property:
\[ E_0 \left[ \sup_{\gamma \in \s{G}_{n,v}} \left| S_{n,v}(\gamma) \right| \right] = E_0 \left\{ E_0 \left[\sup_{\gamma \in \s{G}_{n,v}} \left| S_{n,v}(\gamma) \right| \Bigg| \s{T}_{n,v} \right] \right\} \ .\]
The function $\mu_{n,v}$ is fixed with respect to the inner expectation, so we apply Lemma~2 of \cite{westling2020isotonic} to bound this inner expectation. The class $\s{G}_{n,v}$ is uniformly bounded and satisfies the uniform entropy condition since it is a convex combination of the class $\{ a \mapsto I_{(-\infty, a_0]}(a) : a_0 \in \s{A}_0\}$, various fixed functions, and integrals of the two. Therefore, Lemma~2 of \cite{westling2020isotonic} implies that 
\[ E_0 \left[\sup_{\gamma \in \s{G}_{n,v}} \left| S_{n,v}(\gamma) \right| \Bigg| \s{T}_{n,v} \right] \leq C \left[ N_v (N_v - 1) \right]^{1/2} \]
for some $C < \infty$ not depending on $n$. We thus have that $\sup_{a_0 \in \s{A}_0} \left| R_{n,a_0,v,6}\right| \leq (C /2) N_v^{-1}$, and since $\max_v N_v^{-1} = \boundeddet(n^{-1})$, we then have $\max_v \sup_{a_0 \in \s{A}_0} \left| R_{n,a_0,v,6}\right| = \bounded(n^{-1})$.

For $R_{n,a_0,7}$, since the class of functions $\{\omega_{\mu_{n,v}, a_0} : a_0 \in \s{A}_0\}$ is uniformly bounded almost surely for all $n$ large enough, $\max_v \sup_{a_0 \in \s{A}_0} \left| \d{G}_{n,v}\omega_{\mu_{n,v}, a_0}\right| = \bounded(1)$ by an analogous conditioning argument to that used above. Therefore, $\max_v \sup_{a_0 \in \s{A}_0} \left| R_{n, a_0,v ,7} \right| = \bounded\left(n^{-3/2}\right)$.


Finally, $\max_v\sup_{a_0 \in \s{A}_0}|R_{n,a_0,v,8}| = \bounded(n^{-1})$ since $\max_v |\mu_{n,v}| \leq K_0$ almost surely for all $n$ large enough. This completes the proof
\end{proof}

\begin{proof}[\bfseries{Proof of Theorem~\ref{thm:dr_cons_omega}}]
By Lemma~\ref{lemma:first_order}, we have that 
\[\sup_{a_0 \in \s{A}_0} \left| \Omega_n^\circ(a_0) - \Omega_0(a_0)\right| \leq \sup_{a_0 \in \s{A}_0} \left| \d{P}_{n}D_{a_0, \infty}^* \right| +  \sum_{j=1}^5 \max_v\sup_{a_0 \in \s{A}_0}\left| R_{n, a_0, v,j} \right| \ .\] 
The class $\{D_{a_0, \infty}^* : a_0 \in \d{R}\}$ is $P_0$-Donsker because it is a convex combination of the class $\{I_{(-\infty, a_0]}(a) : a_0 \in \s{A}_0\}$, which is well-known to have polynomial covering numbers, and integrals thereof, which thus also have polynomial covering numbers by  Lemma~1 of \cite{westling2020isotonic}. Since $P_0 D_{a_0, \infty}^* = 0$ for all $a_0$ by (A3), we then have 
\[\sup_{a_0 \in \s{A}_0} \left| \d{P}_nD_{a_0, \infty}^* \right| = n^{-1/2}\sup_{a_0 \in \s{A}_0} \left| \d{G}_nD_{a_0, \infty}^* \right| = \bounded\left(n^{-1/2}\right) \ .\]
Next, we have $\max_v\sup_{a_0 \in \s{A}_0}|R_{n,a_0,v,1}| = n^{-1/2}\max_v \sup_{a_0 \in \s{A}_0} | \d{G}_{n,v}\left( D_{a_0,n,v}^{\circ} - D_{a_0, \infty}^{\circ}\right)| = \fasterthan(n^{-1/2})$ by Lemma~\ref{lemma:emp_process_neg}. Since $\max_v\sup_{a_0 \in \s{A}_0} | F_{n,v}(a_0) - F_0(a_0)| = \bounded(n^{-1/2})$ and \[\max_v \left|\gamma_{\mu_{n,v}, F_0, Q_0} - \gamma_{\mu_{n,v}, F_0, Q_{n,v}}- \gamma_{\mu_0, F_0, Q_0}+ \gamma_{\mu_\infty, F_0, Q_0} \right| = \bounded(1)\ ,\]  $\max_v\sup_{a_0 \in \d{R}} |R_{n,a_0,v,2}| = \bounded(n^{-1/2})$. Additionally, $\max_v\sup_{a_0 \in \s{A}_0} | R_{n, a_0,v, 3}| = \fasterthan(n^{-1/2})$ by Lemma~\ref{lemma:uprocess}.

For $R_{n,a_0,v,4}$ we first have
\begin{align*}
\max_v\sup_{a_0 \in \s{A}_0} | R_{n,a_0,v,4}| &\leq  2K_1^{-2} \max_v \iint \left|\mu_{n,v}(a, w) - \mu_0(a, w)\right| \left| g_{n,v}(a,w)- g_0(a, w)\right| dP_0(a,w) \\
&= 2K_1^{-2} r_n \ .
\end{align*}

Finally, for $R_{n,a_0, v,5}$, since $|N_v - n/V| \leq 1$, $\left| 1 - VN_v / n\right| = O(n^{-1})$, so that $\max_v\sup_{a_0 \in \s{A}_0} |R_{n,a_0, v, 5}| = \fasterthan(n^{-1})$.

We therefore have that 
\[\sup_{a_0 \in \s{A}_0} \left| \Omega_n^\circ(a_0) - \Omega_0(a_0)\right| \leq \bounded\left(n^{-1/2}\right) + 2 r_n = \bounded\left( \max\left\{ n^{-1/2}, r_n\right\} \right) \ . \]
This establishes the first statement in the proof. For the second statement, it suffices to show that $r_n \inprob 0$. For this, we have by (A3) that
\begin{align*}
\frac{K_1^2}{2}r_n &\leq \max_v  \int_{\s{S}_1} \left|\mu_{n,v}- \mu_\infty\right| \left| g_{n,v} - g_0\right| dP_0 + \max_v \int_{\s{S}_2} \left|\mu_{n,v} - \mu_0\right| \left| g_{n,v} - g_\infty\right| dP_0\\
&\qquad +  \int_{\s{S}_3} \left|\mu_{n,v} - \mu_\infty\right| \left| g_{n,v} - g_\infty\right| dP_0 \\
&\leq  \left[P_0(\mu_{n,v} - \mu_\infty)^2 P_0\left( g_{n,v} - g_0\right)^2 \right]^{1/2}   + \left[ P_0(\mu_{n,v} - \mu_0)^2 P_0\left( g_{n,v} - g_\infty\right)^2 \right]^{1/2} \\
&\qquad + \left[P_0(\mu_{n,v} - \mu_\infty)^2 P_0\left(g_{n,v} - g_\infty\right)^2 \right]^{1/2} \ .
\end{align*}
Condition (A2) states that $\max_v P_0(\mu_{n,v} - \mu_\infty)^2 \inprob 0$ and $\max_v P_0(g_{n,v} - g_\infty)^2\inprob 0$, which implies in addition that $\max_v P_0(\mu_{n,v} - \mu_0)^2 = \bounded(1)$, and $\max_v P_0(g_{n,v} - g_0)^2 = \bounded(1)$ by the boundedness condition (A1). Therefore, (A1)--(A3) imply that $r_n \inprob 0$.
\end{proof}

\begin{proof}[\bfseries{Proof of Theorem~\ref{thm:dr_cons_test}}]
The proof proceeds in two steps. First, we show that under the stated conditions, $\|\Omega_n^\circ\|_{F_n, p} \inprob \|\Omega_0 \|_{F_0, p}$ for any $p \in [1, \infty]$. Second, we show  that $T_{n,\alpha,p} / n^{1/2} \inprob 0$. Then we will have that $\|\Omega_n^\circ\|_{F_n, p}  - T_{n,\alpha,p} / n^{1/2} \inprob \|\Omega_0 \|_{F_0, p}$, which is strictly positive by Proposition~\ref{prop:equivalence} since $H_A$ holds. The result follows.

To see that $\|\Omega_n^\circ\|_{F_n, p} \inprob \|\Omega_0 \|_{F_0, p}$, we first write
\begin{align*}
\left|\|\Omega_n^\circ\|_{F_n, p} - \|\Omega_0 \|_{F_0, p}\right|&\leq \left| \|\Omega_n^\circ\|_{F_n, p}  -\|\Omega_0\|_{F_n, p}\right|+ \left|\|\Omega_0\|_{F_n, p} -   \|\Omega_0 \|_{F_0, p}\right|
\end{align*}
The first term is bounded above by $\sup_{a \in \d{R}} \left|  \Omega_n^\circ(a)- \Omega_0(a)\right|$, which by Theorem~\ref{thm:dr_cons_omega} tends to zero in probability under (A1)--(A3). For the second term, for $p < \infty$, $\|\Omega_0\|_{F_n, p}^p \inprob \|\Omega_0 \|_{F_0, p}^p$ by the law of large numbers since $|\Omega_0|^p$ is bounded, which implies by the continuous mapping theorem that $\left|\|\Omega_n^\circ\|_{F_n, p} -   \|\Omega_0 \|_{F_0, p}\right| \inprob 0$. For $p = \infty$, we have $\|\Omega_0\|_{F_n, p} = \sup_{a\in\s{A}_n} |\Omega_0| \leq \|\Omega_0\|_{F_0, p} = \sup_{a\in\s{A}_0} |\Omega_0|$ for all $n$. Let $\varepsilon > 0$, and let $a_0 \in \s{A}_0$ be such that $|\Omega_0(a_0)| > \sup_{a \in \s{A}_0} |\Omega_0(a)| - \varepsilon / 2$. If $a_0$ is a mass point of $F_0$, then $a_0 \in \s{A}_n$ with probability tending to one, so that \[P_0\left(\sup_{a\in\s{A}_n} |\Omega_0| > \sup_{a\in\s{A}_0} |\Omega_0| - \varepsilon / 2\right) \to 1 \ ,\] which implies that $P_0\left(  \left|\|\Omega_0\|_{F_n, \infty} -   \|\Omega_0 \|_{F_0, \infty}\right| < \varepsilon\right) \to 1$. If $a_0$ is not a mass point of $F_0$, then $\Omega_0$ must be continuous at $a_0$, so that there exists a $\delta > 0$ such that $|\Omega_0(a) - \Omega_0(a_0)| < \varepsilon / 2$ for all $a$ such that $|a - a_0| < \delta$. Then $|\Omega_0(a)| > \|\Omega_0\|_{F_0, \infty} - \varepsilon$ for all such $a$. Since $P_0(\s{A}_n \cap (a_0 - \delta, a_0 + \delta) = \emptyset) \to 0$, we then have $P_0\left(  \left|\|\Omega_0\|_{F_n, \infty} -   \|\Omega_0 \|_{F_0, \infty}\right| < \varepsilon\right) \to 1$. In either case, since $\varepsilon$ was arbitrary, we have that $\left|\|\Omega_n^\circ\|_{F_n, p} -   \|\Omega_0 \|_{F_0, p}\right| \inprob 0$.

We have now shown that $\|\Omega_n^\circ\|_{F_n, p} \inprob \|\Omega_0 \|_{F_0, p}$, and it remains to show that $T_{n,\alpha,p} / n^{1/2} \inprob 0$. We recall that $T_{n,\alpha,p}$ is defined as $T_{n,\alpha,p} := \inf\left\{t : P_0\left( \|Z_n\|_{F_n,p} \leq t \mid O_1, \dotsc, O_n\right)  \geq 1 - \alpha\right\}$, where $Z_n$ is a mean-zero Gaussian process on $\s{A}_n := \{A_1, \dotsc, A_n\}$ with covariance  given by 
\[ \Sigma_n(s,t) := E_0 \left[ Z_n(s) Z_n(t) \mid O_1, \dotsc, O_n\right]  =  \frac{1}{V}\sum_{v=1}^V \d{P}_{n,v} \left( D_{s, n, v}^* D_{t, n, v}^* \right) \ .\]
 (The dependence on $O_1, \dotsc, O_n$ in the probability is due to $\Sigma_n$ depending on $O_1, \dotsc, O_n$.)  Therefore, $T_{n,\alpha, p} / n^{1/2} > \varepsilon$ implies that $P_0\left( \|Z_n\|_{F_n,p} / n^{1/2} > \varepsilon \mid O_1, \dotsc, O_n\right) \geq \alpha$, which further implies that $P_0\left( \sup_{a \in \s{A}_n} \left|Z_n(a) / n^{1/2} \right|  > \varepsilon \mid O_1, \dotsc, O_n\right) \geq \alpha$ since $\sup_{a \in \s{A}_n} \left|Z_n(a) \right| \geq  \|Z_n\|_{F_n,p}$ for all $p \in [1, \infty]$.  By Markov's inequality, we then have
\[P_0\left( T_{n,\alpha, p} / n^{1/2} > \varepsilon\right) \leq P_0\left( E_0\left[ \sup_{a \in \s{A}_n} \left|Z_n(a) / n^{1/2} \right| \Bigg| O_1, \dotsc, O_n\right] \geq \varepsilon\alpha\right) \ . \]
We define $\rho_n(s,t) := \left[ \Sigma_n(s,s) - 2\Sigma_n(s,t) + \Sigma_n(t,t)\right]^{1/2} / n^{1/2}$. Then, since $Z_n / n^{1/2}$ is a Gaussian process with covariance $\Sigma_n / n$, it is sub-Gaussian with respect to its intrinsic semimetric $\rho_n$, so that 
\begin{align*}
E_0\left[ \sup_{a \in \s{A}_n} \left|Z_n(a) / n^{1/2} \right| \Bigg| O_1, \dotsc, O_n\right] & \leq C\left\{\Sigma_n(a_0, a_0)^{1/2} / n^{1/2} + \int_0^\infty \left[ \log N(\varepsilon, \s{A}_n, \rho_n) \right]^{1/2} \, d\varepsilon\right\}
 \end{align*}
for any $a_0 \in \s{A}_n$ by Corollay~2.2.8 of \cite{van1996weak}. Here, $N(\varepsilon, \s{A}_n, \rho_n)$ is the minimal number of $\rho_n$ balls of radius $\varepsilon$ required to cover $\s{A}_n$. We note that for $\varepsilon \geq \left(\|\Sigma_n\|_{\infty} / n\right)^{1/2}$, $N(\varepsilon, \s{A}_n, \rho_n) = 1$, since it only takes one $\rho_n$ ball of radius $\left(\|\Sigma_n\|_\infty / n\right)^{1/2}$ to cover $\s{A}_n$. For $\varepsilon \leq \left(\|\Sigma_n\|_\infty / n\right)^{1/2}$, we have the trivial inequality $N(\varepsilon, \s{A}_n, \rho_n) \leq n$, since $|\s{A}_n| \leq n$. Thus, we have almost surely for all $n$ large enough that
\begin{align*}
  \int_0^\infty \left[ \log N(\varepsilon, \s{A}_n, \rho_n) \right]^{1/2} \, d\varepsilon &\leq  \int_0^{ \left(M / n\right)^{1/2}}\left[ \log n \right]^{1/2} \, d\varepsilon =  \left(M  n^{-1} \log n \right)^{1/2} \ .
\end{align*}
Therefore,
\begin{align*}
&  P_0\left( E_0\left[ \sup_{a \in \s{A}_n} \left|Z_n(a) / n^{1/2} \right| \Bigg| O_1, \dotsc, O_n\right] \geq \varepsilon\alpha\right) \\
&\qquad\leq P_0\left( C\Sigma_n(a_0, a_0)^{1/2} / n^{1/2} +  C\left(\|\Sigma_n\|_\infty n^{-1} \log n \right)^{1/2} \geq \varepsilon \alpha \right) \ .
\end{align*}
It is straightforward to see that condition (A1) implies that $\sup_{s, t \in \s{A}_0}|\Sigma_n(s,t)| = \bounded(1)$, so that the last probability tends to zero. for any $\epsilon, \alpha > 0$. Therefore, $P_0\left( T_{n,\alpha, p} / n^{1/2} > \varepsilon\right) \longrightarrow 0$ for any $\varepsilon > 0$, so that $ T_{n,\alpha, p} / n^{1/2} \inprob 0$. This completes the proof.
%
\end{proof}

\begin{proof}[\bfseries{Proof of Theorem~\ref{thm:weak_conv_omega}}]
As in the proof of Theorem~\ref{thm:dr_cons_omega}, $\max_v\sup_{a_0 \in \s{A}_0}|R_{n,a_0,v,1}| = \fasterthan(n^{-1/2})$ by Lemma~\ref{lemma:emp_process_neg}, $\max_v\sup_{a_0 \in \s{A}_0} | R_{n, a_0,v, 3}| = \fasterthan(n^{-1/2})$ by Lemma~\ref{lemma:uprocess}, $\max_v\sup_{a_0 \in \s{A}_0} | R_{n,a_0,v,4}|  =  \bounded(r_n) = \fasterthan(n^{-1/2})$ by assumption, and $\max_v\sup_{a_0 \in \s{A}_0} | R_{n,a_0,v,5}| = \fasterthan(n^{-1})$. For $R_{n,a_0,v,2}$, since $\mu_\infty = \mu_0$, we have
\[ R_{n,a_0,v,2} = \left(\gamma_{\mu_{n,v}, F_0, Q_0} - \gamma_{\mu_{n,v}, F_0, Q_{n,v}}\right) \left[ F_{n,v}(a_0) - F_0(a_0)\right] = \left(N_v^{-1/2} \d{G}_{n,v} \eta_{\mu_{n,v}, F_0}\right) \bounded(n^{-1/2}) \ ,\]
where we define $\eta_{\mu, F}(w) := \int \mu(a,w) \, F(da)$. Since $\eta_{\mu_{n,v}, F_0}$ is a fixed function relative to $\s{V}_{n,v}$, $\max_v \d{G}_{n,v} \eta_{\mu_{n,v}, F_0} = \bounded(1)$, so that $\max_v \sup_{a_0 \in \s{A}_0} |R_{n,a_0,v,2}| = \bounded(n^{-1})$.

We now have $ \Omega_n^\circ(a) - \Omega_0(a) = \d{P}_n D_{a, 0}^* + R_{n,a},$
where $\sup_{a \in \s{A}_0} |R_{n,a}| = \fasterthan(n^{-1/2})$. Since $\{ D_{a,0}^* : a \in \s{A}_0\}$ is a $P_0$-Donsker class, the result follows.
\end{proof}
Before proving Theorem~\ref{thm:weak_conv_omega}, we introduce several additional Lemmas. First, we demonstrate that $\Sigma_n$ is a uniformly consistent estimator of the limiting covariance $\Sigma_0$.
\begin{lemma}\label{lemma:covar_cons}
If the conditions of Theorem~\ref{thm:test_size} hold, then $E_0\left[\sup_{(s,t) \in \s{A}_0^2} | \Sigma_n(s,t) - \Sigma_0(s,t)| \right]\longrightarrow 0$.
\end{lemma}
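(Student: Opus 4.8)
The plan is to split $\Sigma_n$ into a ``true influence function'' piece and a ``nuisance error'' piece. Writing $\tilde\Sigma_n(s,t):=\frac{1}{V}\sum_{v=1}^V\d{P}_{n,v}(D^*_{s,0}D^*_{t,0})=\d{P}_n(D^*_{s,0}D^*_{t,0})$, which holds because the folds partition $\{1,\dotsc,n\}$, I would bound
\[
\sup_{(s,t)\in\s{A}_0^2}\bigl|\Sigma_n(s,t)-\Sigma_0(s,t)\bigr|\;\le\; A_n+B_n,\qquad A_n:=\sup_{(s,t)}\bigl|\tilde\Sigma_n(s,t)-\Sigma_0(s,t)\bigr|,
\]
where $B_n:=\sup_{(s,t)}\bigl|\frac1V\sum_v\d{P}_{n,v}(D^*_{s,n,v}D^*_{t,n,v}-D^*_{s,0}D^*_{t,0})\bigr|$, and control $E_0[A_n]$ and $E_0[B_n]$ separately. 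For $A_n$: the class $\{(y,a,w)\mapsto D^*_{s,0}(y,a,w)D^*_{t,0}(y,a,w):(s,t)\in\s{A}_0^2\}$ is nonrandom, has an envelope of order $(1+|Y|)^2$ lying in $L_1(P_0)$ by (A1) (in particular $E_0[Y^2]<\infty$ and boundedness of $\mu_0,g_0,\theta_0,\gamma_0,\Omega_0$), and has polynomial uniform covering numbers, since its index enters only through the VC class $\{I_{(-\infty,u]}\}$, monotone functions, and integrals thereof (Lemma~1 of \cite{westling2020isotonic}). Hence it is $P_0$-Glivenko--Cantelli, so $A_n\to0$ almost surely; since $A_n\le\d{P}_n[c(1+Y^2)]+c(1+E_0[Y^2])$ for a constant $c$ and $\{\d{P}_n[c(1+Y^2)]\}_n$ is uniformly integrable, $E_0[A_n]\to0$.

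For $B_n$: using $|ab-cd|\le|a-c|\,|b|+|c|\,|b-d|$ and the Cauchy--Schwarz inequality applied to $\d{P}_{n,v}$,
\[
B_n\;\le\;\frac1V\sum_{v=1}^V\bigl(\d{P}_{n,v}\Delta_{n,v}^2\bigr)^{1/2}\Bigl[\bigl(\d{P}_{n,v}\bar D_{n,v}^2\bigr)^{1/2}+\bigl(\d{P}_{n,v}\bar D_0^2\bigr)^{1/2}\Bigr],\qquad \Delta_{n,v}:=\sup_{s\in\s{A}_0}\bigl|D^*_{s,n,v}-D^*_{s,0}\bigr|,
\]
with $\bar D_{n,v}:=\sup_s|D^*_{s,n,v}|$ and $\bar D_0:=\sup_s|D^*_{s,0}|$. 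By (A1) both $\bar D_{n,v}$ and $\bar D_0$ are of order $1+|Y|$, so $E_0[\d{P}_{n,v}\bar D_{n,v}^2]$ and $E_0[\d{P}_{n,v}\bar D_0^2]$ are $O(1)$; and $\Delta_{n,v}^2\le2\bar D_{n,v}^2+2\bar D_0^2\lesssim1+Y^2$ uniformly, so by the tower property, conditioning on the training fold $\s{T}_{n,v}$ (which freezes $\mu_{n,v},g_{n,v},F_{n,v},Q_{n,v}$, and for which $E_0[\d{P}_{n,v}f\mid\s{T}_{n,v}]=P_0f$), one gets $E_0[\d{P}_{n,v}\Delta_{n,v}^2]=E_0[P_0\Delta_{n,v}^2]$, which tends to $0$ by bounded convergence \emph{provided} $P_0\Delta_{n,v}^2\inprob0$. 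Applying Cauchy--Schwarz now to $E_0$ then gives $E_0[B_n]\to0$. It therefore remains to establish the key estimate $\sup_{s\in\s{A}_0}\|D^*_{s,n,v}-D^*_{s,0}\|_{P_0,2}\inprob0$.

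I would prove this by expanding $D^*_{s,n,v}-D^*_{s,0}$ and bounding the resulting pieces in $L_2(P_0)$ uniformly over $s$. The factors $F_{n,v}(s)-F_0(s)$ are $\bounded(n^{-1/2})$ uniformly by Glivenko--Cantelli and multiply envelopes of order $1+|Y|$. The residual difference $\tfrac{y-\mu_{n,v}(a,w)}{g_{n,v}(a,w)}-\tfrac{y-\mu_0(a,w)}{g_0(a,w)}$ equals $(g_{n,v}g_0)^{-1}\bigl[(g_0-g_{n,v})(y-\mu_0)+g_0(\mu_0-\mu_{n,v})\bigr]$, whose $L_2(P_0)$-norm vanishes by (A1)--(A2) and (A4); the only non-obvious term, $E_0[(g_0-g_{n,v})^2(Y-\mu_0)^2]$, tends to $0$ because $(g_0-g_{n,v})^2$ is bounded and tends to $0$ in $P_0$-measure while $(Y-\mu_0)^2\in L_1(P_0)$, via dominated convergence along subsequences. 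Finally, for $\theta_{n,v}(a)-\theta_0(a)$, $\gamma_{n,v}-\gamma_0$, and $\Omega_{\mu_{n,v},F_{n,v},Q_{n,v}}(s)-\Omega_0(s)$, I would replace $\mu_{n,v}$ by $\mu_0$ and the empirical marginals by $F_0,Q_0$ one at a time: replacing $\mu_{n,v}$ by $\mu_0$ costs $O(\|\mu_{n,v}-\mu_0\|_{P_0,2})$ in $L_2(F_0)$, since $F_0\times Q_0$ is absolutely continuous with respect to the joint law of $(A,W)$ under $P_0$ with density bounded by $K_1^{-1}$ (that joint law having density $g_0$ with respect to $F_0\times Q_0$); replacing $Q_0$ by $Q_{n,v}$ (resp.\ $F_0$ by $F_{n,v}$) against the \emph{fixed} $\mu_0$ costs $O(|\s{T}_{n,v}|^{-1/2})$ in expected $L_2(F_0)$-norm by a direct Fubini/variance calculation; and the leftover cross term, in which an empirical marginal is paired with the data-dependent increment $\mu_{n,v}-\mu_0$, is handled by conditioning on $\s{T}_{n,v}$ and a second-moment argument. (The uniform statement $\sup_s|\Omega_{\mu_{n,v},F_{n,v},Q_{n,v}}(s)-\Omega_0(s)|\to0$ can alternatively be read off from the proof of Theorem~\ref{thm:dr_cons_omega}.) Summing these bounds yields the key estimate, and combining with the estimates for $A_n$ and $B_n$ gives the claim.

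The step I expect to be the crux is the key estimate $P_0\Delta_{n,v}^2\inprob0$, and within it the ``cross'' pieces in which a training-fold empirical measure ($Q_{n,v}$ or $F_{n,v}$) is integrated against the data-dependent estimator $\mu_{n,v}$: because the class $\s{F}_0$ in (A1) is not assumed Glivenko--Cantelli, these cannot be dispatched by a uniform law of large numbers over $\s{F}_0$, and one must instead condition on the training fold and carry out a careful second-moment calculation whose diagonal contribution is of order $|\s{T}_{n,v}|^{-1}$ and whose off-diagonal contribution is shown negligible by the independence afforded by cross-fitting. The $Y$-terms in the residual difference are a secondary technical point, requiring the truncation/dominated-convergence argument above rather than a crude bound, because $g_0-g_{n,v}$ is not accompanied by a vanishing factor and only a second moment of $Y$ is available.
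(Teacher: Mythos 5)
Your overall route is essentially a transposed version of the paper's. The paper decomposes $\Sigma_n-\Sigma_0$ fold-by-fold into $(\d{P}_{n,v}-P_0)(D^*_{s,n,v}D^*_{t,n,v})$, controlled by a maximal inequality applied conditionally on $\s{T}_{n,v}$ (so the estimated class is fixed and permanence of entropy bounds applies), plus $P_0(D^*_{s,n,v}D^*_{t,n,v}-D^*_{s,0}D^*_{t,0})$, controlled by Cauchy--Schwarz and the $L_2(P_0)$-convergence of the estimated influence functions; you instead put the empirical fluctuation on the fixed Glivenko--Cantelli class $\{D^*_{s,0}D^*_{t,0}\}$ and move the nuisance error under $\d{P}_{n,v}$, then use the tower property to replace $E_0[\d{P}_{n,v}\Delta_{n,v}^2]$ by $E_0[P_0\Delta_{n,v}^2]$ (a correct use of cross-fitting). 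Both versions bottom out in the same key estimate, that $P_0$-square errors of $D^*_{s,n,v}-D^*_{s,0}$ vanish uniformly in $s$, and your truncation argument for the $Y$-weighted $g$-error is a legitimate way to get there from (A1)--(A2)--(A4).

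The step that does not go through as written is precisely the one you single out as the crux. The pieces of $D^*_{s,n,v}$ in which the training-fold empirical marginals are integrated against $\mu_{n,v}$ --- namely $\theta_{n,v}$, $\gamma_{n,v}$, $\int[\,\cdot\,]\mu_{n,v}\,dF_{n,v}$ and $\Omega_{\mu_{n,v},F_{n,v},Q_{n,v}}$ --- cannot be dispatched by ``conditioning on $\s{T}_{n,v}$ and the independence afforded by cross-fitting,'' because $F_{n,v}$, $Q_{n,v}$ and $\mu_{n,v}$ are all built from the \emph{same} training observations: conditioning on $\s{T}_{n,v}$ freezes all of them simultaneously, so there is no randomness left to average and no mean-zero off-diagonal cancellation to exploit; cross-fitting only decouples the validation measure $\d{P}_{n,v}$ from the training-fold nuisances. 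Terms such as $\int(\mu_{n,v}-\mu_0)\,dQ_{n,v}$ are in-sample averages of the estimation error and are not controlled by the out-of-sample condition (A2) (nor does reordering the replacements help: the cross coupling between a training empirical marginal and $\mu_{n,v}-\mu_0$ reappears wherever you put it). The paper handles this point differently: it bounds $\sup_s|D^*_{s,n,v}-D^*_{s,0}|$ by the envelope $F_{n,v}$ from Lemma~\ref{lemma:emp_process_neg}, whose components involve only the population marginals $F_0,Q_0$ together with $\sup_a|F_{n,v}(a)-F_0(a)|$, so the plug-in-marginal discrepancies never enter the argument explicitly. Your more explicit expansion is welcome, but for these cross pieces you need a different mechanism than the independence argument you describe (e.g.\ an in-sample consistency or entropy-type condition on the class containing $\mu_{n,v}$); as proposed, that step fails.
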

\begin{proof}[\bfseries{Proof of Lemma~\ref{lemma:covar_cons}}]
We recall that $\Sigma_n(s,t) := \frac{1}{V}\sum_{v=1}^V \d{P}_{n,v} [D_{s,n,v}^* D_{t,n,v}^*]$ and $\Sigma_0(s,t) := P_0[ D_{s,0}^* D_{t,0}^*]$. We can thus write
\begin{align*}
\nonumber\Sigma_n(s,t)  - \Sigma_0(s,t) &= \frac{1}{V} \sum_{v=1}^V \left[ (\d{P}_{n,v} - P_0)(D_{s,n,v}^* D_{t,n,v}^*) + P_0(D_{s,n,v}^* D_{t,n,v}^* - D_{s,0}^* D_{t,0}^*) \right]
\end{align*}
Therefore,
\begin{align}
E_0\left[\sup_{(s,t) \in \s{A}_0^2} | \Sigma_n(s,t) - \Sigma_0(s,t)| \right] &\leq \max_v N_v^{-1/2} E_0 \left[\sup_{(s,t) \in \s{A}_0^2}\left| \d{G}_{n,v}(D_{s,n,v}^* D_{t,n,v}^*)\right| \right]  \nonumber\\
&\qquad + \max_v E_0 \left[\sup_{(s,t) \in \s{A}_0^2}\left| P_0(D_{s,n,v}^* D_{t,n,v}^* - D_{s,0}^* D_{t,0}^*)  \right| \right] \label{eq:decomp_sigma}\ .
\end{align}
For the first term, a conditioning argument analogous to that in the proof of  Lemma~\ref{lemma:emp_process_neg} in conjunction with Theorem~2.14.1 of \cite{van1996weak} implies that \[\max_v E_0 \left[\sup_{(s,t) \in \s{A}_0^2}\left| \d{G}_{n,v}(D_{s,n,v}^* D_{t,n,v}^*)\right| \right] = O(1) \ ,\]since $\{ D_{s,n,v}^* D_{t,n,v}^* : (s,t) \in \s{A}_0^2\}$ satisfies a suitable entropy bound conditional upon the nuisance function estimators by permanence properties of entropy bounds. Therefore, the first term is $\boundeddet(n^{-1/2})$, and in particular is $\fasterthandet(1)$.

For the second term, we note that
\begin{align*}
 P_0 \left| D_{s,n,v}^* D_{t,n,v}^* - D_{s,0}^* D_{t,0}^*\right|  &\leq  P_0 \left|\left(D_{s,n,v}^* - D_{s,0}^*\right) D_{t,0}^* \right| +P_0 \left|\left(D_{t,n,v}^* - D_{t,0}^*\right) D_{s,n,v}^* \right|  \\
&\leq \left\{ P_0 \left(  D_{s,n,v}^* - D_{s,0}^*\right)^2P_0 \left(D_{t,0}^*\right)^2 \right\}^{1/2} \\
&\qquad+\left\{P_0  \left(D_{t,n,v}^* - D_{t,0}^*\right)^2   P_0 \left( D_{s,n,v}^* \right)^2\right\}^{1/2}  \ .
\end{align*}
Since $P_0 (D_{s,n,v}^*)^2$ and $P_0 (D_{t,0}^*)^2$ are uniformly bounded for all $n$ large enough by condition (A1), the preceding display is bounded up to a constant by $P_0 \left(F_{n,v}^2\right)$ for $F_{n,v}$ defined in the proof of Lemma~\ref{lemma:emp_process_neg}. This tends to zero in expectation uniformly over $v$ by an argument analogous to that proof of  Lemma~\ref{lemma:emp_process_neg} and the assumption that $r_n = \fasterthan(n^{-1/2})$. This implies that the second term in \eqref{eq:decomp_sigma} tends to zero.
\end{proof}

Given $O_1, \dotsc, O_n$, let $Z_n$ be distributed according to a mean-zero Gaussian process with covariance $\Sigma_n$ as defined in the main text. The next lemma shows that $Z_n$ converges weakly in $\ell^\infty(\s{A}_0)$ to the limiting Gaussian process $Z_0$ with covariance $\Sigma_0$.
\begin{lemma}\label{lemma:Zn_conv}
If the conditions of Theorem~\ref{thm:test_size} hold, then $Z_n$ converges weakly in $\ell^\infty(\s{A}_0)$ to the limiting Gaussian process $Z_0$.
Furthermore, $\|Z_n\|_{F_n,p} - \|Z_n\|_{F_0, p} \inprob 0$, so that $\|Z_n\|_{F_n,p} \indist \|Z_0\|_{F_0, p}$ for any $p \in [1, \infty]$. 
\end{lemma}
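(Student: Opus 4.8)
The plan is to prove the two assertions separately; the second is the non-routine one. Throughout I use the facts available under the conditions of Theorem~\ref{thm:test_size}: $\mu_\infty = \mu_0$ and $g_\infty = g_0$; the class $\{D_{a,0}^* : a \in \s{A}_0\}$ is $P_0$-Donsker (shown in the proof of Theorem~\ref{thm:dr_cons_omega}), so that $(\s{A}_0,\rho_0)$ is totally bounded for $\rho_0(s,t) := \|D_{s,0}^* - D_{t,0}^*\|_{P_0,2} = [\Sigma_0(s,s) - 2\Sigma_0(s,t) + \Sigma_0(t,t)]^{1/2}$; and $\sup_{(s,t)\in\s{A}_0^2}|\Sigma_n(s,t) - \Sigma_0(s,t)| \inprob 0$ by Lemma~\ref{lemma:covar_cons}.

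For the weak convergence of $Z_n$, I would pass to a Gaussian multiplier representation. Let $\xi_1,\dots,\xi_n \iidsim N(0,1)$ be independent of $O_1,\dots,O_n$ and set
\[ \widehat{Z}_n(a) := \frac{1}{\sqrt{V}} \sum_{v=1}^V N_v^{-1/2} \sum_{i \in \s{V}_{n,v}} \xi_i\, D_{a,n,v}^*(O_i) \ . \]
Since the folds are disjoint, $E[\widehat{Z}_n(s)\widehat{Z}_n(t) \mid O_1,\dots,O_n] = \Sigma_n(s,t)$, so conditionally on the data $\widehat{Z}_n$ is a mean-zero Gaussian process with the same covariance as $Z_n$; hence $Z_n$ and $\widehat{Z}_n$ have the same law in $\ell^\infty(\s{A}_0)$ and it suffices to study $\widehat{Z}_n$. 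Write $\widehat{Z}_n = \widehat{Z}_n^0 + \widehat{R}_n$, where $\widehat{Z}_n^0$ replaces $D_{a,n,v}^*$ by the true $D_{a,0}^*$. Then $\widehat{Z}_n^0$ is a cross-fitted Gaussian multiplier process for the fixed $P_0$-Donsker class $\{D_{a,0}^* : a \in \s{A}_0\}$ of mean-zero functions, so by the Gaussian multiplier central limit theorem and independence across folds (e.g.\ \citealp{van1996weak}) it converges weakly in $\ell^\infty(\s{A}_0)$ to a mean-zero Gaussian process with covariance $\tfrac{1}{V}\sum_{v=1}^V P_0[D_{s,0}^* D_{t,0}^*] = \Sigma_0(s,t)$, i.e.\ to $Z_0$. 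For the remainder, conditioning on the training set $\s{T}_{n,v}$ (freezing $\mu_{n,v},g_{n,v}$) and applying a maximal inequality for multiplier processes to $\{D_{a,n,v}^* - D_{a,0}^* : a \in \s{A}_0\}$ — whose uniform entropy integral is bounded uniformly in $n$ and $v$ as in the proof of Lemma~\ref{lemma:emp_process_neg} (via Lemma~1 of \citealp{westling2020isotonic}), and whose envelope has $L_2(P_0)$-norm tending to zero under (A1)--(A2) and (A4) — yields $\sup_{a\in\s{A}_0}|\widehat{R}_n(a)| \inprob 0$. Slutsky's lemma then gives that $\widehat{Z}_n$, hence $Z_n$, converges weakly in $\ell^\infty(\s{A}_0)$ to $Z_0$.

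For the second assertion, fix $p \in [1,\infty)$ and claim $\int |Z_n|^p\,d(F_n - F_0) \inprob 0$. The \textbf{main obstacle} is that the sample paths of $Z_0$, and hence those of $Z_n$, need not be continuous at the atoms of $F_0$ (the semimetric $\rho_0(s,t)$ does not vanish as $s\to t$ across an atom), so one cannot simply combine the continuous mapping theorem with the portmanteau lemma on $(Z_n,F_n)$. Instead I would use the asymptotic $\rho_0$-equicontinuity implied by the weak convergence just established: for every $\eta,\varepsilon>0$ there is $\delta>0$ with $\limsup_n P_0(\sup_{\rho_0(s,t)<\delta}|Z_n(s)-Z_n(t)|>\eta) < \varepsilon$. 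Fix such a $\delta$, a finite $\delta$-net $t_1,\dots,t_N$ of $(\s{A}_0,\rho_0)$, and a fixed nonrandom Borel partition $B_1,\dots,B_N$ of $\s{A}_0$ with $B_j \subseteq \{a : \rho_0(a,t_j)<\delta\}$. On the event $\{\|Z_n\|_{F_0,\infty}\le M\}\cap\{\sup_{\rho_0(s,t)<\delta}|Z_n(s)-Z_n(t)|\le\eta\}$ one has $\big||Z_n(a)|^p - |Z_n(t_j)|^p\big| \le pM^{p-1}\eta$ for $a\in B_j$, whence
\[ \left|\int |Z_n|^p\,d(F_n - F_0)\right| \le 2pM^{p-1}\eta + M^p \sum_{j=1}^N |(F_n - F_0)(B_j)| \ . \]
Because the partition is nonrandom and finite, $\sum_{j=1}^N |(F_n - F_0)(B_j)| \to 0$ almost surely by the strong law of large numbers applied cell by cell, while $\|Z_n\|_{F_0,\infty} = \bounded(1)$ by tightness; choosing $M$ large, then $\eta$ small, then $\delta$ accordingly shows $\int |Z_n|^p\,d(F_n-F_0)\inprob 0$. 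Hence $\|Z_n\|_{F_n,p}^p - \|Z_n\|_{F_0,p}^p \inprob 0$, and since both are $\bounded(1)$ and $x\mapsto x^{1/p}$ is uniformly continuous on compacts, $\|Z_n\|_{F_n,p} - \|Z_n\|_{F_0,p}\inprob 0$. For $p=\infty$ I would run the analogous net argument, additionally using that each $t_j$ that is a mass point of $F_0$ lies in $\s{A}_n$ with probability tending to one and each $t_j$ that is not is within vanishing $\rho_0$-distance of $\s{A}_n$, exactly as in the proof of Theorem~\ref{thm:dr_cons_test}. Finally, since $h \mapsto \|h\|_{F_0,p}$ is $1$-Lipschitz on $\ell^\infty(\s{A}_0)$ (as $\|h-g\|_{F_0,p} \le \sup_{a\in\s{A}_0}|h(a)-g(a)|$ for every $p$), the continuous mapping theorem gives $\|Z_n\|_{F_0,p}\indist\|Z_0\|_{F_0,p}$, and combining with $\|Z_n\|_{F_n,p} - \|Z_n\|_{F_0,p}\inprob 0$ and Slutsky's lemma yields $\|Z_n\|_{F_n,p}\indist\|Z_0\|_{F_0,p}$ for every $p\in[1,\infty]$.
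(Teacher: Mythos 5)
Your argument is correct, but it follows a genuinely different route from the paper in both halves. For the weak convergence, the paper works directly with the conditional Gaussian law of $Z_n$: finite-dimensional convergence comes from characteristic functions together with the uniform consistency of $\Sigma_n$ (Lemma~\ref{lemma:covar_cons}), and asymptotic tightness from the fact that, given the data, $Z_n$ is sub-Gaussian for the semimetric $d_n(s,t)=|F_n(s)-F_n(t)|^{1/2}$ (via (A1)), combined with explicit covering-number bounds and Corollary~2.2.8 of \cite{van1996weak}. You instead realize $Z_n$ as a cross-fitted Gaussian-multiplier process, apply the multiplier CLT to the fixed Donsker class $\{D_{a,0}^*: a \in \s{A}_0\}$ fold by fold (using $P_0D_{a,0}^*=0$ and independence across folds), and control the plug-in remainder by a multiplier maximal inequality; this dispenses with the covering-number computation for the conditional process, but the remainder step requires the same envelope/entropy bookkeeping the paper carries out in Lemmas~\ref{lemma:emp_process_neg} and~\ref{lemma:covar_cons} (including the terms coming from $F_{n,v}$, $Q_{n,v}$, $\theta_{n,v}$, $\gamma_{n,v}$), which you only sketch. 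For the norm comparison, the paper obtains exact equality $\|Z_n\|_{F_n,\infty}=\|Z_n\|_{F_0,\infty}$ from the observation that $\rho_n(s,t)^2\le C\,|F_n(s)-F_n(t)|$ forces $Z_n$ to be a step function with jumps at the data points, and for $p<\infty$ it partitions $\s{A}_0$ into intervals of small $F_0$-mass so that the cell discrepancies are controlled by $\|F_n-F_0\|_\infty$; you instead use the asymptotic $\rho_0$-equicontinuity implied by the weak convergence just proved, a fixed $\rho_0$-net partition, and a cell-by-cell strong law, which is more elementary (the cells need not be intervals and no uniform Glivenko--Cantelli rate is invoked) but yields only convergence in probability without a rate --- sufficient here, since the $o_{P_0}(n^{-1/2})$ statement is needed only for $\Omega_n^\circ$ in Lemma~\ref{lemma:Omegan_norm}, not for $Z_n$ --- while your $p=\infty$ case is handled only by analogy with the net argument in Theorem~\ref{thm:dr_cons_test} (it does go through, using $\rho_0(s,t)^2\lesssim |F_0(s)-F_0(t)|$ under $H_0$), whereas the paper's step-function observation gives that case for free.
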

\begin{proof}[\bfseries{Proof of Lemma~\ref{lemma:Zn_conv}}]
We first demonstrate that the finite-dimensional marginals of $Z_n$ converge in distribution to the finite-dimensional marginals of $Z_{0}$. We let $\Sigma_{n,a}$ be the covariance matrix of $(Z_n(a_1), \dotsc, Z_n(a_m))$ and $\Sigma_{0,a}$ be the covariance matrix of $Z_{0,a} = (Z_0(a_1), \dotsc, Z_0(a_m))$. We then have since $Z_n$ is a mean-zero Gaussian process conditional on $O_1, \dotsc, O_n$ and $Z_0$ is a mean-zero Gaussian process that
\begin{align*}
\left|E_0\left[ \exp\{ i t^T Z_{n,a} \} \right] - E_0\left[\exp\{ i t^T \d{G}_{0,a} \} \right]\right| &= \left|E_0\left\{ E\left[ \exp\{ i t^T Z_{n,a} \} \mid O_1, \dotsc, O_n\right]  \right\}- E_0\left[\exp\{ i t^T \d{G}_{0,a} \} \right]\right|\\
& =\left|E_0\left[ \exp\left\{ -\tfrac{1}{2} t^T \Sigma_{n,a} t\right\} -\exp\left\{ -\tfrac{1}{2} t^T \Sigma_{0,a} t\right\} \right]\right| \\
&\leq E_0 \left| \tfrac{1}{2} t^T  \left(\Sigma_{n,a}- \Sigma_{0,a}\right) t\right| \\
&\leq E_0\left[ \sup_{s,t} |\Sigma_{n}(s,t)- \Sigma_{0}(s,t)| \right]  \sum_{i,j} |t_i t_j| \ ,
\end{align*}
which tends to zero for every $t$ by Lemma~\ref{lemma:covar_cons}. Therefore, 
\[(Z_n(a_1), \dotsc, Z_n(a_m)) \indist  (Z_0(a_1), \dotsc, Z_0(a_m))\]
 for any $(a_1, \dotsc, a_m) \in \s{A}_0^m$ and $m \in \{1, 2, \dotsc\}$.

In order to show that $Z_n$ converges weakly in $\ell^\infty(\s{A}_0)$ to the limiting Gaussian process $Z_0$, we need also to demonstrate asymptotic uniform mean-square equicontinuity, meaning that for all $\varepsilon$ and $\eta > 0$, there exists $\delta > 0$ such that 
\[ P_0 \left( \sup_{d_0(s,t) < \delta} |Z_n(s) - Z_n(t)| > \varepsilon \right) < \eta \ ,\]
where $d_0(s,t) := |F_0(s) - F_0(t)|^{1/2}$. We define $d_n(s, t):= |F_n(s) - F_n(t)|^{1/2}$. Then $\sup_{(s,t) \in \s{A}_0^2} |d_n(s,t) - d_0(s,t)| \inprob 0$. We note that since $Z_n$ is a Gaussian process conditional on $O_1,\dotsc,O_n$ with covariance $\Sigma_n$, it is sub-Gaussian with respect to the semi-metric $\rho_n$ given by $\rho_n(s,t) := [\Sigma_n(s,s) + \Sigma_n(t,t) - 2\Sigma_n(s,t)]^{1/2}$. Furthermore, it is straightforward to verify that condition (A1) implies that $\rho_n(s,t) \leq Cd_n(s,t)$ for all $(s,t) \in \s{A}_0^2$ and some $C < \infty$ not depending on $n$, so that $Z_n$ is sub-Gaussian with respect to $d_n$ as well. Therefore, by Corollary~2.2.8 of \cite{van1996weak}, 
\[ E\left[ \sup_{d_n(s,t) < \delta} |Z_n(s) - Z_n(t)| \mid O_1, \dotsc O_n \right] \leq C' \int_0^\delta \left[ \log N(x, \s{A}_0, d_n)\right]^{1/2} \, dx \]
for every $\delta > 0$ and some $C < \infty$ not depending on $n$ or $\delta$, where, as before, $N(x, \s{A}_0, d)$ is the minimal number of $d$-balls of radius $x$ required to cover $\s{A}_0$. For $x < n^{-1/2}$, $N(x, \s{A}_0, d_n) \leq n$, and $N(x, \s{A}_0, d_n) \leq x^{-2}$ otherwise, so that 
\[ E\left[ \sup_{d_n(s,t) < \delta} |Z_n(s) - Z_n(t)| \mid O_1, \dotsc O_n \right] \leq C''\left[ \left(\log n\right)^{1/2} n^{-1/2} + h(\delta)\right]\ , \]
where $h(x) = x \left[ \log (1/x)\right]^{1/2}$, which tends to zero as $x \to 0$. Thus, for any $\alpha > 0$ we have
\begin{align*}
&P_0\left( \sup_{d_0(s,t) < \delta} |Z_n(s) - Z_n(t)| > \varepsilon\right) = E_0\left[ P_0\left( \sup_{d_0(s,t) < \delta} |Z_n(s) - Z_n(t)| > \varepsilon \mid O_1, \dotsc O_n\right) \right]\\
&\qquad\qquad\leq E_0\left[ P_0\left( \sup_{d_0(s,t) < \delta} |Z_n(s) - Z_n(t)| > \varepsilon \mid \| d_n - d_0 \|_\infty < \alpha, O_1, \dotsc O_n\right)\right]\\
&\qquad\qquad \qquad+ P_0\left(\| d_n - d_0 \|_\infty \geq \alpha \right)  \\
&\qquad\qquad\leq E_0\left[ P_0\left( \sup_{d_n(s,t) < \delta + \alpha} |Z_n(s) - Z_n(t)| > \varepsilon \mid \| d_n - d_0 \|_\infty < \alpha, O_1, \dotsc O_n\right)\right] \\
&\qquad\qquad \qquad+ P_0\left(\| d_n - d_0 \|_\infty \geq \alpha \right)  \\
&\qquad\qquad\leq \varepsilon^{-1} C''\left[ \left(\log n\right)^{1/2}\min\{\delta + \alpha, n^{-1/2}\} + h(\delta + \alpha) \right] + P_0\left(\| d_n - d_0 \|_\infty \geq \alpha \right) \ .
\end{align*}
We can choose $\delta$ and $\alpha$ such that $C'' h(\delta + \alpha) /\varepsilon < \eta / 3$. For any such fixed $\delta$ and $\alpha$, $n^{-1/2} < \delta + \alpha$ and $\varepsilon^{-1} C''  \left( n^{-1} \log n\right)^{1/2}  < \eta/ 3$ for all $n$ large enough. Finally, for any $\alpha > 0$, $P_0\left(\| d_n - d_0 \|_\infty \geq \alpha \right) < \eta /3$ for all $n$ large enough since $\| d_n - d_0 \|_\infty \inprob 0$. We thus have that the limit superior as $n \to \infty$ of the preceding display is smaller than $\eta$.

For the claim that $\|Z_n\|_{F_n,p} - \|Z_n\|_{F_0, p} \inprob 0$, we first note that $Z_n(s) = Z_n(t)$ almost surely for any $s,t$ such that $F_n(s) = F_n(t)$ since $\rho_n(s,t)^2 = E \left\{ \left[Z_n(s) - Z_n(t)\right]^2\right\} \leq  C |F_n(s) - F_n(t)|$. Therefore, $Z_n$ is almost surely a right-continuous step function with steps at $A_1, \dotsc, A_n$, so that $\|Z_n\|_{F_n, \infty} = \|Z_n\|_{F_0, \infty}$ almost surely. For the case that  $p \in [1, \infty)$, we let $\varepsilon > 0$. Then, for any $\delta, \gamma > 0$ we have
\begin{align*}
P_0\left(\left| \|Z_n\|_{F_n,p}-\|Z_n\|_{F_0,p} \right| > \varepsilon  \right) &\leq P_0\left(\left| \|Z_n\|_{F_n,p}-\|Z_n\|_{F_0,p} \right| > \varepsilon \, \Bigg| \,  \sup_{d_0(s,t) < \delta} |Z_n(s) - Z_n(t)| \leq  \gamma \right) \\
&\qquad \qquad + P_0\left(  \sup_{d_0(s,t) < \delta} |Z_n(s) - Z_n(t)| > \gamma \right)  \ .
\end{align*}
The second term tends to zero by the above. For the first term, we let $\s{A}_1^+, \dotsc \s{A}_m^+$ be intervals covering $\s{A}_0$ such that $\s{A}_0 \cap \s{A}_j^+ \neq \emptyset$ and such that $\max_{1 \leq j\leq m}F_0\left(\s{A}_j^+\right) \leq \delta^2$. This can be done with $m \leq 2 \delta^{-2}$ intervals. We let $a_j \in \s{A}_j^+ \cap \s{A}_0$ for each $j$. We then define $Z_n^+$ as the stochastic process on $\s{A}_0$ such that $Z_n^+(a) = Z_n(a_j)$ for all $a \in \s{A}_j^+$ for each $j \in \{1, \dotsc, m\}$. Given that $\sup_{d_0(s,t) < \delta} |Z_n(s) - Z_n(t)| \leq \gamma$, we then have
\begin{align*}
\left| \|Z_n\|_{F_n, p} - \|Z_n^+\|_{F_n, p} \right| &\leq \|Z_n - Z_n^+\|_{F_n, p} = \left[ \sum_{j=1}^m \int_{\s{A}_j^+} \left| Z_n(a) - Z_n^+(a_j)\right|^p \, dF_n(a) \right]^{1/p} \leq \gamma 
\end{align*}
and
\begin{align*}
\left| \|Z_n\|_{F_0, p} - \|Z_n^+\|_{F_0, p} \right| &\leq \|Z_n - Z_n^+\|_{F_0, p} = \left[ \sum_{j=1}^m \int_{\s{A}_j^+} \left| Z_n(a) - Z_n^+(a_j)\right|^p \, dF_0(a) \right]^{1/p} \leq \gamma \ .
\end{align*}
Therefore, if $\sup_{d_0(s,t) < \delta} |Z_n(s) - Z_n(t)| \leq \gamma$, then
\begin{align*}
 \left| \|Z_n\|_{F_n,p}-\|Z_n\|_{F_0,p} \right| &\leq 2\gamma + \left| \|Z_n^+\|_{F_n, p} - \|Z_n^+\|_{F_0, p} \right|  \leq 2 \gamma + \left| \sum_{j=1}^m \int_{\s{A}_j^+} \left|Z_n^+(a)\right|^p (F_n - F_0)(da)  \right|^{1/p} \ .
  \end{align*}
  Now, since $Z_n^+(a) = Z_n(a_j)$ for all $a \in \s{A}_j$,  
  \[\int_{\s{A}_j^+} \left|Z_n^+(a)\right|^p (F_n - F_0)(da) = |Z_n(a_j)|^p \left[ F_n\left(\s{A}_j^+\right) - F_0\left(\s{A}_j^+\right)\right] \ .\]
   Furthermore, $\left|F_n\left(\s{A}_j\right) - F_0\left(\s{A}_j^+\right) \right| \leq 2 \| F_n-F_0\|_\infty$ since $\s{A}_j^+$ is an interval. Therefore, 
  \[ \left|  \sum_{j=1}^m \int_{\s{A}_j^+} \left|Z_n^+(a)\right|^p (F_n - F_0)(da) \right|^{1/p} \leq \left( 2 m\| Z_n\|_{\infty}^p \| F_n-F_0\|_\infty \right)^{1/p}  \leq \left(4 \delta^{-2}\right)^{1/p}  \| Z_n\|_{\infty} \| F_n-F_0\|_\infty^{1/p} \ , \]
  which implies that
\[ \left| \|Z_n\|_{F_n,p}-\|Z_n\|_{F_0,p} \right|   \leq  2 \gamma + 4\delta^{-2}\| Z_n\|_{\infty} \| F_n-F_0\|_\infty^{1/p} \ . \]
Hence, setting $\gamma = \varepsilon / 4$ and $\delta = \varepsilon$, we have that 
\begin{align*}
&P_0\left(\left| \|Z_n\|_{F_n,p}-\|Z_n\|_{F_0,p} \right| > \varepsilon \, \Bigg| \,  \sup_{d_0(s,t) < \delta} |Z_n(s) - Z_n(t)| \leq  \gamma\right) \\
&\qquad \leq P_0\left(\| Z_n\|_{\infty} \| F_n-F_0\|_\infty^{1/p} > (\varepsilon/2)^{1+2/p}\, \Bigg| \,  \sup_{d_0(s,t) < \delta} |Z_n(s) - Z_n(t)| \leq  \gamma\right)\ .
\end{align*}
Since $\|Z_n \|_{\infty} = \bounded(1)$ and $\| F_n-F_0\|_\infty^{1/p} = \fasterthan(1)$, this tends to zero for any $\varepsilon > 0$ and $p \in [1, \infty)$, which completes the proof.

Since $\|\cdot\|_{F_0,p}$ is a continuous mapping on $\ell^\infty(\s{A}_0)$, $\|Z_n\|_{F_0,p} \indist \|Z_0\|_{F_0, p}$ for any $p \in [1, \infty]$. Therefore, $\|Z_n\|_{F_n,p} \indist \|Z_0\|_{F_0, p}$ as well.

\end{proof}

\begin{lemma}\label{lemma:Omegan_norm}
If the conditions of Theorem~\ref{thm:test_size} hold, then $\|\Omega_n^\circ\|_{F_n,p} - \|\Omega_n^\circ\|_{F_0,p}  = \fasterthan\left( n^{-1/2}\right)$ for any $p \in [1, \infty]$.
\end{lemma}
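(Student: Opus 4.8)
The plan is to reduce the claim to a statement essentially identical to the second half of Lemma~\ref{lemma:Zn_conv}, with the conditionally Gaussian process $Z_n$ there replaced by the rescaled estimator $\psi_n := n^{1/2}\Omega_n^\circ$. Since the conditions of Theorem~\ref{thm:test_size} include $P_0 \in \s{M}_0$, Proposition~\ref{prop:equivalence} gives $\Omega_0 \equiv 0$, and then Theorem~\ref{thm:weak_conv_omega} shows that $\psi_n$ converges weakly in $\ell^\infty(\s{A}_0)$ to $Z_0$; in particular $\psi_n$ is asymptotically tight, and since $h \mapsto \sup_{a \in \s{A}_0}|h(a)|$ is continuous, $\sup_{a \in \s{A}_0}|\psi_n(a)| = \bounded(1)$. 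Because $\|\Omega_n^\circ\|_{F_n,p} - \|\Omega_n^\circ\|_{F_0,p} = n^{-1/2}\left(\|\psi_n\|_{F_n,p} - \|\psi_n\|_{F_0,p}\right)$, the claim is equivalent to $\|\psi_n\|_{F_n,p} - \|\psi_n\|_{F_0,p} \inprob 0$.

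Two structural properties of $\psi_n$ drive the argument. First, inspecting \eqref{eq:one_step_cv_est}, every $a_0$-dependent term of $\Omega_n^\circ(a_0)$ is built from the indicators $a_0 \mapsto I_{(-\infty,a_0]}(A_i)$ and the training-fold empirical distribution functions $F_{n,v}$; hence, conditionally on the data, $\Omega_n^\circ$ (and so $\psi_n$) is a right-continuous step function whose jumps are contained in $\s{A}_n := \{A_1,\dots,A_n\}$. Consequently $\psi_n(s) = \psi_n(t)$ whenever $F_n(s) = F_n(t)$, and $\sup_{a \in \d{R}}|\psi_n(a)| = \max_{a \in \s{A}_n}|\psi_n(a)|$. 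Second, $d_0(s,t) := |F_0(s) - F_0(t)|^{1/2}$ is the pullback under $F_0$ of the totally bounded metric $(x,y) \mapsto |x - y|^{1/2}$ on $[0,1]$, so $(\s{A}_0, d_0)$ is totally bounded, and condition (A1) yields $\rho_0 \leq C d_0$ for the intrinsic $L_2(P_0)$-semimetric $\rho_0$ of $Z_0$, exactly as in the proof of Lemma~\ref{lemma:Zn_conv}. Therefore the weak convergence of $\psi_n$ to $Z_0$ implies that $\psi_n$ is asymptotically uniformly $d_0$-equicontinuous in probability: for all $\varepsilon, \eta > 0$ there is $\delta > 0$ with $\limsup_n P_0\!\left(\sup_{d_0(s,t) < \delta}|\psi_n(s) - \psi_n(t)| > \varepsilon\right) < \eta$.

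For $p = \infty$ the difference vanishes identically: the step-function property gives $\|\psi_n\|_{F_n,\infty} = \max_{a \in \s{A}_n}|\psi_n(a)| = \sup_{a \in \d{R}}|\psi_n(a)| \geq \sup_{a \in \s{A}_0}|\psi_n(a)| = \|\psi_n\|_{F_0,\infty}$, while $\s{A}_n \subseteq \s{A}_0$ almost surely supplies the reverse inequality. For $p \in [1,\infty)$ I would repeat the discretization argument from the proof of Lemma~\ref{lemma:Zn_conv} essentially verbatim: fix $\varepsilon, \eta > 0$, partition $\s{A}_0$ into $m \leq 2\delta^{-2}$ pieces $\s{A}_1^+, \dots, \s{A}_m^+$, each contained in an interval of $F_0$-measure $< \delta^2$, choose $a_j \in \s{A}_j^+$, and let $\psi_n^+$ equal $\psi_n(a_j)$ on $\s{A}_j^+$. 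On the event $\left\{\sup_{d_0(s,t)<\delta}|\psi_n(s) - \psi_n(t)| \leq \varepsilon/4\right\}$ both $\|\psi_n - \psi_n^+\|_{F_n,p}$ and $\|\psi_n - \psi_n^+\|_{F_0,p}$ are at most $\varepsilon/4$ (any two points in a common $\s{A}_j^+$ being within $d_0$-distance $\delta$), while, since $\psi_n^+$ is constant on each piece and $|F_n(\s{A}_j^+) - F_0(\s{A}_j^+)| \leq 2\|F_n - F_0\|_{\infty}$, one has $\big|\,\|\psi_n^+\|_{F_n,p} - \|\psi_n^+\|_{F_0,p}\big| \leq (4\delta^{-2})^{1/p}\,\sup_{a\in\s{A}_0}|\psi_n(a)|\cdot\|F_n - F_0\|_{\infty}^{1/p}$. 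The triangle inequality then gives, on that event, $\big|\,\|\psi_n\|_{F_n,p} - \|\psi_n\|_{F_0,p}\big| \leq \varepsilon/2 + (4\delta^{-2})^{1/p}\,\sup_{a\in\s{A}_0}|\psi_n(a)|\cdot\|F_n - F_0\|_{\infty}^{1/p}$. Choosing $\delta$ for the equicontinuity and using $\sup_{a\in\s{A}_0}|\psi_n(a)| = \bounded(1)$ together with $\|F_n - F_0\|_{\infty} \inprob 0$ (Glivenko--Cantelli), one concludes $\limsup_n P_0\big(\big|\,\|\psi_n\|_{F_n,p} - \|\psi_n\|_{F_0,p}\big| > \varepsilon\big) \leq \eta$, and letting $\eta \downarrow 0$ finishes the proof.

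The analytically substantive part — the chaining estimate underlying the equicontinuity — has already been done in the proof of Lemma~\ref{lemma:Zn_conv}, and here it is available for free from the weak convergence established in Theorem~\ref{thm:weak_conv_omega}; moreover, because $\psi_n$ is literally (conditionally on the data, with no measure-theoretic caveat) a right-continuous step function with jumps in $\s{A}_n$, this argument is actually cleaner than that of Lemma~\ref{lemma:Zn_conv}. The one point requiring care is verifying that $\psi_n$ inherits asymptotic equicontinuity with respect to $d_0$ rather than merely with respect to the intrinsic semimetric of the limit; this is exactly where condition (A1) enters, via $\rho_0 \leq C d_0$, precisely as in Lemma~\ref{lemma:Zn_conv}. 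I expect this to be the only non-routine step.
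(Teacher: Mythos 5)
Your proof is correct, but it takes a genuinely different route from the paper's. The paper never passes through weak convergence of $n^{1/2}\Omega_n^\circ$ here: after disposing of $p=\infty$ by the same step-function observation you make, it bounds $n^{1/2}\bigl|\,\|\Omega_n^\circ\|_{F_n,p}-\|\Omega_n^\circ\|_{F_0,p}\bigr|$ by $\bigl|n^{(p-1)/2}\,\d{G}_n|\Omega_n^\circ|^p\bigr|^{1/p}$ and then controls this empirical process directly: it writes $\Omega_n^\circ=\sum_i\omega_{n,i}I_{[A_i,\infty)}$ with $\sum_i|\omega_{n,i}|\leq C$ on a high-probability event, so that $\Omega_n^\circ$ lies in the symmetric convex hull of half-line indicators with the shrinking envelope $n^{-\alpha}$ (using $\|n^{1/2}\Omega_n^\circ\|_\infty=\bounded(1)$), and then applies convex-hull entropy bounds and a maximal inequality (Theorems 2.6.9, 2.10.20 and 2.14.1 of van der Vaart and Wellner) with a choice $\alpha>\tfrac{p-1}{2p-1}$. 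Your argument instead rescales to $\psi_n=n^{1/2}\Omega_n^\circ$, uses Theorem~\ref{thm:weak_conv_omega} under the null to get $\sup_{a\in\s{A}_0}|\psi_n(a)|=\bounded(1)$ and asymptotic uniform $d_0$-equicontinuity, and reruns the discretization-plus-Glivenko--Cantelli argument of Lemma~\ref{lemma:Zn_conv}; this is shorter, reuses machinery already in place, and yields exactly the $\fasterthan(n^{-1/2})$ statement needed, whereas the paper's argument is self-contained at the level of maximal inequalities and delivers an explicit polynomial rate. Two points you should make explicit to be airtight: (i) the passage from weak convergence to asymptotic $d_0$-equicontinuity requires that $Z_0$ have almost surely uniformly $d_0$-continuous paths, which follows from the Donsker property together with $\rho_0\leq C\,d_0$ under (A1), and then either the portmanteau inequality applied to the closed set $\{x:\sup_{d_0(s,t)\leq\delta}|x(s)-x(t)|\geq\varepsilon\}$ or Theorem 1.5.7 of van der Vaart and Wellner and its addendum; and (ii) the lemma is only used (and, as your reduction shows, only provable this way) when $\Omega_0\equiv 0$, i.e.\ $P_0\in\s{M}_0$ --- the paper's own proof makes the same implicit assumption through $\|n^{1/2}\Omega_n^\circ\|_\infty=\bounded(1)$, and both uses of the lemma (Theorems~\ref{thm:test_size} and~\ref{thm:local_alt_power}) are under the null, so this matches the paper's intent.
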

\begin{proof}[\bfseries{Proof of Lemma~\ref{lemma:Omegan_norm}}]
We first note that $\Omega_n^\circ$ is a right-continuous step function with steps at $A_1, \dotsc, A_n$, and that $\Omega_n^\circ(a) = 0$ for $a < \min_i A_i$. Therefore, since each $A_i \in \s{A}_0$, $\|\Omega_n^\circ\|_{F_n, \infty} =  \|\Omega_n^\circ\|_{F_0, \infty}$. For $p < \infty$, we have
\begin{align*}
n^{1/2}\left| \|\Omega_n^\circ\|_{F_n, p}  - \|\Omega_n^\circ\|_{F_0, p}\right| &= n^{1/2}\left| \left( \int |\Omega_n^\circ|^p \, dF_n \right)^{1/p} - \left( \int |\Omega_n^\circ|^p \, dF_0 \right)^{1/p} \right| \\
&\leq  \left| n^{p/2} \int |\Omega_n^\circ|^p \, d(F_n-F_0)  \right|^{1/p} =   \left| n^{\frac{p-1}{2}}\d{G}_n  |\Omega_n^\circ|^p  \right|^{1/p} \ .
\end{align*}
Therefore, if we can demonstrate that $ |\Omega_n^\circ|^p$ in contained in a class of functions $\s{G}_{n,p}$ such that $E_0 \left[ \sup_{g \in \s{G}_{n,p}} | \d{G}_n g| \right] = \fasterthandet\left( n^{-\frac{p-1}{2}}\right)$, then we will have that $n^{1/2}\left| \|\Omega_n^\circ\|_{F_n, p}  - \|\Omega_n^\circ\|_{F_0, p}\right| \inprob 0$. In order to show this, we will need boundedness which only holds in probability, but not almost surely for all $n$ large enough. Thus, for any $\eta > 0$, we write
\begin{align*}
P_0 \left( \left| n^{\frac{p-1}{2}} \d{G}_n  |\Omega_n^\circ|^p  \right| > \eta\right) &\leq  P_0 \left( \left| n^{\frac{p-1}{2}}\d{G}_n |\Omega_n^\circ|^p \right| > \eta \Bigg| \left\| \Omega_n^\circ \right\|_\infty \leq n^{-\alpha}, \frac{1}{n} \sum_{i=1}^n |Y_i| \leq E_0[|Y|] + 1\right) \\
&\qquad + P_0\left( \left\| \Omega_n^\circ \right\|_\infty > n^{-\alpha}\right) + P_0 \left( \frac{1}{n} \sum_{i=1}^n |Y_i| > E_0[|Y|] + 1 \right) \ .
\end{align*}
The final probability on the right tends to zero since $\frac{1}{n} \sum_{i=1}^n |Y_i| \inprob E_0[ |Y| ]$. The second probability on the right side tends to zero for any fixed $\alpha \in [0,1/2)$ since $\left\|n^{\alpha}\Omega_n^\circ \right\|_\infty = n^{\alpha - 1/2} \left\| n^{1/2} \Omega_n^\circ \right\|_\infty = \bounded\left(n^{\alpha - 1/2}\right)$. Now we can focus on the first probability. We note that, with some rearranging, we can write $\Omega_n^\circ(a_0) = \sum_{i=1}^n \omega_{n,i} I_{[A_i, \infty)}(a_0)$, where
\begin{align*}
\omega_{n,i} &:= \frac{1}{V N_{v_i}} \frac{Y_i - \mu_{n,v_i}(A_i, W_i)}{ g_{n,v_i}(A_i ,W_i) } - \sum_{O_j \in \s{T}_{n, v_i}} \frac{1}{V N_{v_j} (n - N_{v_j})} \frac{Y_j - \mu_{n, v_j}(A_j, W_j)}{g_{n,v_j}(A_j, W_j)} \\
&\qquad + \frac{1}{V N_{v_i}^2} \sum_{j \in \s{V}_{n,v_i}} \mu_{n, v_i}(A_i, W_j)  - \sum_{O_k \in \s{T}_{n,v_i}} \sum_{j \in \s{V}_{n,v_k}} \frac{\mu_{n, v_k}(A_k, W_j) }{V N_{v_k}^2 (n - N_{v_k})} \ ,
\end{align*}
where $v_i$ is the unique element of $\{1, \dotsc, V\}$ such that $i \in \s{V}_{n,v_i}$. Using the boundedness condition (A1) and the fact that $\frac{1}{V N_v} \leq \frac{2}{n}$ for each $v$,  it is straightforward to see that
\[ | \omega_{n,i}| \leq \frac{2}{n} \left[ K_1^{-1} |Y_i| + \left(K_1^{-1} + 1\right) K_0  + \sum_{O_j \in \s{T}_{n,v_i}}\frac{ K_1^{-1} |Y_j| + K_0}{n - N_{v_j}}  \right]\ , \]
which, if $\frac{1}{n} \sum_{i=1}^n |Y_i| \leq E_0[|Y|] + 1$, implies that
\[ \sum_{i=1}^n  | \omega_{n,i}| \leq 2\left[ 2K_1^{-1} \left( E_0[|Y|] + 1\right) + \left(K_1^{-1} + 2\right) K_0 \right] =: C \ .\]
We then have that $C^{-1} \Omega_n^\circ(a_0) = \sum_{i=1}^n \lambda_{n,i} I_{[A_i, \infty)}(a_0)$, where  $\lambda_{n,i} := \omega_{n, i} / C$ satisfy $\sum_{i=1}^n |\lambda_{n,i}| \leq 1$. Thus, if $\frac{1}{n} \sum_{i=1}^n |Y_i| \leq E_0[|Y|] + 1$, then $C^{-1} \Omega_n^\circ$ is contained in the symmetric convex hull $\s{F}$ of the class $\{ x \mapsto I_{[a, \infty)}(x) : a \in \d{R}\}$. Since this latter class has VC index 2, by Theorem~2.6.9 of \cite{van1996weak}, $\s{F}$ satisfyies $\log N(\varepsilon, \s{F}, L_2(Q)) \leq D \varepsilon^{-1}$ for all probability measures $Q$ and for a constant $D$ not depending on $\varepsilon$ or $Q$. Thus, if both $\frac{1}{n} \sum_{i=1}^n |Y_i| \leq E_0[|Y|] + 1$ and $\left\| \Omega_n^\circ \right\|_\infty \leq n^{-\alpha}$, we have that $\Omega_n^\circ$ is contained in the class $\s{F}_{n} := \{ Cf : f \in \s{F}, \|Cf\|_\infty \leq n^{-\alpha} \}$ with envelope function $R_n(x) = n^{-\alpha}$. Since $\s{F}_n \subseteq C\s{F}$, $\s{F}_n$ satisfies the same entropy bound as $\s{F}$ up to the constant $D$. Hence $|\Omega_n^\circ|^p$ is contained in $\left|\s{F}_{n}\right|^p := \{ |f|^p : f \in \s{F}_n\}$ with envelope $R_{n}^p = n^{-p\alpha}$. Since the function $x \mapsto |x|^p$ is convex for $p \geq 1$, we have $\left| |f|^p - |g|^p\right| \leq |f - g| p R_{n}^{p-1}$ for $f,g \in \s{F}_n$. By Theorem~2.10.20 of \cite{van1996weak} (or Lemma~5.1 of \cite{vaart2006survival}), we then have 
\[\sup_Q \log N\left(\varepsilon \|p R_{n}^{p} \|_{Q,2}, \left| \s{F}_n\right|^p, L_2(Q)\right) \leq \sup_Q  \log N\left(\varepsilon \| R_n \|_{Q,2},\s{F}_n , L_2(Q)\right)  \leq D \left(\varepsilon n^{-\alpha}\right)^{-1}\ . \]
Theorem~2.14.1 of \cite{van1996weak} then implies that for $n$ large enough
\begin{align*}
E_0 \left| \d{G}_n \left| \Omega_n^\circ\right|^p \right| &\leq E_0\left[ \sup_{g \in |\s{F}_n|^p} \left| \d{G}_n g\right| \right] \leq C' \| R_n^p \|_{P,2} \int_0^1 \left[ 1 + \sup_Q \log N\left(\varepsilon \|R_n^p \|_{Q,2}, \left| \s{F}_n\right|^p, L_2(Q)\right)\right]^{1/2} \, d\varepsilon \\
&\leq C' n^{-p\alpha} \int_0^1 \left[ 1 + D \left(\varepsilon n^{-\alpha}/p\right)^{-1}\right]^{1/2} \, d\varepsilon  = C' p n^{(1-p)\alpha} \int_0^{n^{-\alpha}/p} \left[ 1 + D / \varepsilon\right]^{1/2} \, d\varepsilon  \\
&\leq C' p n^{(1-p)\alpha} \int_0^{n^{-\alpha}/p} \left[ 2D / \varepsilon\right]^{1/2} \, d\varepsilon\\
 &= C'' n^{(1-p)\alpha} n^{-\alpha / 2} = \boundeddet\left(n^{(1/2 - p)\alpha}\right) \ .
\end{align*}
Thus, we have $n^{\frac{p-1}{2}}\d{G}_n  |\Omega_n^\circ|^p = \bounded\left(n^{(p-1)/2+ (1/2 - p)\alpha}\right)$. Since $(p-1)/2 + (1/2 - p)\alpha < 0$ for any $\alpha > \frac{p-1}{2p-1}$ and $\frac{p-1}{2p-1} < \frac{1}{2}$ for all $p \geq 1$, we can choose an $\alpha$ to get $n^{\frac{p-1}{2}}\d{G}_n  |\Omega_n^\circ|^p = \fasterthan(1)$ as desired.
\end{proof}

We can now prove Theorem~\ref{thm:test_size}.
\begin{proof}[\bfseries{Proof of Theorem~\ref{thm:test_size}}]
Since $\Omega_0 = 0$ under $H_0$, Theorem~\ref{thm:weak_conv_omega} implies that  $n^{1/2} \Omega_n^\circ$ converges weakly as a process in $\ell^\infty(\s{A}_0)$ to $Z_0$ Thus, $n^{1/2}\|\Omega_n^\circ\|_{F_0, p} \indist \|Z_0\|_{F_0, p}$ by the continuous mapping theorem. By Lemma~\ref{lemma:Omegan_norm}, we have $n^{1/2}\|\Omega_n^\circ\|_{F_n, p} \indist  \|Z_0\|_{F_0, p}$ as well. By  Lemma~\ref{lemma:Zn_conv}, $\|Z_n\|_{F_n,p} \indist \|Z_0\|_{F_0, p}$, and since by assumption the distribution function of $\|Z_0\|_{F_0, p}$ is strictly increasing in a neighborhood of $T_{0,\alpha,p}$, the quantile function of $\|Z_0\|_{F_0, p}$ is continuous at $1-\alpha$. Therefore, $T_{n,\alpha,p}$, which is by definition the $1-\alpha$ quantile of $\|Z_n\|_{F_n,p}$, converges in probability to the $1-\alpha$ quantile of $\|Z_0\|_{F_0, p}$. Therefore, $n^{1/2}\|\Omega_n^\circ\|_{F_n, p} - T_{n,\alpha,p} + T_{0,\alpha,p} \indist  \|Z_0\|_{F_0, p}$. Hence,
\begin{align*}
P_0 \left( n^{1/2}\|\Omega_n^\circ\|_{F_n, p} > T_{n,\alpha,p}\right) &= P_0 \left( n^{1/2}\|\Omega_n^\circ\|_{F_n, p} - T_{n,\alpha,p} + T_{0,\alpha,p} > T_{0,\alpha,p}\right) \\
&\longrightarrow P_0 \left(  \|Z_0\|_{F_0, p} > T_{0,\alpha,p}\right) \leq \alpha\ . 
\end{align*}
Since by assumption the distribution function of $\|Z_0\|_{F_0,p}$ is continuous at $T_{0,\alpha,p}$, $P_0 \left(  \|Z_0\|_{F_0, p} > T_{0,\alpha,p}\right) = \alpha$, which completes the proof.

 \end{proof}
 
 \begin{proof}[\bfseries{Proof of Theorem~\ref{thm:weak_conv_local}}]
 By Theorem~\ref{thm:weak_conv_omega} and since $\Omega_0(a) = 0$ for all $a$,
 \[ \sup_{a \in \s{A}_0} \left| n^{1/2} \Omega_n(a) - \d{G}_n D_{a,0}^* \right| \inprob 0\ .\]
The distribution $P_n$ is contiguous to $P_0$ by Lemma~3.10.11 of \cite{van1996weak}. Therefore, by Theorem~3.10.5 of \cite{van1996weak},
  \[ \sup_{a \in \s{A}_0} \left| n^{1/2} \Omega_n(a) - \d{G}_n D_{a,0}^* \right| \stackrel{\mathrm{P_n}}{\longrightarrow} 0\ .\]
  Since $\{D_{a, 0}^* : a \in \s{A}_0\}$ is a $P_0$-Donsker class and $\sup_{a \in \s{A}_0} | P_0 D_{a, 0}^*|< \infty$, Theorem~3.10.12 of \cite{van1996weak} implies that $\left\{ \d{G}_nD_{a,0}^* : a \in \s{A}_0 \right\}$ converges weakly in $\ell^\infty(\s{A}_0)$ to $\{  Z_0(a) + P_0( h D_{a, 0}^*) : a \in \s{A}_0 \}$. The result follows.
 \end{proof}

 \begin{proof}[\bfseries{Proof of Theorem~\ref{thm:local_alt_power}}]
 By Lemma~\ref{lemma:Omegan_norm}, $n^{1/2}\left( \| \Omega_n^\circ\|_{F_n, p} -  \| \Omega_n^\circ\|_{F_0, p}\right) \inprob 0$. Therefore, since $P_n$ is contiguous to $P_0$ by Lemma~3.10.11 of \cite{van1996weak}, Theorem~3.10.5 of \cite{van1996weak} implies that $n^{1/2}\left( \| \Omega_n^\circ\|_{F_n, p} -  \| \Omega_n^\circ\|_{F_0, p}\right)  \stackrel{\mathrm{P_n}}{\longrightarrow}  0$. Hence, by the continuous mapping theorem and Theorem~\ref{thm:weak_conv_local}, $n^{1/2}  \| \Omega_n^\circ\|_{F_n, p}$ converges in distribution under $P_n$ to $\|\bar{Z}_{0,h}\|_{F_0,p}$. In addition, since $T_{n,\alpha,p} \inprob T_{0,\alpha,p}$ (as demonstrated in the proof of Theorem~\ref{thm:test_size}), $T_{n,\alpha,p} \stackrel{\mathrm{P_n}}{\longrightarrow}  T_{0,\alpha,p}$. Therefore, $n^{1/2}  \| \Omega_n^\circ\|_{F_n, p} - T_{n,\alpha,p}  + T_{0,\alpha,p}$ converges in distribution under $P_n$  to $\|\bar{Z}_{0,h}\|_{F_0,p}$. Thus, 
 \begin{align*}
  P_n\left( n^{1/2}  \| \Omega_n^\circ\|_{F_n, p} > T_{n,\alpha,p}\right) &= P_n\left( n^{1/2}  \| \Omega_n^\circ\|_{F_n, p} - T_{n,\alpha,p} + T_{0,\alpha,p} > T_{0,\alpha,p}\right)\\
  &\longrightarrow P\left( \|\bar{Z}_{0,h}\|_{F_0,p} > T_{0,\alpha,p}\right) \ .
 \end{align*}
 \end{proof}


\clearpage 

\section*{Additional simulation results}

Here, we present the results of the first simulation study using parametric nuisance estimators. Figure~\ref{fig:sim_size} displays the empirical type I error rate (i.e.\ the fraction of tests with $p < 0.05$) of nominal $\alpha = 0.05$ level tests using the parametric nuisance estimators across the two types of null hypotheses and three sample sizes.  The tests with correctly-specified parametric outcome regression estimators of the nuisances (first and third columns from the left) had empirical error rates within or slightly below Monte Carlo error of the nominal rate at all sample sizes and under both the strong and weak nulls. The fact that the tests with both nuisance estimators correctly specified yields correct size empirically validates the large-sample theoretical guarantee of Theorem~\ref{thm:test_size}. That the test achieved valid size when the propensity score estimator was mis-specified was not guaranteed by our theory, and we would not expect this to always be the case. The tests with $\mu_n$ based on an incorrectly specified parametric model and $g_n$ based on a correctly specified parametric model (second column of Figure~\ref{fig:sim_size}), had empirical type I error rates below the nominal rate. The tests with both $\mu_n$ and $g_n$ based on incorrectly specified parametric models (second row of Figure~\ref{fig:sim_size}), had empirical type I error rates far above the the nominal rate and converging to 1. These results align with our expectations that, in general, no guarantees with regard to type I error can be made when the nuisance estimators are inconsistent. 
 
 Figure~\ref{fig:sim_power} displays the empirical power (i.e.\ the fraction of tests with $p < 0.05$) of nominal $\alpha = 0.05$ level tests using the parametric nuisance estimators across the four types of alternative hypotheses and three sample sizes. Power increased with sample size in all cases. Given Theorem~\ref{thm:dr_cons_test}, this was expected  for the first three columns, but not necessarily for the last column, in which both nuisance estimators were inconsistent. Under alternative data-generating mechanisms, the power under inconsistent estimation of both nuisance parameters may not increase to one as the sample size increases. The power of the test was generally better further away from the null hypothesis, except when  $\mu_n$ was based on a correctly specified parametric model and $g_n$ was based on an incorrectly specified parametric model (third column).

\begin{figure}[h]
\centering
\includegraphics[width=\linewidth]{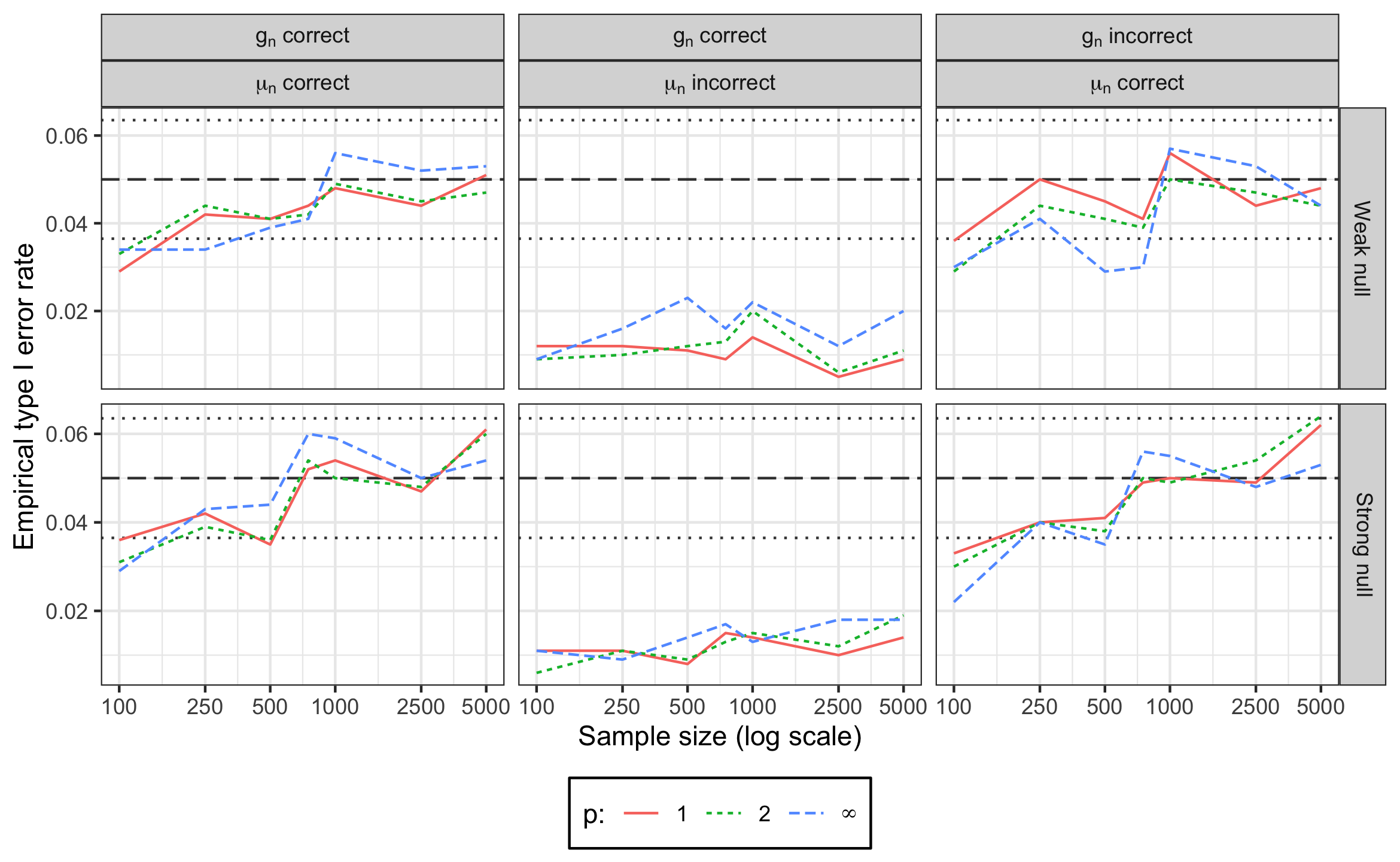}
\includegraphics[width=2.5in]{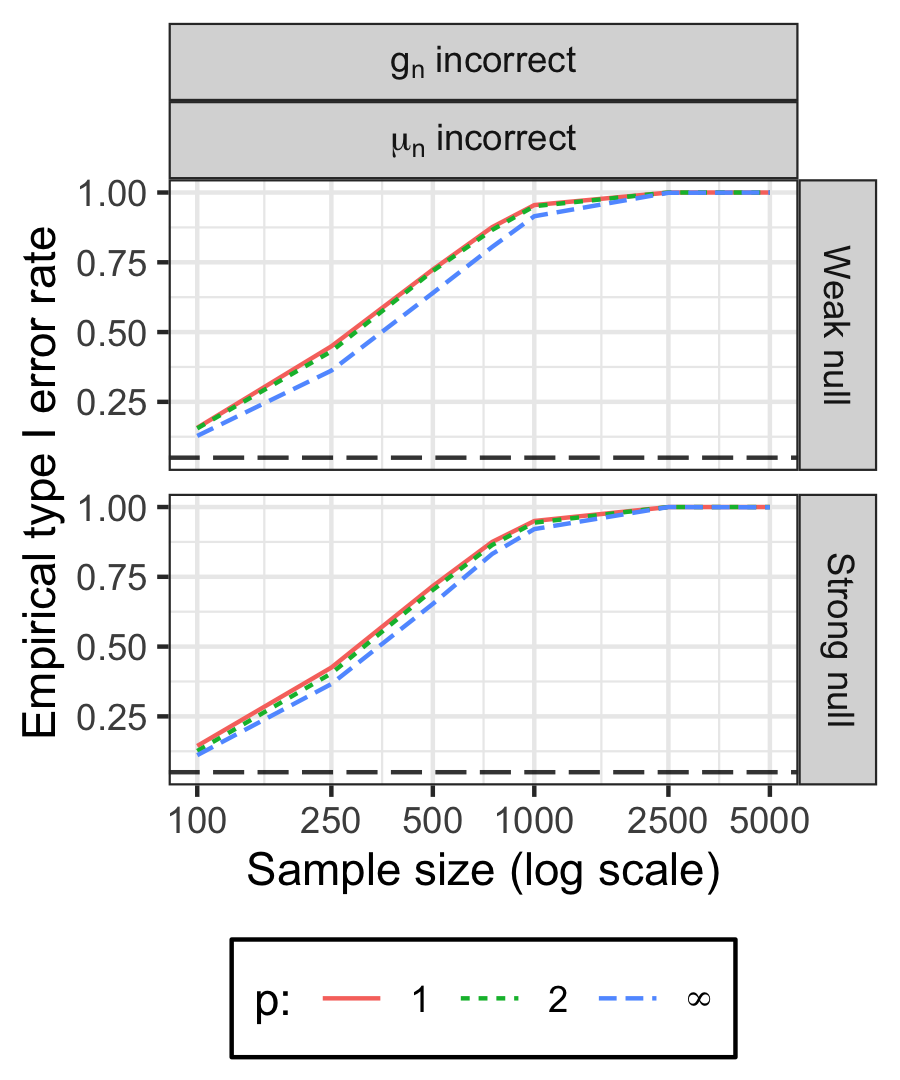}
\caption{Empirical type I error rate (i.e.\ the fraction of tests with $p < 0.05$) of nominal $\alpha = 0.05$ level tests using the parametric nuisance estimators across the two types of null hypotheses and seven sample sizes. Panels in the top row were generated under the weak null where $\mu_0$ depends on $a$, but $\theta_0$ does not. Panels in the bottom row were generated under the strong null where neither $\mu_0$ nor $\theta_0$ depends on $a$. Columns indicate the type of nuisance estimators used. Horizontal wide-dash lines indicate the nominal 0.05 test size, and horizontal dotted lines indicate sampling error bounds were the true size 0.05. In the third and fourth columns from the left, the empirical sizes are off the scale of the figure, and in particular are larger than 0.5 in all cases.}
\label{fig:sim_size}
\end{figure}

\begin{figure}[h]
\centering
\includegraphics[width=\linewidth]{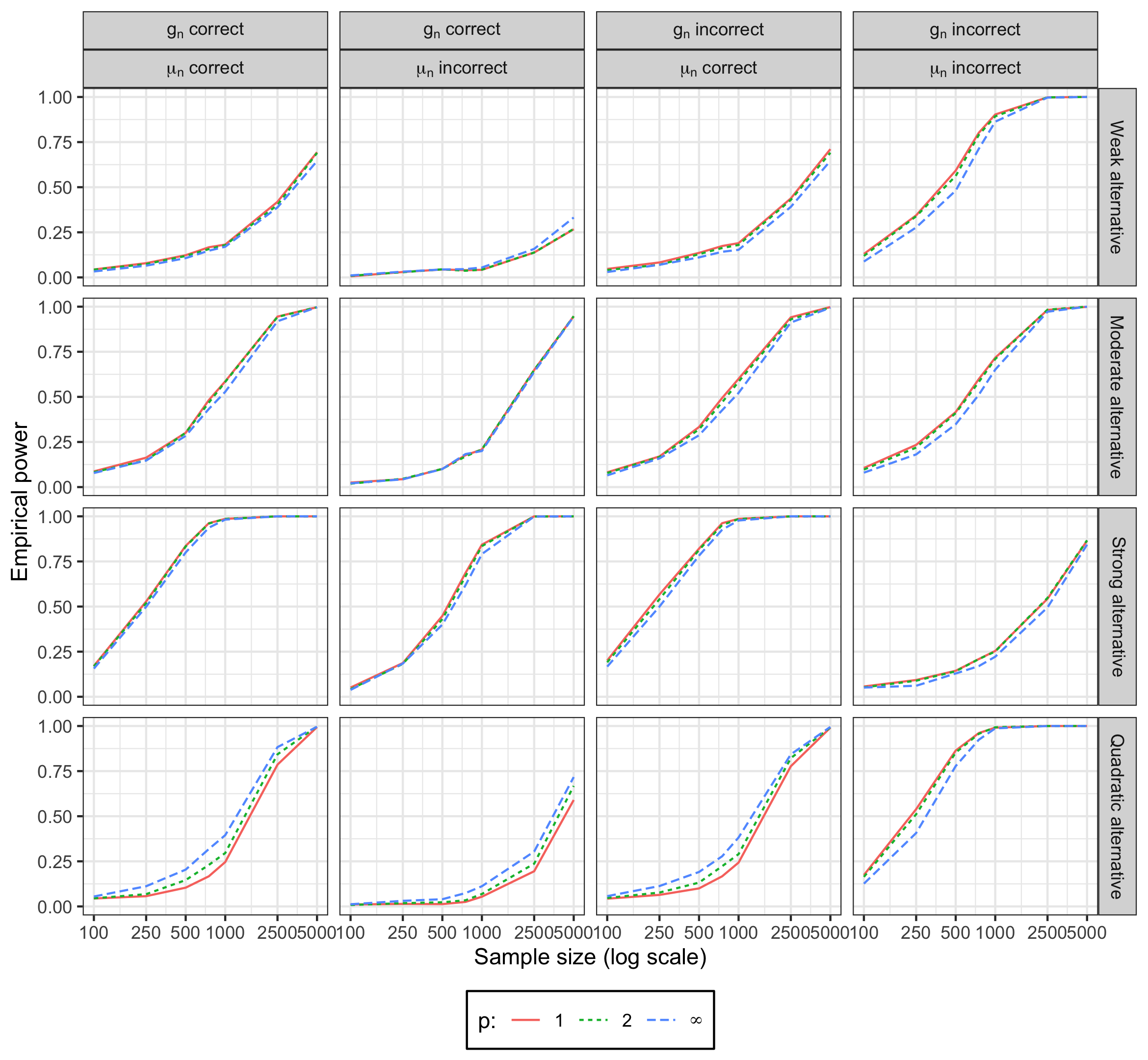}
\caption{Empirical power (i.e.\ the fraction of tests with $p < 0.05$) of nominal $\alpha = 0.05$ level tests using the parametric nuisance estimators across the four types of alternative hypotheses and three sample sizes.  Columns indicate the type of nuisance estimators used.}
\label{fig:sim_power}
\end{figure}

\clearpage 

Figure~\ref{fig:sim_power_disc} displays the empirical power of our tests in the second numerical study. We note that the number $k$ of levels of the exposure  had little impact on the power of the tests for any sample or any under alternative hypothesis. In each case, the power increased with the sample size $n$.

\begin{figure}[h]
\centering
\includegraphics[width=\linewidth]{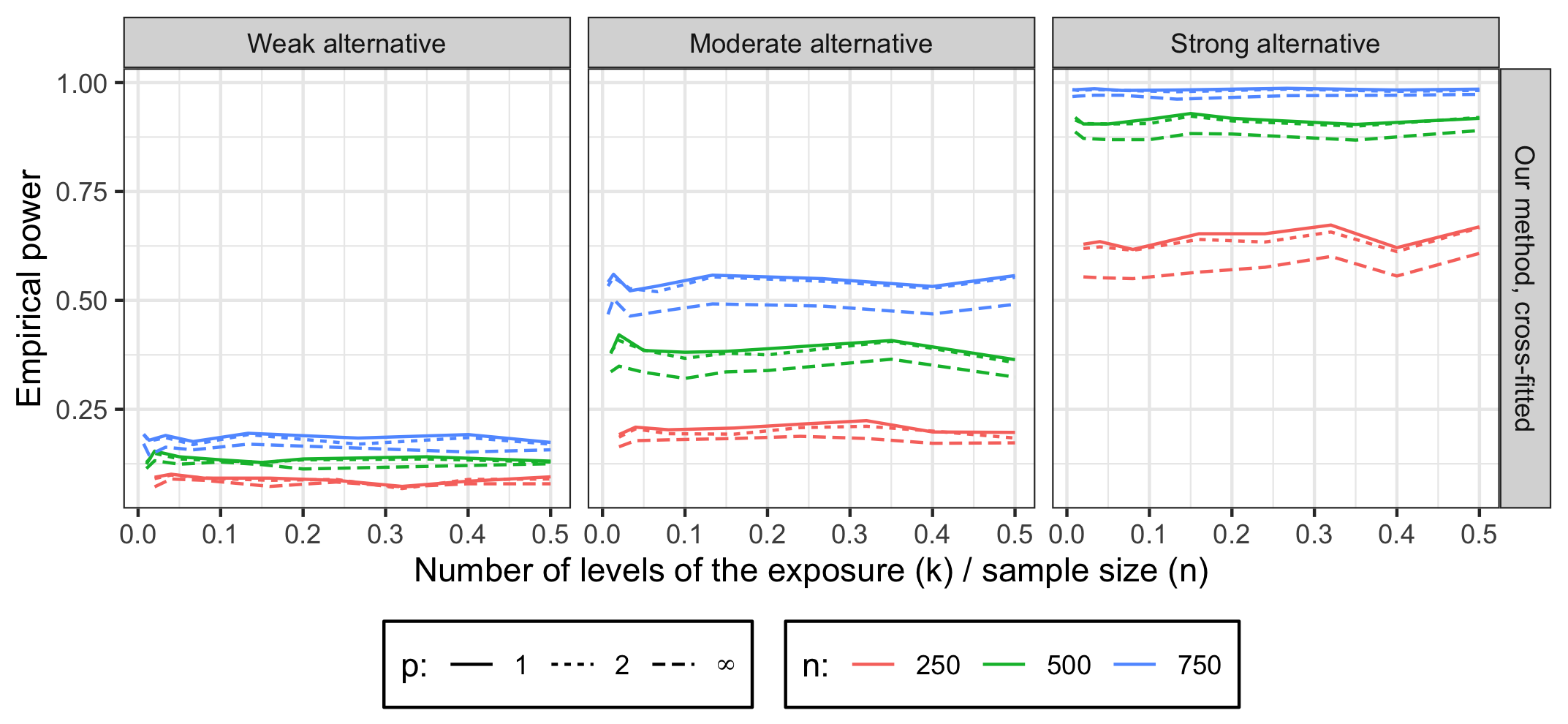}
\caption{Empirical power of nominal $\alpha = 0.05$ level tests in the second numerical study using our method with cross-fitted, correctly-specified parametric nuisance estimators. The $x$-axis is the ratio of $k$, the number of levels of the exposure, to $n$, the sample size. Columns indicate the alternative hypothesis used to generate the data.}
\label{fig:sim_power_disc}
\end{figure}

\clearpage

\section*{Additional results from analysis of the effect of BMI on immune response}\label{sec:bmi}

Figure~\ref{fig:bmi_omega} shows the estimated primitive functions $\Omega_n$ and  95\% uniform confidence bands as a function of BMI for the analysis of CD4+ responses (left) and CD8+ responses (right).

\begin{figure}[ht!]
\centering
\includegraphics[width=6in]{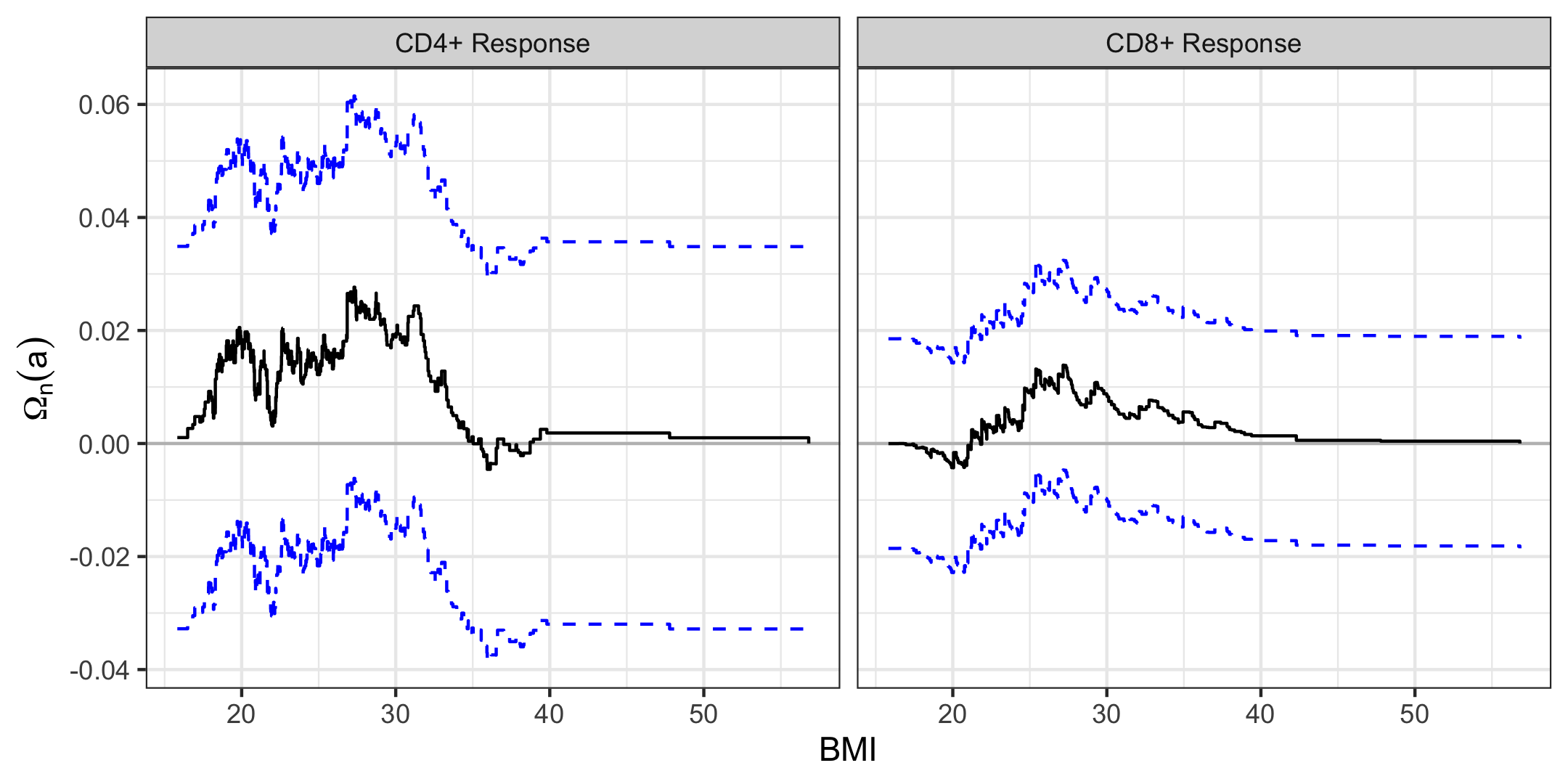}
\caption{Estimated functions $\Omega_n$ (solid line) and 95\% uniform confidence band (dashed lines) for the analysis of CD4+ responses (left) and CD8+ responses (right) presented in the article.}
\label{fig:bmi_omega}
\end{figure}

\clearpage

\bibliographystyle{apa}
\bibliography{../../biblio}

\begin{thebibliography}{}

\bibitem[\protect\astroncite{Andrews}{1997}]{andrews1997kolmogorov}
Andrews, D. W.~K. (1997).
\newblock {A Conditional Kolmogorov Test}.
\newblock {\em Econometrica}, 65(5):1097--1128.

\bibitem[\protect\astroncite{Bang and Robins}{2005}]{bang2005doubly}
Bang, H. and Robins, J.~M. (2005).
\newblock Doubly robust estimation in missing data and causal inference models.
\newblock {\em Biometrics}, 61(4):962--973.

\bibitem[\protect\astroncite{Belloni et~al.}{2018}]{belloni2018uniform}
Belloni, A., Chernozhukov, V., Chetverikov, D., and Wei, Y. (2018).
\newblock {Uniformly valid post-regularization confidence regions for many
  functional parameters in Z-estimation framework}.
\newblock {\em Ann. Statist.}, 46(6B):3643--3675.

\bibitem[\protect\astroncite{Bickel et~al.}{1998}]{bickel1998efficient}
Bickel, P.~J., Klaassen, C.~A., Ritov, Y., and Wellner, J.~A. (1998).
\newblock {\em Efficient and adaptive estimation for semiparametric models},
  volume~2.
\newblock Springer New York.

\bibitem[\protect\astroncite{Boudjellaba et~al.}{1992}]{boudjellaba1992arma}
Boudjellaba, H., Dufour, J.-M., and Roy, R. (1992).
\newblock Testing causality between two vectors in multivariate autoregressive
  moving average models.
\newblock {\em Journal of the American Statistical Association},
  87(420):1082--1090.

\bibitem[\protect\astroncite{Cohen}{1983}]{cohen1983cost}
Cohen, J. (1983).
\newblock The cost of dichotomization.
\newblock {\em Applied Psychological Measurement}, 7(3):249--253.

\bibitem[\protect\astroncite{Cox}{1957}]{cox1957grouping}
Cox, D.~R. (1957).
\newblock Note on grouping.
\newblock {\em Journal of the American Statistical Association},
  52(280):543--547.

\bibitem[\protect\astroncite{D{\'\i}az and van~der Laan}{2011}]{diaz2011super}
D{\'\i}az, I. and van~der Laan, M.~J. (2011).
\newblock Super learner based conditional density estimation with application
  to marginal structural models.
\newblock {\em The International Journal of Biostatistics}, 7(1):1--20.

\bibitem[\protect\astroncite{Eubank and Spiegelman}{1990}]{eubank1990testing}
Eubank, R.~L. and Spiegelman, C.~H. (1990).
\newblock Testing the goodness of fit of a linear model via nonparametric
  regression techniques.
\newblock {\em Journal of the American Statistical Association},
  85(410):387--392.

\bibitem[\protect\astroncite{Fedorov et~al.}{2009}]{fedorov2009consequence}
Fedorov, V., Mannino, F., and Zhang, R. (2009).
\newblock Consequences of dichotomization.
\newblock {\em Pharmaceutical Statistics}, 8(1):50--61.

\bibitem[\protect\astroncite{Gill and Robins}{2001}]{gill2001}
Gill, R.~D. and Robins, J.~M. (2001).
\newblock Causal inference for complex longitudinal data: The continuous case.
\newblock {\em The Annals of Statistics}, 29(6):1785--1811.

\bibitem[\protect\astroncite{Granger}{1980}]{granger1980testing}
Granger, C. (1980).
\newblock Testing for causality: A personal viewpoint.
\newblock {\em Journal of Economic Dynamics and Control}, 2:329 -- 352.

\bibitem[\protect\astroncite{Granger and Lin}{1995}]{granger1995causality}
Granger, C. W.~J. and Lin, J.-L. (1995).
\newblock Causality in the long run.
\newblock {\em Econometric Theory}, 11(3):530--536.

\bibitem[\protect\astroncite{Hirano and Imbens}{2005}]{hirano2005gps}
Hirano, K. and Imbens, G.~W. (2005).
\newblock {\em The Propensity Score with Continuous Treatments}, chapter~7,
  pages 73--84.
\newblock John Wiley \& Sons, Ltd.

\bibitem[\protect\astroncite{Horowitz and
  Spokoiny}{2001}]{horowitz2001adaptive}
Horowitz, J.~L. and Spokoiny, V.~G. (2001).
\newblock An adaptive, rate-optimal test of a parametric mean-regression model
  against a nonparametric alternative.
\newblock {\em Econometrica}, 69(3):599--631.

\bibitem[\protect\astroncite{Horvitz and Thompson}{1952}]{horvitz1952sampling}
Horvitz, D.~G. and Thompson, D.~J. (1952).
\newblock {A Generalization of Sampling Without Replacement from a Finite
  Universe}.
\newblock {\em Journal of the American Statistical Association},
  47(260):663--685.

\bibitem[\protect\astroncite{Imai and van Dyk}{2004}]{imai2004gps}
Imai, K. and van Dyk, D.~A. (2004).
\newblock Causal inference with general treatment regimes.
\newblock {\em Journal of the American Statistical Association},
  99(467):854--866.

\bibitem[\protect\astroncite{Jin et~al.}{2015}]{jin2015multiple}
Jin, X., Morgan, C., Yu, X., DeRosa, S., Tomaras, G.~D., Montefiori, D.~C.,
  Kublin, J., Corey, L., Keefer, M.~C., Network, N. H. V.~T., et~al. (2015).
\newblock Multiple factors affect immunogenicity of dna plasmid hiv vaccines in
  human clinical trials.
\newblock {\em Vaccine}, 33(20):2347--2353.

\bibitem[\protect\astroncite{Kennedy}{2019}]{kennedy2019incremental}
Kennedy, E.~H. (2019).
\newblock {Nonparametric Causal Effects Based on Incremental Propensity Score
  Interventions}.
\newblock {\em Journal of the American Statistical Association},
  114(526):645--656.

\bibitem[\protect\astroncite{Kennedy et~al.}{2017}]{kennedy2016continuous}
Kennedy, E.~H., Ma, Z., McHugh, M.~D., and Small, D.~S. (2017).
\newblock Non-parametric methods for doubly robust estimation of continuous
  treatment effects.
\newblock {\em Journal of the Royal Statistical Society: Series B (Statistical
  Methodology)}, 79(4):1229--1245.

\bibitem[\protect\astroncite{Luedtke et~al.}{2019}]{luedtke2019omnibus}
Luedtke, A., Carone, M., and van~der Laan, M.~J. (2019).
\newblock An omnibus non-parametric test of equality in distribution for
  unknown functions.
\newblock {\em Journal of the Royal Statistical Society: Series B (Statistical
  Methodology)}, 81(1):75--99.

\bibitem[\protect\astroncite{Neugebauer and van~der
  Laan}{2005}]{neugebauer2005prefer}
Neugebauer, R. and van~der Laan, M. (2005).
\newblock Why prefer double robust estimators in causal inference?
\newblock {\em Journal of Statistical Planning and Inference},
  129(1-2):405--426.

\bibitem[\protect\astroncite{Neugebauer and van~der
  Laan}{2007}]{neugebauer2007jspi}
Neugebauer, R. and van~der Laan, M.~J. (2007).
\newblock Nonparametric causal effects based on marginal structural models.
\newblock {\em Journal of Statistical Planning and Inference}, 137(2):419--434.

\bibitem[\protect\astroncite{Nishiyama et~al.}{2011}]{nishiyama2011testing}
Nishiyama, Y., Hitomi, K., Kawasaki, Y., and Jeong, K. (2011).
\newblock A consistent nonparametric test for nonlinear
  causality--specification in time series regression.
\newblock {\em Journal of Econometrics}, 165(1):112 -- 127.
\newblock Moment Restriction-Based Econometric Methods.

\bibitem[\protect\astroncite{{R Core Team}}{2020}]{Rlang}
{R Core Team} (2020).
\newblock {\em R: A Language and Environment for Statistical Computing}.
\newblock R Foundation for Statistical Computing, Vienna, Austria.

\bibitem[\protect\astroncite{Robins}{1986}]{robins1986}
Robins, J. (1986).
\newblock A new approach to causal inference in mortality studies with a
  sustained exposure period -- application to control of the healthy worker
  survivor effect.
\newblock {\em Mathematical Modelling}, 7(9):1393 -- 1512.

\bibitem[\protect\astroncite{Robins}{2000}]{robins2000msm}
Robins, J.~M. (2000).
\newblock Marginal structural models versus structural nested models as tools
  for causal inference.
\newblock In Halloran, M.~E. and Berry, D., editors, {\em Statistical Models in
  Epidemiology, the Environment, and Clinical Trials}, pages 95--133, New York,
  NY. Springer New York.

\bibitem[\protect\astroncite{Robins et~al.}{1994}]{robins1994estimation}
Robins, J.~M., Rotnitzky, A., and Zhao, L.~P. (1994).
\newblock Estimation of regression coefficients when some regressors are not
  always observed.
\newblock {\em Journal of the American Statistical Association},
  89(427):846--866.

\bibitem[\protect\astroncite{Rotnitzky
  et~al.}{1998}]{rotnitzky1998semiparametric}
Rotnitzky, A., Robins, J.~M., and Scharfstein, D.~O. (1998).
\newblock Semiparametric regression for repeated outcomes with nonignorable
  nonresponse.
\newblock {\em Journal of the American Statistical Association},
  93(444):1321--1339.

\bibitem[\protect\astroncite{Rubin and van~der Laan}{2006}]{rubin2006msm}
Rubin, D. and van~der Laan, M.~J. (2006).
\newblock Extending marginal structural models through local, penalized, and
  additive learning.
\newblock Working Paper 212, Division of Biostatistics, University of
  California at Berkeley, Berkeley, California.

\bibitem[\protect\astroncite{Rubin}{1973}]{rubin1973matching}
Rubin, D.~B. (1973).
\newblock {Matching to Remove Bias in Observational Studies}.
\newblock {\em Biometrics}, 29(1):159--183.

\bibitem[\protect\astroncite{Scharfstein
  et~al.}{1999}]{scharfstein1999adjusting}
Scharfstein, D.~O., Rotnitzky, A., and Robins, J.~M. (1999).
\newblock Adjusting for nonignorable drop-out using semiparametric nonresponse
  models.
\newblock {\em Journal of the American Statistical Association},
  94(448):1096--1120.

\bibitem[\protect\astroncite{van~der Laan et~al.}{2018}]{van2018non}
van~der Laan, M.~J., Bibaut, A., and Luedtke, A.~R. (2018).
\newblock {CV-TMLE} for nonpathwise differentiable target parameters.
\newblock In van~der Laan, M.~J. and Rose, S., editors, {\em Targeted Learning
  in Data Science: Causal Inference for Complex Longitudinal Studies}, pages
  455--481. Springer.

\bibitem[\protect\astroncite{van~der Laan et~al.}{2007}]{vanderlaan2007super}
van~der Laan, M.~J., Polley, E.~C., and Hubbard, A.~E. (2007).
\newblock Super learner.
\newblock {\em Statistical Applications in Genetics and Molecular Biology},
  6(1).

\bibitem[\protect\astroncite{van~der Laan and Robins}{2003}]{van2003unified}
van~der Laan, M.~J. and Robins, J.~M. (2003).
\newblock {\em Unified methods for censored longitudinal data and causality}.
\newblock Springer Science \& Business Media.

\bibitem[\protect\astroncite{van~der Laan and Rose}{2011}]{vanderlaan2011tmle}
van~der Laan, M.~J. and Rose, S. (2011).
\newblock {\em Targeted learning: causal inference for observational and
  experimental data}.
\newblock Springer-Verlag New York.

\bibitem[\protect\astroncite{van~der Vaart and van~der
  Laan}{2006}]{vaart2006survival}
van~der Vaart, A.~W. and van~der Laan, M.~J. (2006).
\newblock Estimating a survival distribution with current status data and
  high-dimensional covariates.
\newblock {\em The International Journal of Biostatistics}, 2(1).

\bibitem[\protect\astroncite{van~der Vaart and Wellner}{1996}]{van1996weak}
van~der Vaart, A.~W. and Wellner, J.~A. (1996).
\newblock {\em Weak Convergence and Empirical Processes}.
\newblock Springer.

\bibitem[\protect\astroncite{Westling et~al.}{2020}]{westling2020isotonic}
Westling, T., Gilbert, P., and Carone, M. (2020).
\newblock Causal isotonic regression.
\newblock {\em Journal of the Royal Statistical Society: Series B (Statistical
  Methodology)}, 82(3):719--747.

\bibitem[\protect\astroncite{Young et~al.}{2019}]{young2019representative}
Young, J.~G., Logan, R.~W., Robins, J.~M., and Hern{\'a}n, M.~A. (2019).
\newblock {Inverse Probability Weighted Estimation of Risk Under Representative
  Interventions in Observational Studies}.
\newblock {\em Journal of the American Statistical Association},
  114(526):938--947.

\bibitem[\protect\astroncite{Zhang et~al.}{2016}]{zhang2016quantitative}
Zhang, Z., Zhou, J., Cao, W., and Zhang, J. (2016).
\newblock Causal inference with a quantitative exposure.
\newblock {\em Statistical Methods in Medical Research}, 25(1):315--335.
\newblock PMID: 22729475.

\bibitem[\protect\astroncite{Zheng and van~der Laan}{2011}]{zheng2011cvtmle}
Zheng, W. and van~der Laan, M.~J. (2011).
\newblock Cross-validated targeted minimum loss based estimation.
\newblock In van~der Laan, M. and Rose, S., editors, {\em Targeted learning:
  causal inference for observational and experimental data}, chapter~27, pages
  459--473. Springer-Verlag New York, New York.

\end{thebibliography}

\end{document}